\DeclareTextFontCommand{\emph}{\slshape}
\renewcommand{\paragraph}{%
	\@startsection{paragraph}{4}%
	{\z@}{1.75ex \@plus 1ex \@minus .2ex}{-0.7em}%
	{\normalfont\normalsize\bfseries}%
}
\let\originalleft\left
\let\originalright\right
\renewcommand{\left}{\mathopen{}\mathclose\bgroup\originalleft}
\renewcommand{\right}{\aftergroup\egroup\originalright}
\pgfplotsset{compat=1.10}
\setlist[enumerate,1]{label=(\arabic*)}
\setlist[itemize,1]{label=--}
\setlist[itemize,2]{label=--}
\setlist[itemize,3]{label=--}
\setlist[itemize,4]{label=--}
\theoremstyle{definition}
\newtheorem{theorem}{Theorem}
\newtheorem{proposition}{Proposition}
\newtheorem{lemma}{Lemma}
\newtheorem{corollary}{Corollary}
\newtheorem{remark}{Remark}
\newtheorem{observation}{Observation}
\newtheorem{example}{Example}
\newtheorem{definition}{Definition}
\newtheorem*{claim}{Claim}
\newtheoremstyle{named}
	{\topsep}					
	{\topsep}					
	{}							
	{0pt}						
	{\bfseries}					
	{}							
	{5pt plus 1pt minus 1pt}	
	{\thmnote{#3}}				
\theoremstyle{named}
\newtheorem{namedthm}{}
\renewcommand{\qedsymbol}{$\blacksquare$}
\xpatchcmd{\proof}{\itshape}{\proofheadfont}{}{}
\newcommand{\proofheadfont}{\slshape}
\crefname{page}{p.}{pp.}
\crefname{equation}{equation}{equations}
\crefname{section}{section}{sections}
\crefname{subsection}{section}{sections}
\crefname{subsubsection}{section}{sections}
\crefname{appsec}{appendix}{appendices}
\crefname{supplsec}{supplemental appendix}{supplemental appendices}
\crefname{footnote}{footnote}{footnotes}
\crefname{figure}{figure}{figures}
\crefname{table}{table}{tables}
\crefname{theorem}{theorem}{theorems}
\crefname{proposition}{proposition}{propositions}
\crefname{lemma}{lemma}{lemmata}
\crefname{corollary}{corollary}{corollaries}
\crefname{remark}{remark}{remarks}
\crefname{observation}{observation}{observations}
\crefname{example}{example}{examples}
\crefname{fact}{fact}{facts}
\crefname{definition}{definition}{definitions}
\crefname{assumption}{assumption}{assumptions}
\crefname{exercise}{exercise}{exercises}
\crefname{notation}{notation}{notation}
\crefname{claim}{claim}{claims}
\crefname{conjecture}{conjecture}{conjectures}
\newcommand{\eps}{\varepsilon}
\newcommand{\dd}{\mathrm{d}}
\DeclareMathOperator*{\interior}{int}
\DeclareMathOperator*{\cl}{cl}
\DeclareMathOperator*{\co}{co}
\DeclareMathOperator*{\supp}{supp}
\newcommand{\PP}{\mathbf{P}}
\newcommand{\R}{\mathbf{R}}
\newcommand{\N}{\mathbf{N}}
\newcommand{\1}{\boldsymbol{1}}
\newcommand{\union}{\cup}
\newcommand{\intersect}{\cap}
\DeclarePairedDelimiter\abs{\lvert}{\rvert}
\newcommand*{\xslant}[2][76]{%
	\begingroup
	\sbox0{#2}%
	\pgfmathsetlengthmacro\wdslant{\the\wd0 + cos(#1)*\the\wd0}%
	\leavevmode
	\hbox to \wdslant{\hss
		\tikz[
			baseline=(X.base),
			inner sep=0pt,
			transform canvas={xslant=cos(#1)},
		] \node (X) {\usebox0};%
		\hss
		\vrule width 0pt height\ht0 depth\dp0 %
	}%
	\endgroup
}
\newcommand*{\xslantmath}{}
\def\xslantmath#1#{%
	\@xslantmath{#1}%
}
\newcommand*{\@xslantmath}[2]{%
	\ensuremath{%
		\mathpalette{\@@xslantmath{#1}}{#2}%
	}%
}
\newcommand*{\@@xslantmath}[3]{%
	\xslant#1{$#2#3\m@th$}%
}
\def\namedlabel#1#2{\begingroup
	#2%
	\def\@currentlabel{#2}%
	\phantomsection\label{#1}\endgroup
}
\let\save@mathaccent\mathaccent
\newcommand*\if@single[3]{%
	\setbox0\hbox{${\mathaccent"0362{#1}}^H$}%
	\setbox2\hbox{${\mathaccent"0362{\kern0pt#1}}^H$}%
	\ifdim\ht0=\ht2 #3\else #2\fi
	}
\newcommand*\rel@kern[1]{\kern#1\dimexpr\macc@kerna}
\newcommand*\widebar[1]{\@ifnextchar^{{\wide@bar{#1}{0}}}{\wide@bar{#1}{1}}}
\newcommand*\wide@bar[2]{\if@single{#1}{\wide@bar@{#1}{#2}{1}}{\wide@bar@{#1}{#2}{2}}}
\newcommand*\wide@bar@[3]{%
	\begingroup
	\def\mathaccent##1##2{%
	  \let\mathaccent\save@mathaccent
	  \if#32 \let\macc@nucleus\first@char \fi
	  \setbox\z@\hbox{$\macc@style{\macc@nucleus}_{}$}%
	  \setbox\tw@\hbox{$\macc@style{\macc@nucleus}{}_{}$}%
	  \dimen@\wd\tw@
	  \advance\dimen@-\wd\z@
	  \divide\dimen@ 3
	  \@tempdima\wd\tw@
	  \advance\@tempdima-\scriptspace
	  \divide\@tempdima 10
	  \advance\dimen@-\@tempdima
	  \ifdim\dimen@>\z@ \dimen@0pt\fi
	  \rel@kern{0.6}\kern-\dimen@
	  \if#31
	    \overline{\rel@kern{-0.6}\kern\dimen@\macc@nucleus\rel@kern{0.4}\kern\dimen@}%
	    \advance\dimen@0.4\dimexpr\macc@kerna
	    \let\final@kern#2%
	    \ifdim\dimen@<\z@ \let\final@kern1\fi
	    \if\final@kern1 \kern-\dimen@\fi
	  \else
	    \overline{\rel@kern{-0.6}\kern\dimen@#1}%
	  \fi
	}%
	\macc@depth\@ne
	\let\math@bgroup\@empty \let\math@egroup\macc@set@skewchar
	\mathsurround\z@ \frozen@everymath{\mathgroup\macc@group\relax}%
	\macc@set@skewchar\relax
	\let\mathaccentV\macc@nested@a
	\if#31
	  \macc@nested@a\relax111{#1}%
	\else
	  \def\gobble@till@marker##1\endmarker{}%
	  \futurelet\first@char\gobble@till@marker#1\endmarker
	  \ifcat\noexpand\first@char A\else
	    \def\first@char{}%
	  \fi
	  \macc@nested@a\relax111{\first@char}%
	\fi
	\endgroup
}
\DeclareMathOperator*{\cotwo}{\underline{co}}
\DeclareMathOperator*{\inftwo}{\underline{inf}}
\title{\scshape Outside options and risk attitude%
\thanks{We are grateful for comments from Eddie Dekel, Prajit Dutta, Jeff Ely, Alkis Georgiadis-Harris, En Hua Hu, Ravi Jagadeesan, Ian Jewitt, Paul Klemperer, Krittanai Laohakunakorn, Elliot Lipnowski, Sujoy Mukerji, Aidan Smith, Omer Tamuz, Juuso Toikka, Leeat Yariv, and audiences at Helsinki GSE, the 2025 Transatlantic Theory Workshop, the 2025 Southeast Theory Festival, and the 2025 City--QMUL Workshop in Economic Theory. Oscar Calvert and Augustus Smith provided excellent research assistance. Curello acknowledges support from the German Research Foundation (DFG) through CRC TR 224 (Project B02).}}
\author{%
Gregorio Curello \\
Mannheim
\and
Ludvig Sinander \\
Oxford
\and
Mark Whitmeyer \\
Arizona State}
\date{18 September 2025}
\begin{document}

\maketitle

\begin{abstract}
	We uncover a close link between outside options and risk attitude: when a decision-maker gains access to an outside option, her behaviour becomes less risk-averse, and conversely, any observed decrease of risk-aversion can be explained by an outside option having been made available. We characterise the comparative statics of risk-aversion, delineating how effective risk attitude (i.e. actual choice among risky prospects) varies with the outside option and with the decision-maker's `true' risk attitude. We prove that outside options are special: among transformations of a decision problem, those that amount to adding an outside option are the only ones that always reduce risk-aversion.
\end{abstract}

\section{Introduction}
\label{sec:intro}

Many important economic decisions amount to choosing between risky prospects, usually modelled as lotteries. Examples include project choice in firms, household financial decisions such as selecting an investment portfolio or choosing between health-insurance plans, and life-cycle decisions such as choosing one's occupation, spouse and location. A decision-maker's \emph{(effective) risk attitude} summarises how she chooses in such situations.

In most economic contexts, the chosen prospect is not all there is. In project choice, there is (usually) limited liability: if the chosen project fails spectacularly, then the firm can declare bankruptcy, and responsible managers can leave. Households are likewise protected by limited liability (personal bankruptcy), and can fall back on the public option when faced with an uninsured health shock. An individual dissatisfied with her occupation or marriage can retrain or divorce, and if she chose a location prone to natural disaster (such as Florida) then she may expect a bailout (e.g. from FEMA). All of these are examples of \emph{outside options,} which can be exercised ex post, after the outcome of the risky prospect becomes known.

In this paper, we explore how outside options shape effective risk attitude. In our model, a decision-maker is characterised by her `true' risk attitude, assumed to be expected-utility, and by her outside option. The outside option can be random, and may be unavailable with some probability. The decision-maker decides ex post whether to exercise the outside option. Thus when considering any given alternative (`inside option'), she recognises that she will consume that alternative only if the outside option turns out to be worse. Our question is how this affects her choices among risky prospects.

Our inquiry parallels the classic question of how background risk shapes effective risk attitude, as in e.g. \textcite{PomattoStrackTamuz2020,MuPomattoStrackTamuz2024}.%
	\footnote{`Background risk' means randomness other than that embodied in the risky prospects themselves; for example, a household choosing among portfolios may also face income risk. The classics of this literature include \textcite{Ross1981,KihlstromRomerWilliams1981,PrattZeckhauser1987,Kimball1993,EeckhoudtGollierSchlesinger1996,GollierPratt1996}.}
A key take-away from this literature is that `it depends': whether one risky prospect remains preferred to another after background risk is introduced depends on the details of the two prospects and the distribution of the background risk.

We show, by contrast, that the impact of outside options on risk attitude is unambiguous and clean. In particular, our main theorem states that granting a decision-maker access to an outside option always makes her less risk-averse (in the standard \textcite{Yaari1969} sense), and conversely that any decrease of risk-aversion can be rationalised as arising from an outside option having been made available. This can be viewed as an identification result: by observing a decision-maker's choices (her effective risk attitude), we obtain a non-parametric lower bound on her `true' risk-aversion, and can learn nothing further about her `true' risk attitude.

Our second theorem delves deeper, characterising how effective risk attitude varies with the `true' risk attitude and the outside option. When the outside option is held fixed and the `true' risk attitude shifts, we show that the effective risk attitude becomes less risk-averse if and only if the `true' risk attitude does. When the `true' risk attitude is held fixed and the outside option shifts, the effective risk attitude becomes less risk-averse if and only if the distribution of the value of the outside option improves in the reverse hazard rate order, a (standard) notion that is intermediate between first-order stochastic dominance and the likelihood-ratio order.

Finally, we argue that the tight link identified in our main theorem between reducing risk-aversion and adding an outside option is special. For example, introducing background risk does not always reduce risk-aversion.%
	\footnote{\label{footnote:special_background}This is known: see \textcite{EeckhoudtGollierSchlesinger1996,GollierPratt1996}.}
We prove a general theorem to this effect: among transformations of a decision problem, those that amount to adding an outside option are the \emph{only} ones that always reduce risk-aversion.

\subsection{Related literature}
\label{sec:intro:lit}

One half of our main theorem formalises the intuitive idea that having access to a fallback incentivises greater risk-taking. This is an old idea: for example, Adam Smith (\citeyear{Smith1776}) argued that limited liability has this effect.%
	\footnote{`In a private copartnery, each partner is bound for the debts contracted by the company to the whole extent of his fortune. In a joint stock company, on the contrary, each partner is bound only to the extent of his share. This total exemption from trouble and from risk, beyond a limited sum, encourages many people to become adventurers in joint stock companies, who would, upon no account, hazard their fortunes in any private copartnery.' (\cite{Smith1976}, pp.~740--1.)}
(See also e.g. \textcite[][section~4.1]{JensenMeckling1976}, \textcite{Golbe1981,Golbe1988,GollierKoehlRochet1997}.) This idea has important practical implications, e.g. for financial stability.%
	\footnote{Several financial crises have arisen in part from bankers placing risky bets in the knowledge that they will not be held personally liable if large losses result \parencite[see e.g.][]{FCIC2011}.}

The other half of our main theorem, asserting that any decrease of risk-aversion can be generated by an outside option having been made available, has no close parallels to our knowledge. The same applies to our second and third theorems.

We view our analysis as contributing to the broader agenda of understanding how \emph{economic} forces influence decision-makers' effective risk attitudes. Other literatures in this vein consider the effect of background risk,%
	\footnote{E.g. \textcite{Ross1981,KihlstromRomerWilliams1981,PrattZeckhauser1987,Kimball1993,EeckhoudtGollierSchlesinger1996,GollierPratt1996,PomattoStrackTamuz2020,MuPomattoStrackTamuz2024}.}
contracts,%
	\footnote{Including employment contracts \parencite[e.g.][]{Wilson1969,Ross1974,AmihudLev1981,Lambert1986,HirshleiferSuh1992,Diamond1998,GaricanoRayo2016,BarronGeorgiadisSwinkels2020} and financing contracts \parencite[e.g.][]{GalaiMasulis1976,JensenMeckling1976,StiglitzWeiss1981,Green1984,Hebert2018}.}
having an audience,%
	\footnote{For example, career concerns (e.g. Holmström, \citeyear{Holmstrom1982}/\citeyear{Holmstrom1999}, \cite{HirshleiferThakor1992,Hermalin1993,Chen2015}) or disclosure \parencite{BenporathDekelLipman2018}.}
(in)flexibility,%
	\footnote{E.g. Drèze and Modigliani (\citeyear{DrezeModigliani1966}/\citeyear{DrezeModigliani1972}), \textcite{Mossin1969,SpenceZeckhauser1972,Machina1982,Machina1984,BodieMertonSamuelson1992,Gollier2005,ChettySzeidl2007,PostlewaiteSamuelsonSilverman2008}.}
and competition.%
	\footnote{For example, R\&D races \parencite[e.g.][]{DasguptaStiglitz1980,KletteDemeza1986,DasguptaMaskin1987,Cabral2003,AndersonCabral2007}, competition for status \parencite[e.g.][]{Robson1992,Rosen1997,BeckerMurphyWerning2005,RayRobson2012,Hopkins2018}, contests more generally \parencite[e.g.][]{Hvide2002,HvideKristiansen2003,Taylor2003,KrakelSliwka2004,SeelStrack2013,SeelStrack2016,NutzZhang2022,FangKeKubitzNoePageLiu2025}, and competition between banks \parencite[see the surveys by][]{Beck2008,Carletti2008,Vives2016,BergerKlapperTurkariss2017}.}
A somewhat analogous literature studies how natural selection shapes risk attitude.%
	\footnote{E.g. \textcite{Robson1996jet,Robson1996geb,Robson2001jpe,DekelScotchmer1999,Netzer2009,RobattoSzentes2017,Georgiadisharris2017}.}

An alternative way of viewing our main result is as a new characterisation (in terms of outside options) of `less risk-averse than' for expected-utility preferences. From this perspective, the theorem continues the enterprise of Pratt's (\citeyear{Pratt1964}) theorem, much like e.g. \textcite[][Theorem~4]{MaccheroniMarinacciWangWu2025}.

\subsection{Roadmap}
\label{sec:intro:roadmap}

We outline the setup and relevant background in the next section, then describe the outside-option model in \cref{sec:oo}. In \cref{sec:fixedF}, we prove a basic result (\Cref{proposition:fixedF}) that describes how outside options shape risk attitude. We then characterise the outside-option model's identification properties (\cref{sec:ident}, particularly \Cref{theorem:ident}) and comparative-statics properties (\cref{sec:mcs}, \Cref{theorem:mcs}). Finally (\cref{sec:special}, \Cref{theorem:special}), we show that transformations other than adding an outside option do not generally reduce risk-aversion. All proofs omitted from the main text may be found in the appendix.

\section{Setup and background}
\label{sec:setup}

There is a non-empty set $X$ of alternatives, with generic elements $x,y,z,w \in X$. We consider simple lotteries, meaning (probability mass functions of) finitely supported probability distributions over $X$. Formally, a \emph{simple lottery} is a function $p : X \to [0,1]$ such that $\supp(p) \coloneqq \{ x \in X : p(x) > 0 \}$ is finite and $\sum_{x \in \supp(p)} p(x) = 1$. We write $\Delta^0(X)$ for the set of all simple lotteries, with generic elements $p,q,r \in \Delta^0(X)$. By the usual abuse, the lottery in $\Delta^0(X)$ that is degenerate at $x \in X$ is denoted simply `$x$'.

A \emph{preference} is a complete and transitive binary relation on $\Delta^0(X)$. A decision-maker's preference is, at least in principle, an empirical object: it can be recovered from (sufficiently rich) choice data.

A preference $\succeq$ is called \emph{expected-utility} if and only if there exists a function $u : X \to \R$ such that for any simple lotteries $p,q \in \Delta^0(X)$, $p \succeq q$ if and only if $\int u \dd p \geq \int u \dd q$. (Here `$\int u \dd p$' is standard shorthand for $\sum_{x \in \supp(p)} p(x) u(x)$.) The function $u : X \to \R$ is called a \emph{risk attitude} (or `vNM utility function'), and is said to \emph{represent} $\succeq$. Recall:

\begin{namedthm}[von Neumann--Morgenstern (\citeyear{VonneumannMorgenstern1947}) theorem.]
	\label{theorem:vNM}
	A preference is expected-utility if and only if it is continuous and satisfies the independence axiom. Two risk attitudes $u,v : X \to \R$ represent the same expected-utility preference if and only if there exist $\alpha>0$ and $\beta \in \R$ such that $\alpha u + \beta = v$.
\end{namedthm}

We shall make extensive use of Lipschitz continuity, absolute continuity, and the (Lebesgue) fundamental theorem of calculus.%
	\footnote{See e.g. \textcite[][pp.~105, 106 and 108]{Folland1999}.}
Our notation is standard, except for two symbols `$\cotwo$' and `$\inftwo$' that we define below.

\begin{namedthm}[Standard notation.]
	\label{namedthm:notation}
	For any set $A \subseteq \R$, we write $\co(A)$ for its convex hull, $\cl(A)$ for its closure (in $\R$), and $\interior(A)$ for its interior (in $\R$). We write $[-\infty,+\infty) \coloneqq \R \cup \{-\infty\}$, $(-\infty,+\infty] \coloneqq \R \cup \{+\infty\}$ and $[-\infty,+\infty] \coloneqq \R \cup \{-\infty,+\infty\}$, and for each $a \in [-\infty,+\infty]$, $(a,a] \coloneqq \varnothing \eqqcolon [a,a)$ and $(a,a) \coloneqq (a,a] \cap [a,a) = \varnothing$. For non-empty sets $A$, $B$ and $C$ and functions $f : A \to B$ and $g : B \to C$, the composition of $f$ and $g$ is denoted $g \circ f$, the image of $f$ is denoted $f(A) \subseteq B$, and if $f$ is injective, then its inverse (a map $f(A) \to A$) is denoted $f^{-1}$. For any sets $A \subseteq B \neq \varnothing$, $\1_A : B \to \{0,1\}$ denotes the function defined by $\1_A(a) \coloneqq 0$ if $a \notin A$ and $\1_A(a) \coloneqq 1$ if $a \in A$, and `the identity (on $A$)' means the function $f : A \to A$ given by $f(a) \coloneqq a$ for each $a \in A$.
\end{namedthm}

\begin{definition}
	\label{definition:cotwo}
	For any set $A \subseteq \R$, define $\cotwo(A) \subseteq A$ and $\inftwo A \in \cl(A)$ by
	\begin{align*}
		&\begin{aligned}
			\cotwo(A) &\coloneqq \co( A \setminus \{\inf A\} ) \cup \{\inf A\}
			\\
			\inftwo A &\coloneqq \inf ( A \setminus \{\inf A\} )
		\end{aligned}
		&&\Biggr\} \quad \text{if $\inf A < \inf( A \setminus \{\inf A\} ) \notin A$}
		\\
		&\begin{aligned}
			\cotwo(A) &\coloneqq \co(A)
			\\
			\inftwo A &\coloneqq \inf A
		\end{aligned}
		&&\Biggr\} \quad \text{otherwise.}
	\end{align*}
\end{definition}

Evidently $A \subseteq \cotwo(A) \subseteq \co(A)$ and $\co(A) \setminus \cotwo(A) = (\inf A,\inftwo A]$. If $A$ is convex, then $A = \cotwo(A) = \co(A)$.

\subsection{Comparative risk attitude}
\label{sec:setup:lra}

Recall Yaari's (\citeyear{Yaari1969}) definition of comparative risk-aversion: for any two preferences $\succeq$ and $\succeq'$, $\succeq$ is called \emph{less risk-averse than} $\succeq'$ if and only if for each alternative $x \in X$ and each simple lottery $p \in \Delta^0(X)$, $x \succeq \mathrel{(\succ)} p$ implies $x \succeq' \mathrel{(\succ')} p$. Further recall Pratt's (\citeyear{Pratt1964}) theorem, which characterises `less risk-averse than' for expected-utility preferences:

\begin{namedthm}[Pratt's theorem (part~1).]
	\label{theorem:pratt}
	For a non-empty set $X$ and functions $u,v : X \to \R$, the following are equivalent:

	\begin{enumerate}[label=(\Alph*)]
	
		\item \label{item:pratt:lra} $u$ is less risk-averse than $v$, i.e. for any alternative $x \in X$ and simple lottery $p \in \Delta^0(X)$, $u(x) \geq \mathrel{(>)} \int u \dd p$ implies $v(x) \geq \mathrel{(>)} \int v \dd p$.

		\item \label{item:pratt:trans} There exists an increasing convex function $\phi : \co(v(X)) \to \R$ that is strictly increasing on $v(X)$ and satisfies $u = \phi \circ v$.

		\item \label{item:pratt:curv} The following two properties hold:
		\begin{enumerate}[label=(\Roman*),topsep=0em]
		
			\item \label{item:pratt:curv:ordequiv} For any $x,y \in X$, $u(x) \geq \mathrel{(>)} u(y)$ implies $v(x) \geq \mathrel{(>)} v(y)$.

			\item \label{item:pratt:curv:compress} For any alternatives $x,y,z \in X$, if $u(x) < u(y) < u(z)$, then
			\begin{equation*}
				\frac{u(z)-u(y)}{u(y)-u(x)}
				\geq \frac{v(z)-v(y)}{v(y)-v(x)} .
			\end{equation*}
		
		\end{enumerate}
	
	\end{enumerate}
\end{namedthm}

\begin{remark}
	\label{remark:generalX}
	In the literature, comparative risk-aversion and its characterisation are almost only ever considered in the special case in which alternatives are monetary prizes: $X \subseteq \R$. However, by inspection, properties \ref{item:pratt:lra}--\ref{item:pratt:curv} are equally meaningful whatever the nature of the alternatives, and in fact (as asserted above) they are equivalent even outside of the monetary-prizes case. This observation is not (entirely) new,%
		\footnote{For example, \textcite{WakkerPetersVanriel1986} make essentially this point, and prove the equivalence of a variant of \ref{item:pratt:lra} with a variant of \ref{item:pratt:trans} (their Theorem~3.1).}
	but it is not widely known, and we have not found a complete proof in the literature. We therefore include a proof of \hyperref[theorem:pratt]{Pratt's theorem (part~1)} in \cref{app:pratt_pf}.
\end{remark}

Property~\ref{item:pratt:trans} asserts strict monotonicity of $\phi$ only on $v(X)$, but $\phi$ can in fact be chosen to be strictly increasing on $\cotwo(v(X))$.%
	\footnote{Recall that `$\cotwo$' was defined on \cpageref{definition:cotwo} and satisfies $v(X) \subseteq \cotwo(v(X)) \subseteq \co(v(X))$.}
It need not be possible to choose $\phi$ to be strictly increasing on its full domain $\co(v(X))$, however, nor need it be possible to choose $\phi$ to be continuous. These claims are (fully stated and) proved in \cref{app:pratt_phis}.

Further recall that if the alternatives are monetary prizes (i.e. $X \subseteq \R$ with strictly increasing utility $u : X \to \R$), then `less risk-averse than' is characterised by pointwise inequality of Arrow--Pratt indices:

\begin{namedthm}[Pratt's theorem (part~2).]
	\label{theorem:pratt_diff}
	For a non-empty open convex subset $X$ of $\R$ and twice continuously differentiable functions $u,v : X \to \R$ satisfying $u' > 0 < v'$, \ref{item:pratt:lra}--\ref{item:pratt:curv} hold if and only if $u''/u' \geq v''/v'$.
\end{namedthm}

For completeness, we include a proof in \cref{app:pratt_diff_pf}. A third part of Pratt's theorem, which we will not use, asserts that for a non-empty convex subset $X$ of $\R$ and continuous strictly increasing $u,v : X \to \R$, \ref{item:pratt:lra}--\ref{item:pratt:curv} hold if and only if for every simple lottery $p \in \Delta^0(X)$, $u^{-1}\left( \int u \dd p \right) \geq v^{-1}\left( \int v \dd p \right)$.%
	\footnote{The equivalence of this condition with \ref{item:pratt:trans} was shown already by \textcite[][Theorem~92]{HardyLittlewoodPolya1934}, though without the expected-utility interpretation.}

\subsection{Extended CDFs}
\label{sec:setup:ecdf}

We shall work with random variables taking values in $[-\infty,+\infty) \coloneqq \R \cup \{-\infty\}$. Distributions of $[-\infty,+\infty)$-valued random variables may be described by their cumulative distribution functions, as follows.

\begin{definition}
	\label{definition:ecdf}
	A function $F : \R \to [0,1]$ is called an \emph{extended CDF} if and only if there exists a $[-\infty,+\infty)$-valued random variable $K$ such that $F(k) = \PP( K \leq k )$ for every $k \in \R$.
\end{definition}

\begin{observation}
	\label{observation:ecdf}
	A function $F : \R \to [0,1]$ is an extended CDF if and only if it is increasing and right-continuous and satisfies $\lim_{k \nearrow +\infty} F(k)=1$.%
		\footnote{The `only if' part is trivial, and the `if' part holds since for any $F : [0,1] \to \R$ with these properties, letting $U$ be a random variable that is distributed uniformly on $[0,1)$, we may consider the $[-\infty,+\infty)$-valued random variable $K$ defined by $K \coloneqq \inf\{ k \in \R : F(k) \geq U \}$.}
\end{observation}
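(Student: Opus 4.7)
The plan is to prove the two directions separately, noting that the ``only if'' part is essentially immediate, while the ``if'' part can be handled by the classical quantile construction sketched in the footnote.

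For the ``only if'' direction, suppose $F(k) = \PP(K \leq k)$ for some $[-\infty,+\infty)$-valued random variable $K$. Then monotonicity of $F$ follows from monotonicity of probability. Right-continuity at $k \in \R$ follows from continuity of probability from above applied to the decreasing sequence of events $\{K \leq k + 1/n\}$, whose intersection is $\{K \leq k\}$. Finally, $\lim_{k \nearrow +\infty} F(k) = \PP(K < +\infty) = 1$ by continuity from below applied to $\{K \leq n\}$, together with the fact that $K$ does not take the value $+\infty$.

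For the ``if'' direction, let $U$ be uniform on $[0,1)$ on some probability space, and define the $[-\infty,+\infty]$-valued random variable
\begin{equation*}
	K \coloneqq \inf\{ k \in \R : F(k) \geq U \} .
\end{equation*}
First I would check that $K < +\infty$ almost surely: since $U < 1$ a.s.\ and $\lim_{k \nearrow +\infty} F(k) = 1$, for each $\omega$ with $U(\omega) < 1$ there exists $k \in \R$ with $F(k) \geq U(\omega)$, so the set on the right-hand side is non-empty and its infimum lies in $[-\infty,+\infty)$. Hence $K$ is a $[-\infty,+\infty)$-valued random variable.

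Next I would establish the key equivalence $K \leq k \iff U \leq F(k)$ for every $k \in \R$. The ``$\Leftarrow$'' direction is immediate from the definition of $K$ as an infimum. For ``$\Rightarrow$,'' suppose $K \leq k$. If $K < k$, then by definition of the infimum I can pick $k' \in [K, k)$ with $F(k') \geq U$, and then by monotonicity $F(k) \geq F(k') \geq U$. If $K = k$, then for each $n$ I can pick $k_n \in [k, k + 1/n)$ with $F(k_n) \geq U$, and right-continuity at $k$ yields $F(k) = \lim_n F(k_n) \geq U$. The case $K = -\infty$ reduces to the $K < k$ case, since then $S$ is unbounded below. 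Finally, from the equivalence I conclude $\PP(K \leq k) = \PP(U \leq F(k)) = F(k)$ for every $k \in \R$, as desired. No real obstacle arises; the only subtlety is the appeal to right-continuity to handle the equality case $K = k$.
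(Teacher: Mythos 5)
Your proof is correct and takes exactly the approach the paper sketches in its footnote: the `only if' direction via continuity of probability measures, and the `if' direction via the inverse-transform (quantile) construction $K \coloneqq \inf\{k \in \R : F(k) \geq U\}$ with $U$ uniform on $[0,1)$, whose key step is the equivalence $\{K \leq k\} = \{U \leq F(k)\}$. The only blemish is the unintroduced symbol $S$ near the end (you mean the set $\{k \in \R : F(k) \geq U\}$); otherwise the details you fill in are exactly the right ones.
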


For an extended CDF $F$, the quantity $\alpha \coloneqq \lim_{k \searrow -\infty} F(k)$ is the probability of $-\infty$ (so $F$ is a CDF if and only if $\alpha=0$).

The expectation $\int g \dd F \equiv \int g(k) F(\dd k)$ of a bounded-below Lebesgue-measurable function $g : [-\infty,+\infty) \to \R$ with respect to an extended CDF $F$ is defined in the natural way, namely
\begin{equation*}
	\int g \dd F \coloneqq \left( \lim_{k \searrow -\infty} F(k) \right) g(-\infty) + \int_\R g \dd F ,
\end{equation*}
where the latter integral is understood in the usual Lebesgue--Stieltjes sense.%
	\footnote{If $\alpha \coloneqq \lim_{k \searrow -\infty} F(k)<1$, then the latter term equals $(1-\alpha) \times \frac{1}{1-\alpha} \int_\R g \dd F$: the probability of a finite draw times the expectation of $g$ conditional on a finite draw.}
Evidently the integral $\int g \dd F$ is finite if and only if $\int_\R g \dd F$ is.

Given a set $S \subseteq [-\infty,+\infty)$ such that $S \cap \R$ is Borel, we say that an extended CDF $F$ is \emph{concentrated} on $S$ if and only if $\int_S \dd F = 1$.

\section{The outside-option model}
\label{sec:oo}

We consider a decision-maker who has risk attitude $v : X \to \R$ and who has access to an outside option whose value (in $v$ units) is drawn from an extended CDF $F$. The possibility that the outside option has `value $-\infty$' captures the possibility that the outside option may be unavailable.

The decision-maker's valuation of any given alternative $x \in X$ is then
\begin{equation*}
	\int \max\{v(x),k\} F(\dd k)
	= F(v(x)) v(x)
	+ \int_{(v(x),+\infty)} k F(\dd k ) ,
\end{equation*}
reflecting the idea that the decision-maker will observe the realisation $k$ of (the value of) the outside option before choosing whether to exercise it, and thus exercises it if and only if $k$ exceeds $v(x)$. (The above integrals are finite if and only if $\int_{(0,+\infty)} k F(\dd k) < +\infty$.)

The extended CDF $F$ is a reduced-form way of capturing the implications for utility/choice of an actual physical outside option being drawn from a set $Y \cup \{\varnothing\}$, where `$\varnothing$' means `no outside option available'. Formally, $F$ arises as
\begin{equation}
	F(k) \coloneqq \mu\left( \left\{ y \in Y \cup \{\varnothing\} : w(y) \leq k \right\} \right) \quad \text{for every $k \in \R$,}
	\label{eq:F_from_mu}
\end{equation}
where $\mu$ is a probability measure on (some $\sigma$-algebra on) $Y \cup \{\varnothing\}$, $w : Y \to \R$ specifies the payoff (in $v$ units) of each $y \in Y$, and $w(\varnothing) \coloneqq -\infty$.

\begin{remark}
	\label{remark:oo_X}
	In some applications, the set of possible outside options is $Y=X$ (so $w=v$). In this case, provided $v(X)$ is Borel, \eqref{eq:F_from_mu} implies that $F$ is concentrated on $v(X) \cup \{-\infty\}$. Conversely, if $v(X)$ is Borel, then any extended CDF $F$ that is concentrated on $v(X) \cup \{-\infty\}$ arises via \eqref{eq:F_from_mu} from some probability measure $\mu$ on (some $\sigma$-algebra on) $X \cup \{\varnothing\}$.%
		\footnote{Assume that $v$ is one-to-one; this is without loss since it amounts to replacing $X$ by its quotient with respect to payoff equivalence. Topologise $X$ by embedding it in $\R$ via $x \mapsto v(x)$. Give $X$ the Borel $\sigma$-algebra $\mathcal{B}$, let $\Sigma$ be the $\sigma$-algebra on $X \cup \{\varnothing\}$ generated by $\mathcal{B} \union \{\{\varnothing\}\}$, and define $\mu : \Sigma \to [0,1]$ by $\mu(S) \coloneqq \int_{v(S)} \dd F$ for each $S \in \Sigma$.}
\end{remark}

\subsection{Formal definition and relation to expected utility}
\label{sec:oo:decision}

In terms of behaviour (i.e. choice among / preferences over lotteries), the outside-option model is formally defined as follows.

\begin{definition}
	\label{definition:oo}
	Given a preference $\succeq$, a function $v : X \to \R$ and an extended CDF $F$, we say that $(v,F)$ is an \emph{outside-option representation} of $\succeq$ if and only if $\int_{(0,+\infty)} k F(\dd k) < +\infty$ and for all simple lotteries $p,q \in \Delta^0(X)$, $p \succeq q$ if and only if
	\begin{equation*}
		\int \left( \int \max\{v(x),k\} F(\dd k) \right) p(\dd x)
		\geq \int \left( \int \max\{v(x),k\} F(\dd k) \right) q(\dd x) .
	\end{equation*}
\end{definition}

Note that the inside option (drawn from a lottery $p$) is assumed to be statistically independent of the outside option (drawn from $F$).

The empirical content of the outside-option model is the same as that of expected utility:

\begin{observation}
	\label{observation:risk_att}
	Given a preference $\succeq$, a function $v : X \to \R$ and an extended CDF $F$, if $(v,F)$ is an outside-option representation of $\succeq$, then for any $\alpha>0$ and $\beta \in \R$, $\succeq$ is expected-utility with risk attitude $u : X \to \R$ satisfying
	\begin{equation*}
		\alpha u(x) + \beta = \int \max\{v(x),k\} F(\dd k)
		\quad \text{for every $x \in X$.}
	\end{equation*}
\end{observation}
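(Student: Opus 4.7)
The plan is to show directly that the integral $U(x) \coloneqq \int \max\{v(x),k\} F(\dd k)$ defines a risk attitude representing $\succeq$, and then to transfer this to $u$ via the affine change of units.

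First, I would verify that $U : X \to \R$ is well-defined, i.e. finite-valued. Fix $x \in X$. Since $v(x) \in \R$, for every $k \in [-\infty,+\infty)$ we have $|\max\{v(x),k\}| \le |v(x)| + \max\{k,0\}$, and the integral of the right-hand side against $F$ is at most $|v(x)| + \int_{(0,+\infty)} k F(\dd k)$, which is finite by hypothesis. Hence $U(x) \in \R$ for each $x \in X$, and by Fubini's theorem for bounded-below integrands,
\begin{equation*}
	\int \left( \int \max\{v(x),k\} F(\dd k) \right) p(\dd x)
	= \sum_{x \in \supp(p)} p(x) U(x)
	= \int U \dd p
\end{equation*}
for every simple lottery $p \in \Delta^0(X)$ (the sum is finite since $\supp(p)$ is).

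Second, rewriting \Cref{definition:oo} using this identity, $(v,F)$ being an outside-option representation of $\succeq$ says precisely that $p \succeq q$ if and only if $\int U \dd p \geq \int U \dd q$ for all $p,q \in \Delta^0(X)$. This is exactly the statement that $\succeq$ is expected-utility with risk attitude $U$.

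Third, fix $\alpha > 0$ and $\beta \in \R$, and let $u \coloneqq (U-\beta)/\alpha$, so that $\alpha u + \beta = U$ pointwise on $X$. Linearity of the integral gives $\int u \dd p = (\int U \dd p - \beta)/\alpha$ for every $p \in \Delta^0(X)$, so $\int u \dd p \ge \int u \dd q$ holds if and only if $\int U \dd p \ge \int U \dd q$ does, i.e. if and only if $p \succeq q$. Thus $u$ is also a risk attitude for $\succeq$, and by construction it satisfies the claimed identity.

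There is no real obstacle here: the statement amounts to unpacking \Cref{definition:oo} and invoking the standard positive-affine invariance of expected-utility representations (the second half of the \hyperref[theorem:vNM]{von Neumann--Morgenstern theorem}). The only point that requires a moment's care is the finiteness of $U$, which is where the integrability hypothesis $\int_{(0,+\infty)} k F(\dd k) < +\infty$ in \Cref{definition:oo} is used.
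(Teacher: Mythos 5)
Your proof is correct, and it is the argument the paper implicitly relies on: the Observation is stated without proof, being regarded as an immediate consequence of unpacking \Cref{definition:oo} together with the affine invariance in the \hyperref[theorem:vNM]{von Neumann--Morgenstern theorem}. Your only substantive addition -- checking that $U$ is finite-valued via the bound $\abs*{\max\{v(x),k\}} \leq \abs*{v(x)} + \max\{k,0\}$ and the hypothesis $\int_{(0,+\infty)} k F(\dd k) < +\infty$ -- is exactly the point worth making explicit, and it is right.
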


The function $u$ is the decision-maker's \emph{effective} risk attitude, describing how she actually chooses among lotteries. By the \hyperref[theorem:vNM]{von Neumann--Morgenstern theorem}, the risk attitude $u$ is recoverable (up to positive affine transformations $\ell \mapsto \alpha \ell + \beta$) from the choice data $\succeq$. This contrasts with the parameter $v$, which we interpret as the decision-maker's `true' risk attitude.

\begin{corollary}
	\label{corollary:axiom}
	A preference $\succeq$ admits an outside-option representation if and only if it is expected-utility.
\end{corollary}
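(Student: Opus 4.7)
The corollary has an easy two-way implication, and I would handle each direction separately.

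For the "only if" direction, suppose $\succeq$ admits an outside-option representation $(v,F)$. Then \Cref{observation:risk_att} (applied with $\alpha = 1$ and $\beta = 0$) immediately delivers a risk attitude $u : X \to \R$ representing $\succeq$ in expected-utility form, namely $u(x) \coloneqq \int \max\{v(x),k\} F(\dd k)$. This integral is finite for every $x$ precisely because the definition of an outside-option representation imposes $\int_{(0,+\infty)} k F(\dd k) < +\infty$ (together with the trivial bound $\max\{v(x),k\} \leq v(x) + \max\{0,k\}$).

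For the "if" direction, suppose $\succeq$ is expected-utility with risk attitude $u : X \to \R$, guaranteed to exist by the \hyperref[theorem:vNM]{von Neumann--Morgenstern theorem}. The plan is to exhibit a trivial outside-option representation in which the outside option is never available. Concretely, take $v \coloneqq u$ and let $F : \R \to [0,1]$ be the extended CDF given by $F(k) \coloneqq 1$ for every $k \in \R$, which corresponds to a $[-\infty,+\infty)$-valued random variable that equals $-\infty$ almost surely. Then $F$ satisfies the conditions of \Cref{observation:ecdf} and moreover $\int_{(0,+\infty)} k F(\dd k) = 0 < +\infty$, so the integrability requirement in \Cref{definition:oo} is met.

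It remains to verify the representation identity. For each $x \in X$,
\begin{equation*}
	\int \max\{v(x),k\} F(\dd k) = \max\{v(x),-\infty\} = v(x) = u(x),
\end{equation*}
so for any simple lotteries $p,q \in \Delta^0(X)$, the outside-option criterion in \Cref{definition:oo} reduces to $\int u \dd p \geq \int u \dd q$, which is exactly $p \succeq q$ by the expected-utility form of $\succeq$. Hence $(v,F)$ is an outside-option representation of $\succeq$, completing the proof. The only step that could cause any friction is checking that the degenerate-at-$-\infty$ choice genuinely satisfies the definitional requirements on $F$, but this is immediate from the conventions laid out in \cref{sec:setup:ecdf}.
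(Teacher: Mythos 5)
Your proof is correct and takes essentially the same route as the paper: the "only if" direction is immediate from \Cref{observation:risk_att}, and the "if" direction uses the constant extended CDF $F \equiv 1$ (the outside option being $-\infty$ almost surely), which is exactly the paper's representation $(u,1)$. Your write-up just spells out the verification that the paper leaves implicit.
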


\begin{proof}
	If $\succeq$ admits an outside-option representation, then it is expected-utility by \Cref{observation:risk_att}. Conversely, if $\succeq$ is expected-utility with risk attitude $u : X \to \R$, then $(u,1)$ is an outside-option representation of $\succeq$.
\end{proof}

\section{How an outside option shapes risk attitude}
\label{sec:fixedF}

In this section, we give a simple but powerful characterisation of how the presence of an outside option influences risk attitude. Our main results in the subsequent sections are consequences of this fundamental proposition.

To be precise, we seek to answer the following question: given a preference with outside-option representation $(v,F)$ and effective risk attitude $u$ (as per \Cref{observation:risk_att}), how do the properties of the outside-option distribution $F$ shape how the effective risk attitude $u$ differs from the `true' risk attitude $v$? The following result provides the answer.

\begin{proposition}
	\label{proposition:fixedF}
	For a non-empty set $X$, functions $u,v : X \to \R$, and an extended CDF $F$, the following are equivalent:

	\begin{enumerate}[label=(\roman*)]
	
		\item \label{item:fixedF:rep} There exist $\alpha>0$ and $\beta \in \R$ such that
		\begin{equation*}
			\alpha u(x) + \beta
			= \int \max\{v(x),k\} F(\dd k) 
			\quad \text{for every $x \in X$.}
		\end{equation*}

		\item \label{item:fixedF:trans} $\int_{(0,+\infty)} k F(\dd k) < +\infty$, and there exists a $\lambda>0$ and an absolutely continuous $\phi : \co(v(X)) \to \R$ such that $\phi' = \lambda F$ a.e. and $u = \phi \circ v$.
	
	\end{enumerate}
\end{proposition}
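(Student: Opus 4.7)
The whole proof hinges on studying the ``value-of-the-option'' function
\begin{equation*}
G : \R \to (-\infty,+\infty], \qquad G(t) \coloneqq \int \max\{t,k\} F(\dd k).
\end{equation*}
The representation in \ref{item:fixedF:rep} says exactly that $\alpha u + \beta = G \circ v$, while \ref{item:fixedF:trans} says that $u = \phi \circ v$ for some absolutely continuous $\phi$ with $\phi' = \lambda F$ a.e. So once I show that $G$ is itself absolutely continuous on $\R$ with $G' = F$ a.e.\ and that the integrability condition $\int_{(0,+\infty)} k F(\dd k)<+\infty$ is equivalent to $G$ being finite-valued, the two conditions will just be two different ways of saying the same affine relationship between $u$ and $v$.

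\textbf{Step 1 (the key identity).} For any $s \leq t$ in $\R$ and any $k \in [-\infty,+\infty)$,
\begin{equation*}
\max\{t,k\} - \max\{s,k\} = \int_s^t \1_{\{k \leq r\}} \dd r
\end{equation*}
(with the convention that $-\infty \leq r$ for every real $r$). Integrating against $F$ and swapping the order of integration via Tonelli yields
\begin{equation*}
G(t) - G(s) = \int_s^t F(r) \dd r, \qquad s \leq t.
\end{equation*}
Since $0 \leq F \leq 1$, this means $G$ is finite on $\R$ as soon as it is finite at one point, and in that case $G$ is $1$-Lipschitz, hence absolutely continuous, with $G' = F$ almost everywhere (Lebesgue differentiation). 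Using $\max\{t,k\} \leq \abs{t} + k^+$, finiteness of $G$ anywhere is equivalent to $\int_{(0,+\infty)} k F(\dd k) < +\infty$.

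\textbf{Step 2 (\ref{item:fixedF:rep} $\Rightarrow$ \ref{item:fixedF:trans}).} Pick any $x_0 \in X$: the right-hand side of \ref{item:fixedF:rep} is finite at $x_0$, so $G(v(x_0))<+\infty$, and Step 1 delivers $\int_{(0,+\infty)} k F(\dd k)<+\infty$ and absolute continuity of $G$ on all of $\R$. Then $\phi \coloneqq (G-\beta)/\alpha$ restricted to $\co(v(X))$ is absolutely continuous, satisfies $\phi' = \alpha^{-1} F$ a.e., and obeys $u = \phi\circ v$; take $\lambda \coloneqq 1/\alpha$.

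\textbf{Step 3 (\ref{item:fixedF:trans} $\Rightarrow$ \ref{item:fixedF:rep}).} The integrability assumption means $G$ is finite on $\R$, so by Step 1, $G(t)-G(s) = \int_s^t F \dd r$ for all $s \leq t$. Absolute continuity of $\phi$ on $\co(v(X))$ plus $\phi' = \lambda F$ a.e.\ gives, via the Lebesgue fundamental theorem of calculus, $\phi(t) - \phi(s) = \lambda\int_s^t F \dd r = \lambda(G(t) - G(s))$ for all $s\leq t$ in $\co(v(X))$. Fixing $s_0 \in \co(v(X))$ and setting $c \coloneqq \phi(s_0) - \lambda G(s_0)$ yields $\phi(t) = \lambda G(t) + c$ on $\co(v(X))$, and composing with $v$ gives $u(x) = \lambda G(v(x)) + c$, which is \ref{item:fixedF:rep} with $\alpha = 1/\lambda$ and $\beta = -c/\lambda$.

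\textbf{Main obstacle.} The only non-mechanical step is Step~1: recognising that the value function $G$ has $F$ as its derivative (in the a.e.\ sense). Once the Fubini-based identity $G(t)-G(s) = \int_s^t F \dd r$ is in hand, both directions of the equivalence follow just by applying the Lebesgue fundamental theorem of calculus to translate between ``affine in $G$'' and ``absolutely continuous with derivative $\lambda F$.''
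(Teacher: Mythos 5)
Your proposal is correct and follows essentially the same route as the paper: both reduce the equivalence to the observation that $\chi(\ell) \coloneqq \int \max\{\ell,k\}\, F(\dd k)$ is $1$-Lipschitz with derivative $F$ a.e.\ (and finite iff $\int_{(0,+\infty)} k\, F(\dd k) < +\infty$), and then translate between the two properties via the Lebesgue fundamental theorem of calculus. The only (immaterial) difference is that you obtain the identity $\chi(t)-\chi(s)=\int_s^t F$ directly by Tonelli, whereas the paper's \Cref{lemma:chi} differentiates under the integral sign via bounded convergence.
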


Before interpreting property~\ref{item:fixedF:trans}, we outline why \Cref{proposition:fixedF} is true.

\begin{proof}[Sketch proof]
	Note that for (almost) every $\ell \in \co(v(X))$,
	\begin{align*}
		\left. \frac{\dd}{\dd m} \int \max\{m,k\} F(\dd k) \right|_{m=\ell}
		&= \int \left. \frac{\dd}{\dd m} \max\{m,k\} \right|_{m=\ell} F(\dd k)
		\\
		&= \int \1_{[-\infty,\ell]} \dd F
		= F(\ell) .
	\end{align*}
	Evidently \ref{item:fixedF:rep} holds if and only if there exist $\alpha>0$ and $\beta \in \R$ such that $u = \psi \circ v$ holds for $\psi : \co(v(X)) \to \R$ satisfying
	\begin{equation}
		\alpha \psi(\ell) + \beta
		= \int \max\{\ell,k\} F(\dd k)
		\quad \text{for every $\ell \in \co(v(X))$.}
		\label{eq:fixedF_sketch}
	\end{equation}
	Thus if \ref{item:fixedF:rep} holds, then letting $\lambda \coloneqq 1/\alpha$ and $\phi \coloneqq \psi$, we have $\phi' = \lambda F$ a.e. and $u = \phi \circ v$. Conversely, if \ref{item:fixedF:trans} holds, then let $\alpha \coloneqq 1/\lambda$, and note that since the maps $\alpha \phi$ and $\ell \mapsto \int \max\{\ell,k\} F(\dd k)$ have (a.e.) the same derivative, by the fundamental theorem of calculus, they must differ by a constant: in other words, there exists a $\beta \in \R$ such that \eqref{eq:fixedF_sketch} holds with $\psi = \phi$.
\end{proof}

The actual proof (\cref{app:fixedF_pf}) formalises the above argument by taking care of absolute continuity and finiteness of integrals (both of which appear explicitly in property~\ref{item:fixedF:trans}) and by justifying the implicit interchanging in the first step of differentiation and integration.

Property~\ref{item:fixedF:trans} in \Cref{proposition:fixedF} says that the transformation $\phi = u \circ v^{-1}$ of risk attitude from $v$ to $u$ must have gradient proportional to the extended CDF $F$. In case $F$ is continuously differentiable with $F>0$ on $(\inf v(X),+\infty)$, property~\ref{item:fixedF:trans} is equivalent to

\begin{enumerate}[label=(\roman*$'$)]

	\setcounter{enumi}{1}

	\item \label{item:fixedF:trans_diff} $\int_{(0,+\infty)} k F(\dd k) < +\infty$, and there exists a twice continuously differentiable $\phi : \co(v(X)) \to \R$ such that $\phi''/\phi' = F'/F$ on $\co(v(X)) \setminus \{\inf v(X)\}$.%
		\footnote{\ref{item:fixedF:trans} and \ref{item:fixedF:trans_diff} are equivalent because under the hypotheses, $\phi' = \lambda F$ a.e. for some $\lambda > 0$ if and only if $\phi' = \lambda F$ for some $\lambda > 0$ if and only if $\log \phi' = \Lambda + \log F$ for some $\Lambda \in \R$ if and only if $\phi''/\phi' = F'/F$, where the last `if' holds by the fundamental theorem of calculus.}

\end{enumerate}
By \hyperref[theorem:pratt]{Pratt's theorem}, $\phi''/\phi'$ is a measure of how much `more convex' $u$ is than $v$; in other words, it captures how and to what extent risk-aversion is decreased by the addition of the outside option $F$. Property~\ref{item:fixedF:trans_diff} demands that the risk-aversion decrease be equal to the reverse hazard rate $F'/F$ of the outside-option distribution.

\subsection{Monetary alternatives}
\label{sec:fixedF:monetary}

Consider the important special case in which alternatives are monetary prizes: $X$ is an open convex subset of $\R$, the `true' risk attitude $v$ is strictly increasing, and the outside option (if available) is also a monetary prize, so that the \emph{monetary} value of the outside option is a $[-\infty,+\infty)$-valued random variable with extended CDF $G$ satisfying $G = F \circ v$. Then in case $v$ is twice continuously differentiable with $v'>0$ and $G$ is continuously differentiable with $G>0$ on $(\inf X,+\infty)$, \ref{item:fixedF:trans} and \ref{item:fixedF:trans_diff} are equivalent to

\begin{enumerate}[label=(\roman*$''$)]

	\setcounter{enumi}{1}

	\item \label{item:fixedF:trans_diff_money} $\int_{(0,+\infty)} v \dd G < +\infty$, and $u$ is twice continuously differentiable with $u'>0$ and $u''/u' = v''/v' + G'/G$.%
		\footnote{$G' = (F' \circ v) \times v'$, $u' = (\phi' \circ v) \times v'$ and $u'' = (\phi' \circ v) \times v'' + (\phi'' \circ v) \times (v')^2$, so $G'/G = ([F'/F] \circ v) \times v'$ and $u''/u' = v''/v' + ([\phi''/\phi'] \circ v) \times v'$, and thus $\phi''/\phi' = F'/F$ holds if and only if $u''/u' = v''/v' + G'/G$.}

\end{enumerate}

Recall from \hyperref[theorem:pratt_diff]{Pratt's theorem} that $u''/u'$ and $v''/v'$ are measures of risk attitude. The simple additive decomposition of effective risk attitude $u''/u'$ into `true' risk attitude $v''/v'$ and the reverse hazard rate $G'/G$ of the dollar-denominated outside-option distribution has comparative-statics implications, explored in \cref{sec:mcs} below. It also permits parametric analysis:

\begin{example}
	\label{example:cara}
	Suppose that $X = (-\infty,x_0)$ where $x_0 \in \R$, let $u,v : X \to \R$ be CARA, i.e. $u' > 0 < v'$, $-u''/u' = \rho \in \R$ and $-v''/v' = \sigma \in \R$, and let $G$ be a reversed exponential distribution $x \mapsto \min\left\{ 1, \exp\left( - \lambda \left( x_0 - x \right) \right) \right\}$, with reverse hazard rate $\lambda \in [0,+\infty)$. Then \ref{item:fixedF:trans_diff_money} holds if and only if $-\rho = -\sigma + \lambda$.
\end{example}

\section{Identification}
\label{sec:ident}

In this section, we address our central question: what can be learned about the parameters $(v,F)$ from observing choice among lotteries, i.e. from observing the preference $\succeq$? The following simple result asserts that it is equivalent to ask what can be learned about $(v,F)$ from observing the preference $\succeq$'s risk attitude $u : X \to \R$.

\begin{corollary}
	\label{corollary:oo_vF_u}
	Fix $u,v : X \to \R$ and an extended CDF $F$ such that $\int_{(0,+\infty)} k F(\dd k) < +\infty$, let $\succeq$ be an expected-utility preference with risk attitude $u$, and let $\succeq'$ be a preference with outside-option representation $(v,F)$. Then $\mathord{\succeq} = \mathord{\succeq'}$ if and only if there exist $\alpha>0$ and $\beta \in \R$ such that
	\begin{equation*}
		\alpha u(x) + \beta
		= \int \max\{v(x),k\} F(\dd k)
		\quad \text{for every $x \in X$.}
	\end{equation*}
\end{corollary}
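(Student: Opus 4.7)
The plan is to reduce the corollary to a direct application of Observation~\ref{observation:risk_att} (which identifies the effective risk attitude of a preference with outside-option representation) combined with the uniqueness clause of the von Neumann--Morgenstern theorem (which says that two risk attitudes represent the same expected-utility preference exactly when they differ by a positive affine transformation).

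Specifically, I would first observe that by Observation~\ref{observation:risk_att} applied with $\alpha = 1$ and $\beta = 0$, the function $u^\ast : X \to \R$ defined by
\begin{equation*}
	u^\ast(x) \coloneqq \int \max\{v(x),k\} F(\dd k) \quad \text{for every $x \in X$}
\end{equation*}
is a (well-defined, since $\int_{(0,+\infty)} k F(\dd k) < +\infty$) risk attitude representing $\succeq'$; in particular, $\succeq'$ is expected-utility. Meanwhile $\succeq$ is expected-utility with risk attitude $u$ by hypothesis.

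Next I would invoke the second (uniqueness) part of the von Neumann--Morgenstern theorem: the two expected-utility preferences $\succeq$ and $\succeq'$ coincide if and only if their risk attitudes $u$ and $u^\ast$ are related by $\alpha u + \beta = u^\ast$ for some $\alpha > 0$ and $\beta \in \R$. Unpacking the definition of $u^\ast$ yields exactly the displayed equation in the corollary, completing both directions.

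There is no real obstacle here; the statement is essentially a bookkeeping corollary, and the main thing to be careful about is simply articulating that the ``outside-option'' functional $x \mapsto \int \max\{v(x),k\} F(\dd k)$ \emph{is} a valid choice of vNM index for $\succeq'$, so that the uniqueness clause of the vNM theorem can be applied directly to compare it with $u$.
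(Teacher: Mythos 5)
Your proposal is correct and is essentially identical to the paper's own proof: both apply \Cref{observation:risk_att} to identify $x \mapsto \int \max\{v(x),k\}\, F(\dd k)$ as a vNM index for $\succeq'$ and then invoke the uniqueness clause of the \hyperref[theorem:vNM]{von Neumann--Morgenstern theorem}. No gaps.
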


\begin{proof}
	By \Cref{observation:risk_att}, $\succeq'$ is expected-utility with risk attitude $w : X \to \R$ given by
	\begin{equation*}
		w(x)
		\coloneqq \int \max\{v(x),k\} F(\dd k)
		\quad \text{for every $x \in X$.}
	\end{equation*}
	By the \hyperref[theorem:vNM]{von Neumann--Morgenstern theorem}, $\mathord{\succeq}=\mathord{\succeq'}$ if and only if there exist $\alpha>0$ and $\beta \in \R$ such that $\alpha u + \beta = w$.
\end{proof}

To build intuition about identification, we begin with an example.

\begin{namedthm}[\Cref*{example:cara} {\normalfont(continued).}]
	\label{example:cara_ident}
	Recall that $-\rho = -\sigma + \lambda$, where $-u''/u' = \rho \in \R$ and $-v''/v' = \sigma \in \R$ are the risk-aversion parameters of, respectively, the effective and `true' risk attitudes, and $G'/G = \lambda \in [0,+\infty)$ is the reverse hazard rate of the dollar-denominated outside-option distribution. Knowledge of the effective risk-aversion parameter $\rho$ provides a lower bound on the `true' risk-aversion parameter: $\sigma \geq \rho$, since $-\rho = -\sigma + \lambda$ and $\lambda \in [0,+\infty)$. There is no upper bound, however, since for any value $\sigma \in [\rho,+\infty)$, $-\rho = -\sigma + \lambda$ holds if $\lambda = \sigma-\rho$. Nothing at all can be learned from $\rho$ about the reverse hazard rate: clearly for any value $\lambda \in [0,+\infty)$, $-\rho = -\sigma + \lambda$ holds if $\sigma = \rho+\lambda$.
\end{namedthm}

In the example, what the effective risk attitude $u$ reveals about the `true' risk attitude $v$ is (precisely) that $v$ is more risk-averse than $u$, while nothing at all is revealed about the outside-option distribution. Our identification results in the next two sections show that these are general lessons.

\subsection{Revealed risk attitude}
\label{sec:ident:riskatt}

The following answers the identification question for $v$: what the effective risk attitude $u$ can teach us about the `true' risk attitude $v$ is nothing more or less than that the latter is more risk-averse than the former.

\begin{theorem}
	\label{theorem:ident}
	For a non-empty set $X$ and bounded-above functions $u,v : X \to \R$ such that $u$ is Lipschitz with respect to $v$,%
		\footnote{That is, there exists an $L \in [0,+\infty)$ such that for any $x,y \in X$, $\abs*{ u(x) - u(y) } \leq L \abs*{ v(x) - v(y) }$.}
	the following are equivalent:

	\begin{enumerate}[label=(\alph*)]
	
		\item \label{item:ident:lra} $u$ is less risk-averse than $v$, i.e. for any alternative $x \in X$ and simple lottery $p \in \Delta^0(X)$, $u(x) \geq \mathrel{(>)} \int u \dd p$ implies $v(x) \geq \mathrel{(>)} \int v \dd p$.

		\item \label{item:ident:oo} There exist $\alpha>0$, $\beta \in \R$, and an extended CDF $F$ such that
		\begin{equation*}
			\alpha u(x) + \beta
			= \int \max\{v(x),k\} F(\dd k)
			\quad \text{for every $x \in X$}
		\end{equation*}
		and $F>0$ on $(\inftwo v(X),+\infty)$.
			
	\end{enumerate}
\end{theorem}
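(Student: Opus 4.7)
The proof hinges on two reformulations: \Cref{proposition:fixedF} converts outside-option representability into the statement that $u = \phi \circ v$ for an absolutely continuous $\phi$ with $\phi' = \lambda F$ a.e.\ (some $\lambda > 0$), and \hyperref[theorem:pratt]{Pratt's theorem} converts `less risk-averse than' into the statement that $u = \phi \circ v$ for a convex, increasing $\phi$ that is strictly increasing on $v(X)$ (in fact on $\cotwo(v(X))$, by the paper's strengthening).

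For \ref{item:ident:oo} $\Rightarrow$ \ref{item:ident:lra}, I apply \Cref{proposition:fixedF} to obtain an absolutely continuous $\phi : \co(v(X)) \to \R$ with $\phi' = \lambda F$ a.e.\ and $u = \phi \circ v$. Non-negativity and monotonicity of the extended CDF $F$ make $\phi$ increasing and convex. The hypothesis $F > 0$ on $(\inftwo v(X), +\infty)$ together with absolute continuity forces $\phi$ to be strictly increasing on $(\inftwo v(X), +\infty) \cap \co(v(X))$; and a case analysis based on whether $\inftwo v(X)$ equals or strictly exceeds $\inf v(X)$ (per \Cref{definition:cotwo}) upgrades this to strict monotonicity on $v(X)$. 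Pratt's \ref{item:pratt:trans} $\Rightarrow$ \ref{item:pratt:lra} then yields \ref{item:ident:lra}.

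For \ref{item:ident:lra} $\Rightarrow$ \ref{item:ident:oo}, \hyperref[theorem:pratt]{Pratt's theorem} (strengthened as noted) supplies $\phi_0 : \co(v(X)) \to \R$ increasing, convex, strictly increasing on $\cotwo(v(X))$, with $u = \phi_0 \circ v$. The Lipschitz hypothesis forces $\phi_0$ to be $L$-Lipschitz on $v(X)$, hence also on $\cl(v(X)) \cap \co(v(X))$ (continuity of $\phi_0$ on the interior of $\co(v(X))$ extends Lipschitz to limit points). I refine $\phi_0$ by replacing it on each gap of $\cl(v(X))$ in $\co(v(X))$ with linear interpolation between the boundary values: each gap slope is at most $L$ by Lipschitz at the endpoints, and the linearization preserves convexity (the new slopes fit between the left and right derivatives of $\phi_0$ at the gap endpoints) and strict monotonicity on $\cotwo(v(X))$ (gap endpoints in $\cotwo(v(X))$ have strictly ordered values). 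The refined $\phi$ is thus $L$-Lipschitz on all of $\co(v(X))$. I then define $F : \R \to [0, 1]$ by $F(k) = \phi'_+(k)/L$ on $\co(v(X))$, $F(k) = 0$ for $k < \inf v(X)$, and $F(k) = 1$ for $k \geq \sup v(X)$; right-continuity and monotonicity of $\phi'_+$ (convexity of $\phi$) together with $\sup v(X) < +\infty$ (boundedness-above of $v$) make $F$ an extended CDF, and strict monotonicity of $\phi$ on $\cotwo(v(X))$ gives $F > 0$ on $(\inftwo v(X), +\infty)$. Finally, $\int_{(0,+\infty)} k F(\dd k) \leq \sup v(X) < +\infty$; since $\phi' = L F$ a.e.\ and $u = \phi \circ v$, \Cref{proposition:fixedF}'s \ref{item:fixedF:trans} $\Rightarrow$ \ref{item:fixedF:rep} yields \ref{item:ident:oo}.

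The principal obstacle is the Lipschitz refinement of $\phi_0$ in the second direction: \hyperref[theorem:pratt]{Pratt's theorem} does not by itself furnish a Lipschitz transformation (the paper observes that $\phi_0$ may fail even to be continuous in general), but the Lipschitz hypothesis on $u$ with respect to $v$ — which bounds the slopes of $\phi_0$ on $v(X)$ by $L$ — is exactly what powers the gap-linearization that enforces $L$-Lipschitz globally while retaining convexity and strict monotonicity on $\cotwo(v(X))$.
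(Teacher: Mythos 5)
Your proposal is correct and follows essentially the same route as the paper: both directions are routed through \Cref{proposition:fixedF} and \hyperref[theorem:pratt]{Pratt's theorem}, with the converse built by taking the right-hand derivative of a convex transformation $\phi$ (affine on the gaps of $v(X)$, as in \Cref{lemma:greatest_phi}) and rescaling it into an extended CDF, using boundedness above of $v$ for $\int_{(0,+\infty)} k\, F(\dd k) < +\infty$ and the Lipschitz hypothesis to bound $\phi^+$. The only differences are cosmetic: you normalise $\phi^+$ by the Lipschitz constant $L$ rather than by $\lim_{k \nearrow \sup v(X)} \phi^+(k)$ (both yield valid extended CDFs satisfying \ref{item:fixedF:trans}), and you carry out the gap-linearisation by hand where the paper invokes \Cref{lemma:greatest_phi}.
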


Another way of viewing \Cref{theorem:ident} is as a new characterisation of `less risk-averse than' for expected-utility preferences: $u$ is less risk-averse than $v$ if and only if choice according to $u$ can be rationalised as choice according to $v$ in the presence of an outside option. From this perspective, \Cref{theorem:ident} is a continuation of the enterprise of \hyperref[theorem:pratt]{Pratt's theorem} (\cpageref{theorem:pratt}).

A difference relative to \hyperref[theorem:pratt]{Pratt's theorem} is that \Cref{theorem:ident} restricts attention to risk attitudes $u,v : X \to \R$ that are bounded above and have $u$ Lipschitz with respect to $v$. We view both of these restrictions as mild.%
	\footnote{Boundedness above is standard, motivated by the St Petersburg paradox \parencite[see e.g.][lecture~9]{Rubinstein2012}. The Lipschitz requirement merely says that $u$ does not vary infinitely faster than $v$.}
We comment below on their role in the proof.

In property~\ref{item:ident:oo}, the requirement that $F>0$ on $(\inftwo v(X),+\infty)$ simply means that no alternative $x \in X$ is uniformly dominated by the outside option, provided we are in the typical case $\inftwo v(X) = \inf v(X)$.%
	\footnote{By `$x$ is uniformly dominated by the outside option', we mean that there exists an $\eps>0$ such that the outside option is a.s. worth strictly more than $v(x)+\eps$, i.e. $F(v(x)+\eps)=0$. In case $\inf v(X) < \inftwo v(X)$, `$F>0$ on $(\inftwo v(X),+\infty)$' means that no alternatives except the worst (those $x \in X$ with $v(x) = \inf v(X)$) are uniformly dominated by the outside option.}
Since $u$ is real-valued, property~\ref{item:ident:oo} implies that $\int_{(0,+\infty)} k F(\dd k) < +\infty$.

The fact that adding an outside option decreases risk-aversion (the implication from \ref{item:ident:oo} to \ref{item:ident:lra}) captures the natural intuition that gaining access to a fallback makes a decision-maker more willing to gamble. This is an old idea, at least in the special case of a deterministic outside option (see \cref{sec:intro:lit} above). For identification, this is a positive result: something that \emph{can} be learned about the `true' risk attitude $v$ from the effective risk attitude $u$ is that the former is more risk-averse than the latter.

The converse (the fact that \ref{item:ident:lra} implies \ref{item:ident:oo}) is less familiar: \emph{any} decrease of risk-aversion can be produced by making available an outside option (while leaving `true' risk attitude unchanged). For identification, this is a negative result: \emph{nothing more} can be learned about the `true' risk attitude $v$ from the effective risk attitude $u$ than that the former must be more risk-averse than the latter.

\begin{proof}[Sketch proof of \Cref{theorem:ident}]
	Suppose that \ref{item:ident:oo} holds. Then $u = \phi \circ v$ where
	\begin{equation*}
		\alpha \phi(\ell) + \beta
		= \int \max\{\ell,k\} F(\dd k)
		\quad \text{for every $\ell \in \co(v(X))$.}
	\end{equation*}
	Evidently $\phi$ is increasing and convex, since $\ell \mapsto \max\{\ell,k\}$ is for each $k \in [-\infty,+\infty)$. The fact that $F>0$ on $(\inftwo v(X),+\infty)$ ensures that $\phi$ is \emph{strictly} increasing on $\cotwo(v(X))$. Hence \ref{item:ident:lra} follows by \hyperref[theorem:pratt]{Pratt's theorem} (\cpageref{theorem:pratt}).

	For the converse, suppose that \ref{item:ident:lra} holds. Then by \hyperref[theorem:pratt]{Pratt's theorem}, there exists an increasing convex $\phi : \co(v(X)) \to \R$ that is strictly increasing on $\cotwo(v(X))$ and satisfies $u = \phi \circ v$. Write $\phi^+ : v(X) \setminus \{\sup v(X)\} \to \R$ for the right-hand derivative of $\phi$, and define $\lambda \coloneqq \lim_{k \nearrow \sup v(X)} \phi^+(k)$. Define $F : \R \to \R$ by $F \coloneqq \lim_{k \searrow \inf v(X)} \phi^+(k)$ on $(-\infty,\inf v(X)]$, $F \coloneqq \phi^+ / \lambda$ on $\co(v(X)) \setminus \{ \sup v(X) \}$, and $F \coloneqq 1$ on $[\sup v(X),+\infty)$. Then $F$ is an extended CDF; in particular, it is non-negative and increasing (since $\phi$ is increasing and convex) and satisfies $\lim_{k \nearrow +\infty} F(k) = 1$. By \Cref{proposition:fixedF}, there exist $\alpha>0$ and $\beta \in \R$ such that
	\begin{equation*}
		\alpha u(x) + \beta
		= \int \max\{v(x),k\} F(\dd k)
		\quad \text{for every $x \in X$.}
	\end{equation*}
	Finally, $F>0$ on $(\inftwo v(X),+\infty)$ since $\phi$ is strictly increasing on $\cotwo(v(X))$.
\end{proof}

The actual proof (\cref{app:ident_pf}) completes the above argument by filling in the gaps. For example, in the argument that \ref{item:ident:lra} implies \ref{item:ident:oo}, it must be shown that $\lambda < +\infty$ and (in order to invoke \Cref{proposition:fixedF}) that $\int_{(0,+\infty)} k F(\dd k) < +\infty$. The former follows immediately from the hypothesis that $u$ is Lipschitz with respect to $v$, while the latter follows from the fact that $v$ is bounded above, so $\int_{(0,+\infty)} k F(\dd k) \leq \sup v(X) < +\infty$. (This explains the role of these two hypotheses.)

\begin{remark}
	\label{remark:choquet}
	One could alternatively prove that \ref{item:ident:lra} implies \ref{item:ident:oo} via Choquet's theorem \parencite[see e.g.][p.~14]{Phelps2001}, as follows. By \hyperref[theorem:pratt]{Pratt's theorem}, \ref{item:ident:lra} implies that there exists an increasing convex $\phi : \co(v(X)) \to \R$ such that $u = \phi \circ v$. Since $u$ is Lipschitz with respect to $v$, $\phi$ is Lipschitz continuous with some constant $\alpha^{-1}>0$. A map $\psi : \co(v(X)) \to \R$ is called \emph{inflationary} if and only if $\psi(k) \geq k$ for every $k \in \co(v(X))$. Choose a $\beta \in \R$ large enough that $\alpha \phi + \beta$ is inflationary; this is possible since ($\phi$ is increasing and convex and) $v$ is bounded above. Then $\alpha \phi + \beta$ belongs to the set $\mathcal{F}$ of all increasing, convex, 1-Lipschitz and inflationary maps $\co(v(X)) \to \R$. Evidently $\mathcal{F}$ is convex, so by Choquet's theorem, there is a probability measure $\nu$ on $\mathcal{F}$ that is concentrated on the extreme points of $\mathcal{F}$ such that $\alpha \phi(\ell) + \beta = \int \psi(\ell) \nu(\dd \psi)$ for every $\ell \in \co(v(X))$. And the extreme points of $\mathcal{F}$ are exactly maps $\ell \mapsto \max\{\ell,k\}$ for $k \in [-\infty,+\infty)$, so by change of variable, $\alpha \phi(\ell) + \beta = \int \max\{\ell,k\} F(\dd k)$ for each $\ell \in \co(v(X))$, where $F$ is an extended CDF.%
		\footnote{Specifically, $F$ is given by $F(m) \coloneqq \nu(\{ \ell \mapsto \max\{\ell,k\} : k \in [-\infty,m] \})$ for each $m \in \R$.}
 	(This argument has various gaps that would need filling.)
\end{remark}

\begin{namedthm}[\Cref*{remark:oo_X} {\normalfont(continued from \cpageref{remark:oo_X})}.]
	\label{remark:oo_X_result}
	\Cref{theorem:ident} remains true if attention is restricted to extended CDFs $F$ arising from an $X$-valued outside option, modulo closure of $v(X)$: under the same hypotheses, \ref{item:ident:lra} is equivalent to

	\begin{enumerate}[label=(\alph*$'$)]

		\setcounter{enumi}{1}
	
		\item \label{item:ident:oo_supp} There exist $\alpha>0$, $\beta \in \R$, and an extended CDF $F$ such that
		\begin{equation*}
			\alpha u(x) + \beta
			= \int \max\{v(x),k\} F(\dd k)
			\quad \text{for every $x \in X$,}
		\end{equation*}
		$F>0$ on $(\inftwo v(X),+\infty)$, and $F$ is concentrated on $\cl(v(X)) \cup \{-\infty\}$.
			
	\end{enumerate}
	We prove this in \cref{app:remark_oo_pf}.
\end{namedthm}

\subsection{(Non-)identification of the outside-option distribution}
\label{sec:ident:F}

Whereas it is possible to learn something meaningful about the `true' risk attitude $v$ from the effective risk attitude $u$, essentially nothing can be learned about the outside-option distribution $F$:

\begin{proposition}
	\label{proposition:F_ident}
	For a non-empty set $X$, a function $u : X \to \R$ and an extended CDF $F$, the following are equivalent:

	\begin{enumerate}[label=(\Roman*)]
	
		\item \label{item:F_ident:finite} $\int_{(0,+\infty)} k F(\dd k) < +\infty$, and if $\int_{(-\infty,0]} k F(\dd k) > -\infty$ then $u$ is bounded below.

		\item \label{item:F_ident:oo} There exist $\alpha>0$, $\beta \in \R$, and a function $v : X \to \R$ such that
		\begin{equation*}
			\alpha u(x) + \beta
			= \int \max\{v(x),k\} F(\dd k)
			\quad \text{for every $x \in X$.}
		\end{equation*}

	\end{enumerate}
\end{proposition}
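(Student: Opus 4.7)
The plan is to study the increasing and convex (hence continuous) map $W : \R \to [-\infty, +\infty]$ defined by $W(m) \coloneqq \int \max\{m, k\} F(\dd k)$; property~(II) amounts to the existence of $\alpha > 0$ and $\beta \in \R$ with $\alpha u + \beta = W \circ v$. The first clause of (I) ensures $W$ is $\R$-valued (from $(\max\{m, k\})^+ \leq \abs{m} + k^+$), and differentiation under the integral yields the right-derivative $W'_+ = F$.

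For the direction (II) $\Rightarrow$ (I), I fix some $x_0 \in X$: the real-valuedness of $\int \max\{v(x_0), k\} F(\dd k)$, combined with $(\max\{v(x_0), k\})^+ \geq k^+$, gives the first clause. If additionally $\int_{(-\infty, 0]} k F(\dd k) > -\infty$, then $F$ has no atom at $-\infty$ and $\int k F(\dd k) \in \R$; since $\max\{v(x), k\} \geq k$ for every $x$, we obtain $\alpha u(x) + \beta \geq \int k F(\dd k)$, showing $u$ is bounded below.

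For the direction (I) $\Rightarrow$ (II), the construction splits on $\inf W = \lim_{m \to -\infty} W(m)$. Let $a \coloneqq \lim_{k \to -\infty} F(k)$ (the probability of $-\infty$ under $F$) and $m_0 \coloneqq \inf\{m \in \R : F(m) > 0\} \in [-\infty, +\infty)$. If $\inf W = -\infty$---which occurs iff $a > 0$ or $\int_{(-\infty, 0]} k F(\dd k) = -\infty$---then $W$ is strictly increasing on all of $\R$ (since $W' \geq a > 0$ in the first sub-case and $F > 0$ everywhere in the second; see below), hence a continuous bijection $\R \to \R$. I then take $\alpha \coloneqq 1$, $\beta \coloneqq 0$, and $v(x) \coloneqq W^{-1}(u(x))$. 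Otherwise $\inf W = c \in \R$; applying monotone convergence to $(m - k)^+ \searrow 0$ gives $c = \int k F(\dd k)$, and $W$ is strictly increasing on $[m_0, +\infty)$ with image containing $(c, +\infty)$. The second clause of (I) supplies $\underline{u} \coloneqq \inf u \in \R$, so I set $\alpha \coloneqq 1$, $\beta \coloneqq c - \underline{u} + 1$, and $v(x) \coloneqq W^{-1}(u(x) + \beta)$, where $W^{-1}(\ell)$ denotes the unique element of $(m_0, +\infty)$ satisfying $W(m) = \ell$ for $\ell > c$.

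The main obstacle is the case analysis for $\inf W$, and specifically verifying that when $a = 0$ and $\int_{(-\infty, 0]} k F(\dd k) = -\infty$, $W$ is still strictly increasing on all of $\R$. This reduces to showing that $m_0 = -\infty$: if instead $F$ vanished on $(-\infty, m^*)$ for some real $m^*$ with $a = 0$, then $F$ would be concentrated on $[m^*, +\infty)$, giving $\int_{(-\infty, 0]} k F(\dd k) \geq \min\{m^*, 0\} > -\infty$ and contradicting the hypothesis; so $F > 0$ on all of $\R$, whence $W' > 0$ throughout.
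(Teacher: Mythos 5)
Your proof is correct and follows essentially the same route as the paper's: both establish that the valuation map $\ell \mapsto \int \max\{\ell,k\}\,F(\dd k)$ has derivative $F$, is strictly increasing precisely where $F>0$ with image an upward ray (or all of $\R$), and then define $v$ by inverting this map after translating $u$ into its image, with the boundedness-below clause of (I) supplying the needed translate. The only notable difference is cosmetic: you obtain $\inf W=\int k\,F(\dd k)$ by monotone convergence and bound $\alpha u+\beta$ below by the mean directly, where the paper routes the same facts through an integration-by-parts identity ($\chi(\ell)=m-\int_\ell^0 F$ and its Observation) and through \Cref{proposition:fixedF}.
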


In other words, observation of the effective risk attitude $u$ tells us nothing about the outside-option distribution $F$ except that $\int_{(0,+\infty)} k F(\dd k) < +\infty$ and, if $u$ is unbounded below, that $\int_{(-\infty,0]} k F(\dd k) = -\infty$.

\section{Comparative statics}
\label{sec:mcs}

In this section, we consider comparative statics: how does the effective risk attitude $u$ vary with changes of the `true' risk attitude $v$ and of the outside-option distribution $F$? To see the forces at play, consider an example.

\begin{namedthm}[\Cref*{example:cara} {\normalfont(continued from \cpageref{example:cara,example:cara_ident}).}]
	\label{example:cara_mcs}
	Recall that $-\rho = -\sigma + \lambda$, where $-u''/u' = \rho \in \R$ and $-v''/v' = \sigma \in \R$ are the effective and `true' risk-aversion parameters and $G'/G = \lambda \in [0,+\infty)$ is the reverse hazard rate. Effective risk-aversion $\rho$ is increasing in the `true' risk-aversion parameter~$\sigma$. When the reverse hazard rate $\lambda$ increases, meaning that the outside-option distribution improves, effective risk-aversion $\rho$ decreases.
\end{namedthm}

In the example, the effective risk attitude $u$ becomes less risk-averse whenever the `true' risk attitude $v$ becomes less risk-averse, and whenever the outside-option distribution $F$ improves. Our comparative-statics theorem below shows that these findings are general (not specific to the example).

Beyond parametric examples, the right general notion of `improvement' of $F$ is the following: given $K \subseteq \R$ and extended CDFs $F$ and $\widehat{F}$, we say that $\widehat{F}$ is better than $F$ in the \emph{reverse hazard rate order on $K$} if and only if $F(\ell) \widehat{F}(k) \leq F(k) \widehat{F}(\ell)$ for all $k < \ell$ in $K$. If $K$ is an open interval and $F,\widehat{F}$ are continuously differentiable with $F > 0 < \widehat{F}$ on $(\inf K,+\infty)$, then this is equivalent to $\widehat{F}'/\widehat{F} \geq F'/F$ on $K$.

The reverse hazard rate order is a standard way of comparing distributions, used e.g. in the auctions literature.%
	\footnote{The idea of one bidder in an IPV auction being `stronger' than another is typically formalised as the former bidder's valuation distribution being better in the reverse hazard rate order (see \cite{Lebrun1998}, \cite{MaskinRiley2000}, and section~4.3.1 in \cite{Krishna2010}).}
Superiority of one distribution over another in the reverse hazard rate order implies first-order stochastic dominance, and is implied by superiority in the monotone likelihood ratio order. Reverse hazard rate ranking is equivalent to first-order stochastic dominance between, for each $k \in \R$, the conditional distributions obtained by conditioning on the ray $[-\infty,k]$ \parencite[see][p.~37]{ShakedShanthikumar2007}.

\begin{theorem}
	\label{theorem:mcs}
	Fix a non-empty set $X$, functions $u,v,\widehat{u},\widehat{v} : X \to \R$, an extended CDF $F$ with $F>0$ on $(\inftwo v(X),+\infty)$, and an extended CDF $\widehat{F}$ with $\widehat{F}>0$ on $\left(\inftwo \widehat{v}(X),+\infty\right)$. Assume that there exist $\alpha,\widehat{\alpha}>0$ and $\beta,\widehat{\beta} \in \R$ such that for every $x \in X$,
	\begin{align*}
		\alpha u(x) + \beta
		&= \int \max\{v(x),k\} F(\dd k)
		\quad \text{and}
		\\
		\widehat{\alpha} \widehat{u}(x) + \widehat{\beta}
		&= \int \max\left\{\widehat{v}(x),k\right\} \widehat{F}(\dd k) .
	\end{align*}
	Further suppose that $v(X)$ and $\widehat{v}(X)$ are Borel, that $F$ is concentrated on $v(X) \cup \{-\infty\}$, and that $\widehat{F}$ is concentrated on $\widehat{v}(X) \cup \{-\infty\}$.
	
	\begin{enumerate}[label=(\alph*)]

		\item \label{item:mcs:F} Suppose that $\widehat{v}=v$. Then $\widehat{u}$ is less risk-averse than $u$ if and only if $\widehat{F}$ is better than $F$ in the reverse hazard rate order on $v(X) \setminus \{\sup v(X)\}$.

		\item \label{item:mcs:v} Suppose that $F \circ v = \widehat{F} \circ \widehat{v}$. Then $\widehat{u}$ is less risk-averse than $u$ if and only if $\widehat{v}$ is less risk-averse than $v$.
	
	\end{enumerate}
\end{theorem}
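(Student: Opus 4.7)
My plan is to handle both parts uniformly by first invoking \Cref{proposition:fixedF}. Both outside-option representations satisfy its property \ref{item:fixedF:rep}, so I obtain $\lambda,\widehat\lambda>0$ and absolutely continuous maps $\phi:\co(v(X))\to\R$ and $\widehat\phi:\co(\widehat v(X))\to\R$ with $\phi'=\lambda F$ and $\widehat\phi'=\widehat\lambda\widehat F$ a.e., and $u=\phi\circ v$, $\widehat u=\widehat\phi\circ\widehat v$ up to positive affine rescaling. The positivity hypotheses on $F$ and $\widehat F$ make $\phi,\widehat\phi$ strictly increasing on $v(X)$ and $\widehat v(X)$ (away from any isolated infimum). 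I then express `less risk-averse than' via \hyperref[theorem:pratt]{Pratt's theorem} and translate into properties of $F,\widehat F,v,\widehat v$.

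For part \ref{item:mcs:F}, with $\widehat v=v$, the compression inequality \ref{item:pratt:curv:compress} for $\widehat u$ versus $u$---using $u(z)-u(y)=\lambda\int_{v(y)}^{v(z)}F$ and its analogue for $\widehat u$---reduces the task to showing that
\begin{equation*}
	\int_b^c\widehat F\cdot\int_a^b F\;\geq\;\int_a^b\widehat F\cdot\int_b^c F
	\quad\text{for all $a<b<c$ in $v(X)$}
\end{equation*}
is equivalent to reverse-hazard-rate dominance of $\widehat F$ over $F$ on $v(X)\setminus\{\sup v(X)\}$. For ``$\Leftarrow$'', the RHR hypothesis upgrades---via concentration of $F,\widehat F$ on $v(X)\cup\{-\infty\}$, which makes both constant on gaps of $v(X)$---to monotonicity of $\widehat F/F$ on $\{F>0\}$, and the inequality follows by bounding $\widehat F/F$ above by $\widehat F(b)/F(b)$ on $[a,b]$ and below by it on $[b,c]$ and integrating. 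For ``$\Rightarrow$'', I fix $k<\ell$ in $v(X)\setminus\{\sup v(X)\}$, choose $q\in v(X)$ with $q>\ell$, apply the inequality to $(a,b,c)=(k,\ell,q)$, and let $q\searrow\ell$ through $v(X)$ (or to the right-adjacent point of $\ell$ if $\ell$ is right-isolated) to obtain $\widehat F(\ell)/F(\ell)\geq\int_k^\ell\widehat F/\int_k^\ell F$; a symmetric argument with $(k,k',\ell)$ and $k'\searrow k$ gives $\int_k^\ell\widehat F/\int_k^\ell F\geq\widehat F(k)/F(k)$, and chaining yields $\widehat F(k)/F(k)\leq\widehat F(\ell)/F(\ell)$.

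For part \ref{item:mcs:v}, under $F\circ v=\widehat F\circ\widehat v$ I set $\chi:=\widehat F^{-1}\circ F$ on $v(X)$, using the generalised quantile; the hypothesis and concentration of $\widehat F$ on $\widehat v(X)\cup\{-\infty\}$ force $\widehat v=\chi\circ v$ on $X$ (any non-uniqueness of $\widehat F^{-1}$ sits on plateaus of $\widehat F$ that carry no $\widehat v(X)$-mass and can be removed by quotienting $X$ by the relation $v(x)=v(x')$). By \hyperref[theorem:pratt]{Pratt's theorem}, $\widehat v$ is less risk-averse than $v$ iff $\chi$ is convex and increasing on $\co(v(X))$. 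Writing $\widehat u=(\widehat\phi\circ\chi)\circ v$, the same theorem says $\widehat u$ is less risk-averse than $u=\phi\circ v$ iff $(\widehat\phi\circ\chi)\circ\phi^{-1}$ is convex on $\phi(v(X))$. Using $\widehat F\circ\chi=F$ (the hypothesis composed with $v$), a chain-rule computation gives
\begin{equation*}
	(\widehat\phi\circ\chi\circ\phi^{-1})'
	=\frac{\widehat\lambda\,(\widehat F\circ\chi)\,\chi'}{\lambda\,F}\circ\phi^{-1}
	=\frac{\widehat\lambda}{\lambda}\,\chi'\circ\phi^{-1},
\end{equation*}
which is increasing iff $\chi'$ is, iff $\chi$ is convex. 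Both `less risk-averse than' conditions thus reduce to convexity of $\chi$.

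The main obstacle, common to both parts, will be handling atoms, plateaus and gaps in $F$ and $\widehat F$: the derivative identities and the clean construction of $\chi$ would be immediate if $F,\widehat F$ were smooth and strictly increasing, but in general I must replace derivatives by difference quotients along $v(X),\widehat v(X)$ (via the fundamental theorem of calculus for absolutely continuous functions) and make $\chi$ in part \ref{item:mcs:v} robust to non-injectivity of $\widehat F$ on $\widehat v(X)$. The concentration hypotheses of \Cref{theorem:mcs} are essential for both: they force $F,\widehat F$ to be constant off $v(X),\widehat v(X)$, so that the pointwise conditions (RHR dominance, convexity of $\chi$) and the integral conditions (the compression inequality) correspond cleanly.
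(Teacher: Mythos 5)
Your treatment of part~\ref{item:mcs:F} is sound and is essentially the paper's own argument: both directions reduce, via \Cref{proposition:fixedF} and property~\ref{item:pratt:curv}\ref{item:pratt:curv:compress} of \hyperref[theorem:pratt]{Pratt's theorem}, to the cross-ratio inequality $\int_b^c\widehat{F}\cdot\int_a^b F\geq\int_a^b\widehat{F}\cdot\int_b^c F$ for $a<b<c$ in $v(X)$; the `if' half by sandwiching $\widehat{F}/F$ around its value at $b$ (using concentration to pass from $v(X)$ to $\co(v(X))$), the `only if' half by the limiting argument you describe, with the same case split on whether $v(X)\cap(k',\ell')$ is empty.

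Part~\ref{item:mcs:v} has a genuine gap at its central step. The identity $(\widehat{\phi}\circ\chi\circ\phi^{-1})'=(\widehat{\lambda}/\lambda)\,\chi'\circ\phi^{-1}$ is an infinitesimal statement, and your proposed repair---``replace derivatives by difference quotients''---does not go through, because the finite-difference analogue is not a clean cancellation. Writing $g\coloneqq\widehat{\phi}\circ\chi$ and $h\coloneqq\phi$, a change of variable using the concentration hypotheses (the paper's \Cref{observation:mcs_v_psi}) gives, for $a<b$ in $v(X)$,
\begin{equation*}
	\frac{g(b)-g(a)}{h(b)-h(a)}
	= \frac{\widehat{\lambda}}{\lambda}\cdot\frac{\chi(b)-\chi(a)}{b-a}\cdot\Delta(a,b),
	\qquad
	\Delta(a,b)
	\coloneqq \frac{F(a)+\int_{(a,b]}\frac{\chi(b)-\chi(m)}{\chi(b)-\chi(a)}\,F(\dd m)}{F(a)+\int_{(a,b]}\frac{b-m}{b-a}\,F(\dd m)} .
\end{equation*}
The correction factor $\Delta(a,b)$ equals $1$ only when $F$ places no mass strictly inside $(a,b)$ or $\chi$ is affine there; in general it depends jointly on the shape of $\chi$ and on where the $F$-mass sits inside the interval, so convexity of $\chi$ neither transfers to, nor follows from, monotonicity of the difference quotients of $g\circ h^{-1}$ by mere substitution. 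This is exactly where the real work lies: for the `if' direction the paper rewrites $\int(\widehat{\chi}\circ\psi)\,\dd H-\widehat{\chi}(\psi(\ell))$ as $\int\psi^+B$ via Fubini and integration by parts and exploits monotonicity of the right-derivative $\psi^+$ together with the sign structure of $B$ (plus a separate device, the factorisation $\psi=\psi_2\circ\psi_1$, to handle a possible discontinuity of the convex extension at $\sup v(X)$); for the `only if' direction it shows $\Delta\to1$ along carefully chosen shrinking sequences of pairs in $v(X)$ (a claim on limits of slopes followed by a bisection construction) before extracting the compression inequality for $\chi$. Neither idea appears in your outline, so the assertion that ``both conditions reduce to convexity of $\chi$'' is stated rather than proved. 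A smaller issue in the same part: in the `only if' direction, the existence of an \emph{increasing} $\chi$ with $\widehat{v}=\chi\circ v$ is not a consequence of $F\circ v=\widehat{F}\circ\widehat{v}$ alone when $\widehat{F}$ has a plateau meeting $\widehat{v}(X)$ in more than one point; you need the hypothesis that $\widehat{u}$ is less risk-averse than $u$ to pin down the ordinal alignment of $\widehat{v}$ with $v$.
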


\begin{namedthm}[\Cref*{remark:oo_X} {\normalfont(continued from \cpageref{remark:oo_X,remark:oo_X_result})}.]
	\label{remark:oo_X_mcs}
	The (Borel and) concentration assumptions mean that $F$ and $\widehat{F}$ both arise from $X$-valued outside options.%
		\footnote{The `if' half of part~\ref{item:mcs:v} remains true without these assumptions, with the same proof.}
	The supposition in part~\ref{item:mcs:v} means precisely that the shift from $F$ to $\widehat{F}$ holds fixed the distribution of this physical $X$-valued outside option, while changing the units of utility from $v$ to $\widehat{v}$. In case the alternatives are monetary ($X \subseteq \R$), what is held fixed is the extended CDF $G$ of the \emph{monetary} value of the outside option (which, recall, satisfies $F \circ v = G = \widehat{F} \circ \widehat{v}$). In general, the supposition in part~\ref{item:mcs:v} is equivalent to the existence of a \emph{single} probability measure $\mu$ on (some $\sigma$-algebra $\Sigma$ on) $X \cup \{\varnothing\}$ such that ($v$ and $\widehat{v}$ are $\Sigma$-measurable and) \emph{both} $(v,F)$ and $\bigl( \widehat{v}, \widehat{F} \bigr)$ satisfy \eqref{eq:F_from_mu} on~\cpageref{eq:F_from_mu}.
\end{namedthm}

In the special case of monetary alternatives and smooth primitives, \Cref{theorem:mcs} admits a simple and intuitive proof.

\begin{proof}[Proof of a special case of \Cref{theorem:mcs}]
	Assume that $X$ is an open convex subset of $\R$, that $v$ and $\widehat{v}$ are twice continuously differentiable with $v' > 0 < \widehat{v}'$, and that $F$ and $\widehat{F}$ are continuously differentiable. Let $G$ and $\widehat{G}$ be the (unique) extended CDFs concentrated on $X \union \{-\infty\}$ such that $G = F \circ v$ and $\widehat{G} = \widehat{F} \circ v$. Then by \Cref{proposition:fixedF} and the observation in \cref{sec:fixedF:monetary} (\cpageref{proposition:fixedF,sec:fixedF:monetary}), $u$ and $\widehat{u}$ are twice continuously differentiable with $u' > 0 < \widehat{u}'$,
	\begin{equation*}
		u''/u' = v''/v' + G'/G
		\quad \text{and} \quad
		\widehat{u}''/\widehat{u}' = \widehat{v}''/\widehat{v}' + \widehat{G}'/\widehat{G} .
	\end{equation*}
	Hence by \hyperref[theorem:pratt_diff]{Pratt's theorem (part~2}, \cpageref{theorem:pratt_diff}), 

	\begin{itemize}
	
		\item[\ref{item:mcs:F}] if $v=\widehat{v}$, then $\widehat{u}$ is less risk-averse than $u$ if and only if $\widehat{G}'/\widehat{G} \geq G'/G$, which is to say that $\widehat{G}$ is better than $G$ in the reverse hazard rate order on $X$, or equivalently (since $v$ and $\widehat{v}$ are strictly increasing) that $\widehat{F}$ is better than $F$ in the reverse hazard rate order on $v(X) \setminus \{\sup v(X)\}$, and

		\item[\ref{item:mcs:v}] if $G=\widehat{G}$, then $\widehat{u}$ is less risk-averse than $u$ if and only if $\widehat{v}''/\widehat{v}' \geq v''/v'$, i.e. if and only if $\widehat{v}$ is less risk-averse than $v$. \qedhere
	
	\end{itemize}
\end{proof}

This simple argument does not extend to the general case. The actual proof of \Cref{theorem:mcs} (\cref{app:mcs_pf}) is somewhat involved. It consists of four separate arguments (one each for the `if' and the `only if' parts of \ref{item:mcs:F} and \ref{item:mcs:v}), each of which makes intensive use of \hyperref[theorem:pratt]{Pratt's theorem}.

\section{What is special about outside options?}
\label{sec:special}

Adding an outside option to an expected-utility decision-maker's problem amounts to replacing each alternative $x$ with an ($x$-specific) lottery. In particular, in the outside-option model, each alternative $x$ is replaced by the lottery that returns the better of $x$ and a randomly-drawn outside option. Another model of this form is the \emph{background-risk model,} in which each alternative $x$ is replaced by the \emph{sum} of $x$ and randomly-drawn `wealth'.

By \Cref{theorem:ident}, adding an outside option reduces risk-aversion. In this section, we show that among all transformations that replace each alternative with a lottery, those that amount to adding an outside option are the \emph{only} ones that reduce risk-aversion. Adding background risk does \emph{not} generally decrease (or increase) risk-aversion, for example.%
	\footnote{This is known: see the references in \cref{footnote:special_background} (\cpageref{footnote:special_background}).}

To formalise these ideas, we consider a non-empty set $X$ of alternatives and a collection $(\mu_x)_{x \in X}$ of probability measures on (some $\sigma$-algebra on) $X$. When the decision-maker draws $x \in X$ from a lottery, the alternative which she actually takes home is a random draw from $\mu_x$. Thus if her `true' risk attitude is $v : X \to \R$, then her effective risk attitude is $x \mapsto \int v \dd \mu_x$. Given a class $\mathcal{V}$ of possible `true' risk attitudes $v$, our question is for which collections $(\mu_x)_{x \in X}$ it holds that for every $v \in \mathcal{V}$, $x \mapsto \int v \dd \mu_x$ is less risk-averse than $v$.

Two maps $v,\widehat{v} : X \to \R$ are called \emph{ordinally equivalent} if and only if for all $x,y \in X$, $v(x) \geq \mathrel{(>)} v(y)$ implies $\widehat{v}(x) \geq \mathrel{(>)} \widehat{v}(y)$. Since one map $X \to \R$ can be less risk-averse than another only if the two are ordinally equivalent,%
	\footnote{This is the `\ref{item:pratt:lra} implies \ref{item:pratt:curv}\ref{item:pratt:curv:ordequiv}' part of \hyperref[theorem:pratt]{Pratt's theorem} (\cpageref{theorem:pratt}).}
we restrict attention to classes $\mathcal{V}$ whose members are all mutually ordinally equivalent. We impose no further restrictions; thus the classes $\mathcal{V}$ that we consider are exactly the ordinal equivalence classes of maps $X \to \R$.

When considering any given such ordinal equivalence class $\mathcal{V}$, it is convenient to fix an arbitrary $v_0 \in \mathcal{V}$ and embed the alternatives $X$ in $\R$ via $x \mapsto v_0(x)$. Then $\mathcal{V}$ is the set of all strictly increasing functions $X \to \R$.

With these conventions, our question becomes: given a non-empty Borel set $X \subseteq \R$, for which collections $(G_x)_{x \in X}$ of CDFs concentrated on $X$ does it hold that for every strictly increasing $v : X \to \R$, the map $x \mapsto \int v \dd G_x$ is less risk-averse than $v$? The following theorem answers this question.

\begin{theorem}
	\label{theorem:special}
	For a non-empty Borel set $X \subseteq \R$ with $\sup X \notin X$ and a collection $(G_x)_{x \in X}$ of CDFs concentrated on $X$, the following are equivalent:

	\begin{enumerate}[label=(\alph*)]
		
		\item \label{item:special:lra} For any bounded and strictly increasing function $v : X \to \R$, the map $x \mapsto \int v \dd G_x$ is less risk-averse than $v$.
		
		\item \label{item:special:oo} 
		There exists an extended CDF $G$ concentrated on $X \cup \{-\infty\}$ with $G>0$ on $(\inf X,+\infty)$ such that for every bounded and strictly increasing function $v : X \to \R$, there exist $\alpha>0$ and $\beta \in \R$ such that
		\begin{equation*}
			\alpha \int v \dd G_x + \beta
			= \int \max\{v(x),v(y)\} G(\dd y)
			\quad \text{for every $x \in X$,}
		\end{equation*}
		where by convention $v(-\infty) \coloneqq -\infty$.

	\end{enumerate}
\end{theorem}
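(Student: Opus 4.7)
The direction (b) $\Rightarrow$ (a) is immediate from Theorem~\ref{theorem:ident}. For any bounded strictly increasing $v$, push $G$ forward along $v$ (setting $v(-\infty) := -\infty$) to obtain an extended CDF $F$ concentrated on $\cl(v(X)) \cup \{-\infty\}$ with $F > 0$ on $(\inftwo v(X), +\infty)$. A change of variables recasts the identity in~(b) as the assertion that $(v, F)$ is an outside-option representation of $u_v(x) := \int v \, \dd G_x$, up to the affine rescaling. Since $\max\{v(x), \cdot\}$ is $1$-Lipschitz in $v(x)$, $u_v$ is Lipschitz with respect to $v$ and bounded above, so Theorem~\ref{theorem:ident} yields~(a).

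For the converse (a) $\Rightarrow$ (b), my plan is to extract the outside-option distribution $G$ from a structural constancy forced by~(a), and then verify the identity. The constancy claim is: for every $y \in X$, the map $x \mapsto G_x((y, +\infty))$ is constant on $X \cap (-\infty, y]$. To prove it, fix a bounded strictly increasing reference $\psi \colon X \to \R$ (e.g., $\psi = \arctan$) and, for each $\delta > 0$, consider the bounded strictly increasing $v_{\delta, y}(z) := \delta \psi(z) + \1_{(y, +\infty)}(z)$. By (a) and the first part of \hyperref[theorem:pratt]{Pratt's theorem}, $u_{v_{\delta, y}} = \phi_\delta \circ v_{\delta, y}$ for some increasing convex $\phi_\delta$. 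Choosing $x_0 < x_1 \leq y < x_2$ in $X$ (such $x_2$ exists because $\sup X \notin X$) and applying the slope inequality from Pratt's characterisation, the slope from $v_{\delta, y}(x_0)$ to $v_{\delta, y}(x_1)$ has denominator $\delta [\psi(x_1) - \psi(x_0)]$, whereas the companion slope from $v_{\delta, y}(x_1)$ to $v_{\delta, y}(x_2)$ is bounded. Letting $\delta \searrow 0$: if $G_{x_1}((y, +\infty)) > G_{x_0}((y, +\infty))$, the former slope diverges to $+\infty$, violating convexity; if $G_{x_1}((y, +\infty)) < G_{x_0}((y, +\infty))$, then $u_{v_\delta}(x_1) < u_{v_\delta}(x_0)$ for small $\delta$ whereas $v_{\delta, y}(x_1) > v_{\delta, y}(x_0)$, violating ordinal equivalence. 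Hence equality must hold. Iterating at each threshold $y' \in X \cap [y, +\infty)$ upgrades the claim to a measure-valued equality $G_{x_0}|_{(y, +\infty)} = G_{x_1}|_{(y, +\infty)}$ for all $x_0, x_1 \in X \cap (-\infty, y]$.

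With the constancy in hand, define $G$ on $X \cup \{-\infty\}$ by $G|_{(y, +\infty) \cap X} := G_x|_{(y, +\infty)}$ for any $x \in X \cap (-\infty, y]$, and place the residual mass at $\{-\infty\}$. Then $G$ is an extended CDF, and $G > 0$ on $(\inf X, +\infty)$: if instead $G((-\infty, y_0]) = 0$ for some $y_0 > \inf X$, the constancy would force all $G_x$ with $x \leq y_0$ in $X$ to coincide, making $u_v$ constant on two distinct points and contradicting ordinal equivalence from \hyperref[theorem:pratt]{Pratt's theorem}. To verify the identity in~(b), apply the constancy with $y = x$ to obtain $\int_{(x, +\infty)} v \, \dd G_x = \int_{(x, +\infty) \cap X} v \, \dd G$. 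Combined with the direct calculation $\int \max\{v(x), v(y)\} \, \dd G(y) = v(x) G((-\infty, x]) + \int_{(x, +\infty) \cap X} v \, \dd G$, verifying~(b) reduces to showing that the residual expression $x \mapsto v(x)[G_x(\{x\}) - G((-\infty, x])] + \int_{(-\infty, x) \cap X} v \, \dd G_x$ is a positive-affine transform of $\int v \, \dd G_x$ in $x$ for each $v$. This final step uses \hyperref[theorem:pratt]{Pratt's theorem} on both $u_v$ and $\tilde u_v(x) := \int \max\{v(x), v(y)\} \, \dd G(y)$---each of which is less risk-averse than $v$ by~(a) and by the already-proved (b) $\Rightarrow$ (a), respectively---and matches the right-derivatives of the two convex representatives via the measure-valued constancy.

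The main obstacle is this last matching step: while the constancy cleanly pins down $G_x$ on $(x, +\infty)$, the $x$-dependence of $G_x$ on $(-\infty, x]$ requires careful control to confirm the positive-affine equivalence of $u_v$ and $\tilde u_v$ for every bounded strictly increasing $v$, presumably via additional step-like probes at thresholds near $x$.
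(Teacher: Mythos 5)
Your direction \ref{item:special:oo} $\Rightarrow$ \ref{item:special:lra} is fine and is exactly the paper's argument (push $G$ forward along $v$, change variables, invoke \Cref{theorem:ident}). Your constancy claim and its step-probe proof are also sound and parallel the paper's first step (the paper's Claim~1, which shows $G_x = G_y$ off $[x,y)$ using test functions $\1_A + \eps\pi\1_{\R\setminus A}$). The gap is in what you do with the constancy: the candidate $G$ you build from it is the \emph{wrong} distribution, and the final ``matching step'' you flag as delicate is in fact false for that choice.

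Concretely, the correct structure (which the paper isolates as an intermediate property) is $G_x = \lambda G\,\1_{[x,+\infty)} + (1-\lambda)H$: each $G_x$ decomposes into a component $\lambda G$ that is truncated at $x$ (the outside option) plus a common component $(1-\lambda)H$ that does not vary with $x$. Your constancy claim only pins down the upper tails --- it says $\dd G_x$ restricted to $(y,+\infty)$, which equals $\lambda\,\dd G + (1-\lambda)\,\dd H$ there, is independent of $x\leq y$. Your $G$ is therefore the \emph{mixture} $\lambda G + (1-\lambda)H$ (i.e.\ the upper envelope $R(y):=\sup_z G_z(y)$), not the outside-option distribution $G$ itself. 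With that choice one computes
\begin{equation*}
	\int \max\{v(x),v(y)\}\,\dd\bigl[\lambda G + (1-\lambda)H\bigr](y) - \int v\,\dd G_x
	= (1-\lambda)\int_{(-\infty,x]} \bigl[v(x)-v(y)\bigr] H(\dd y) ,
\end{equation*}
which is not an affine function of $\int v\,\dd G_x$ whenever $\lambda<1$ and $H$ is non-degenerate (take $H=\delta_{\min X}$: the right-hand side is proportional to $v(x)-v(\min X)$, and $v$ is not an affine transform of $x\mapsto\int v\,\dd G_x$ unless $G$ is degenerate). So your reduction to ``matching right-derivatives'' cannot succeed: $\tilde u_v$ and $u_v$ are genuinely different preferences for your $G$. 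What is missing is (i) the \emph{lower}-tail agreement $G_{x_0}=G_{x_1}$ on $(-\infty,x_0)$ for $x_0<x_1$, which identifies the common component $L(y):=\inf_z G_z(y)=(1-\lambda)H(y)$ and must be proved by separate probes (it does not follow from your upper-tail constancy); (ii) the subtraction $R-L$ and normalisation by $\lambda:=\sup(R-L)$ to obtain the true $G=(R-L)/\lambda$, whose contribution $\int v\,\dd H$ from the common part is then absorbed into the constants $\alpha=1/\lambda$, $\beta=(1-1/\lambda)\int v\,\dd H$; and (iii) a proof that $R-L$ is increasing (the paper's Claim~2, which needs another family of step-like test functions), without which $(R-L)/\lambda$ need not be a CDF at all.
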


Thus the only transformations replacing alternatives with lotteries which decrease risk-aversion are those that amount to adding an outside option.

The restriction in \Cref{theorem:special} to bounded $v : X \to \R$ is mild; its role is to ensure that all integrals are finite. The assumption that $\sup X \notin X$ is for simplicity; without it, \ref{item:special:oo} must be replaced with a slightly more complicated property, as we show in \cref{app:special_greatest}.

The fact that \ref{item:special:oo} implies \ref{item:special:lra} follows immediately from \Cref{theorem:ident} (and a change of variable). The converse does not: if \ref{item:special:lra} holds, then what \Cref{theorem:ident} (and a change of variable) delivers is, separately for each bounded and strictly increasing $v : X \to \R$, an extended CDF $G^v$ concentrated on $X \cup \{-\infty\}$ with $G^v>0$ on $(\inf X,+\infty)$ for which there are $\alpha>0$ and $\beta \in \R$ such that
\begin{equation*}
	\alpha \int v \dd G_x + \beta
	= \int \max\{v(x),v(y)\} G^v(\dd y)
	\quad \text{for every $x \in X$.}
\end{equation*}
The substance of \Cref{theorem:special} is the assertion that if \ref{item:special:lra} holds, then these $G^v$s can be chosen in a $v$-independent way. We prove this in \cref{app:special_pf}.

\begin{appendices}

\crefalias{section}{appsec}
\crefalias{subsection}{appsec}
\crefalias{subsubsection}{appsec}

\section{More on \texorpdfstring{\hyperref[theorem:pratt]{Pratt's theorem (part~1,} \cpageref{theorem:pratt})}{Pratt's theorem (part 1, p. \pageref{theorem:pratt})}}
\label{app:pratt_phis}

To see why property~\ref{item:pratt:trans} in \hyperref[theorem:pratt]{Pratt's theorem (part~1)} does not assert that $\phi$ can be chosen to be continuous, or to be strictly increasing, consider the following two examples. (Recall from \cpageref{definition:cotwo} the definition of `$\cotwo$', and that $A \subseteq \cotwo(A) \subseteq \co(A)$ for any set $A \subseteq \R$.)

\begin{example}
	\label{example:pratt_disc}
	Consider $X \coloneqq [0,1]$ and $u,v : X \to \R$, where $v$ is the identity, $u=v$ on $[0,1)$, and $u(1)=2$. Then $u$ is less risk-averse than $v$, but the only $\phi : \co(v(X)) \to \R$ that satisfies $u = \phi \circ v$ is discontinuous at $\max v(X) = 1$.
\end{example}

\begin{example}
	\label{example:pratt_str}
	Consider $X \coloneqq [1,2]$ and $u,v : X \to \R$, where $u$ is the identity, $v(1)=0$, and $v=u$ on $(1,2]$. Then $u$ is less risk-averse than $v$, but the only increasing $\phi : \co(v(X)) \to \R$ that satisfies $u = \phi \circ v$ is constant on $[\inf v(X), \inftwo v(X)] = [0,1]$.
\end{example}

\Cref{example:pratt_str} shows that the strict monotonicity of $\phi$ on $v(X)$ in property~\ref{item:pratt:trans} in \hyperref[theorem:pratt]{Pratt's theorem} cannot be strengthened to strict monotonicity on $\co(v(X))$. It can, however, be strengthened to strict monotonicity on $\cotwo(v(X))$:

\begin{lemma}
	\label{lemma:greatest_phi}
	Fix a non-empty set $X$ and functions $u,v : X \to \R$, and let $\Phi$ be the set of all increasing convex functions $\phi : \co(v(X)) \to \R$ that are strictly increasing on $v(X)$ and satisfy $u = \phi \circ v$. If $\Phi$ is not empty, then it has a pointwise greatest element, which is strictly increasing on $\cotwo(v(X))$ and affine on each maximal interval of $\co(v(X)) \setminus v(X)$.
\end{lemma}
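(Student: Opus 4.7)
The plan is to define $\phi^*$ as the pointwise supremum over $\Phi$, verify $\phi^* \in \Phi$ so that it is the greatest element, and then establish the two structural properties separately.

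First, set $\phi^*(\ell) \coloneqq \sup_{\phi \in \Phi} \phi(\ell)$ for each $\ell \in \co(v(X))$. For $\ell \in v(X)$, every $\phi \in \Phi$ assigns the common value $u(x)$ whenever $v(x) = \ell$; this value is unambiguous since strict monotonicity of any $\phi \in \Phi$ on $v(X)$ forces $v(x) = v(y)$ to imply $u(x) = u(y)$. Hence $\phi^* = u \circ v^{-1}$ on $v(X)$. For $\ell \in \co(v(X)) \setminus v(X)$, convexity of each $\phi \in \Phi$ produces chord upper bounds from values on $v(X)$ (boundary cases handled via monotonicity), so $\phi^*(\ell) < +\infty$. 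Since pointwise suprema preserve monotonicity and convexity, $\phi^*$ is increasing and convex; combined with $\phi^* = u \circ v^{-1}$ on $v(X)$ (giving strict monotonicity there and $u = \phi^* \circ v$), we conclude $\phi^* \in \Phi$, and by construction it is the pointwise greatest.

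For affinity on a maximal interval $I$ of $\co(v(X)) \setminus v(X)$, I would compare $\phi^*$ against the function $\psi$ that agrees with $\phi^*$ outside $I$ and equals, on $I$, the chord connecting the endpoint values (or one-sided limits at endpoints not belonging to $v(X)$). Convexity of $\phi^*$ makes $\phi^* \leq \psi$ on $I$, while the chord's slope, squeezed between the relevant one-sided derivatives of $\phi^*$ at the endpoints of $I$, ensures $\psi$ remains convex globally; since $\psi = \phi^*$ on $v(X)$, we have $\psi \in \Phi$, whence maximality of $\phi^*$ yields $\phi^* \geq \psi$ on $I$ and therefore $\phi^* = \psi$ is affine on $I$.

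For strict monotonicity on $\cotwo(v(X))$, the crucial observation is that any convex increasing function on an interval admits an initial (possibly empty) flat region: there is some $c$ such that $\phi^*$ is constant on $\{\ell \in \co(v(X)) : \ell \leq c\}$ and strictly increasing on $\{\ell \in \co(v(X)) : \ell \geq c\}$. Strict monotonicity of $\phi^*$ on $v(X)$ forces $c \leq \inftwo v(X)$, since any $k \in v(X)$ with $k > \inf v(X)$ satisfies $\phi^*(k) > \phi^*(\inf v(X))$, ruling $k$ out of the flat region; letting $k$ range through points of $v(X) \setminus \{\inf v(X)\}$ arbitrarily close to $\inftwo v(X)$ from above yields the bound. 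In the first case of the definition of $\cotwo$ ($\inf v(X) < \inftwo v(X) \notin v(X)$), the excluded portion $\co(v(X)) \setminus \cotwo(v(X)) = (\inf v(X), \inftwo v(X)]$ is precisely where any flat region can live, so $\phi^*$ is strictly increasing on the remainder. In the otherwise case, the flat region is degenerate (a single point or empty), and $\phi^*$ is strictly increasing on all of $\co(v(X)) = \cotwo(v(X))$. The main obstacle I foresee is the global convexity check in the affinity argument when a gap endpoint is a limit point of $v(X)$ not in $v(X)$, requiring careful bookkeeping with one-sided derivatives and continuity of convex functions in the interior of the domain.
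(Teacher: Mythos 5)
Your construction of $\phi^*$ as the pointwise supremum, the verification that $\phi^* \in \Phi$, and the chord-replacement argument for affineness all match the paper's proof and are sound. The gap is in the strict-monotonicity step, specifically in the case where $v(X)$ has a minimum $m_1 \coloneqq \inf v(X)$ and $m_2 \coloneqq \inf( v(X) \setminus \{m_1\} )$ satisfies $m_1 < m_2 \in v(X)$. There $\inftwo v(X) = m_1$ and $\cotwo(v(X)) = \co(v(X)) \supseteq [m_1,m_2]$, so you need the flat region to be degenerate; but your derivation of $c \leq \inftwo v(X)$ relies on points of $v(X) \setminus \{m_1\}$ approaching $\inftwo v(X)$ from above, and in this case the nearest such point is $m_2 > m_1 = \inftwo v(X)$, so the argument only delivers $c \leq m_2$. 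That bound is not enough: it leaves open that $\phi^*$ is constant on some $[m_1,c]$ with $c \in (m_1,m_2]$, which would contradict strict monotonicity on $\cotwo(v(X))$. Note that individual members of $\Phi$ genuinely can be flat there (take $v(X)=\{0,1\}$ and $\phi(\ell)=\max\{0,2\ell-1\}$), so ruling this out for the supremum requires an extra input beyond strict monotonicity on $v(X)$. The fix is exactly the paper's Case~4: invoke the affineness you have already established on the maximal gap $(m_1,m_2)$, which --- together with $\phi^*(m_1)<\phi^*(m_2)$ from strict monotonicity on $v(X)$, and continuity of the increasing convex $\phi^*$ at the left endpoint $m_1$ --- forces $\phi^*$ to be strictly increasing on all of $[m_1,m_2]$, hence $c = m_1$.

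A second, harmless imprecision: in the case $\inf v(X) < \inftwo v(X) \notin v(X)$ you assert that the flat region lives in the excluded portion $(\inf v(X), \inftwo v(X)]$, but the flat region is an \emph{initial} segment $[\inf v(X), c]$ of $\co(v(X))$ and therefore contains $\inf v(X)$, which belongs to $\cotwo(v(X))$. The conclusion survives because $\cotwo(v(X))$ meets $[\inf v(X), c]$ in at most the single point $\inf v(X)$ while every other point of $\cotwo(v(X))$ lies strictly above $c$ (recall $\co( v(X) \setminus \{\inf v(X)\} )$ excludes its infimum $\inftwo v(X) \geq c$ here); you should state this rather than claiming the flat region avoids $\cotwo(v(X))$. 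Apart from these points your route differs from the paper's only in packaging: the paper locates witnesses $k<\ell$ with $\phi(k)<\phi(\ell)$ by a direct four-case analysis, whereas you exploit the flat-region structure of increasing convex functions --- a cleaner organisation, but one that must still fall back on the affineness property in the problematic case above.
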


We use \Cref{lemma:greatest_phi} in several proofs: the strict monotonicity conclusion is used in the proofs of \hyperref[theorem:pratt]{Pratt's theorem} (\cref{app:pratt_pf}), \Cref{theorem:ident} (\cref{app:ident_pf}) and \Cref{theorem:mcs} (\cref{app:mcs_pf}), while the affineness conclusion is used in \cref{app:remark_oo_pf} to prove \Cref{remark:oo_X} on \cpageref{remark:oo_X_result}.

\begin{proof}
	Assume that $\Phi$ is non-empty. Note that there is exactly one function $\phi_0 : v(X) \to \R$ such that $u = \phi_0 \circ v$, and that this $\phi_0$ is strictly increasing. Define $\phi : \co(v(X)) \to \R$ by $\phi(k) \coloneqq \sup_{\psi \in \Phi} \psi(k)$ for each $k \in \co(v(X))$. By inspection, $\phi$ is increasing and convex, and is strictly increasing on $v(X)$ since $\phi=\phi_0$ on $v(X)$; thus $\phi \in \Phi$. Obviously $\phi \geq \psi$ for every $\psi \in \Phi$. Since each $\psi \in \Phi$ is convex and satisfies $\psi = \phi_0$ on $v(X)$, $\phi$ is affine on each maximal interval of $\co(v(X)) \setminus v(X)$.

	To show that $\phi$ is strictly increasing on $\cotwo(v(X))$, fix any $k'<\ell'$ in $\cotwo(v(X))$; we must show that $\phi(k') < \phi(\ell')$. It suffices to find $k<\ell$ in $\co(v(X))$ such that $k \leq k'$, $\ell \leq \ell'$, and $\phi(k)<\phi(\ell)$, since then
	\begin{equation*}
		\phi(\ell') - \phi(k')
		\geq (\ell'-k') \frac{\phi(\ell) - \phi(k)}{\ell-k}
		> 0 ,
	\end{equation*}
	where the weak inequality holds since $\phi$ is convex. Write $m_1 \coloneqq \inf v(X)$ and $m_2 \coloneqq \inf( v(X) \setminus \{m_1\} )$, and note that $k' \geq m_1 < \ell' \geq m_2$.

	We consider four cases. In the first three, we find $k<\ell$ in $v(X)$ such that $k \leq k'$ and $\ell \leq \ell'$; then $\phi(k)<\phi(\ell)$ since $\phi$ is strictly increasing on $v(X)$. In the final case, we directly choose $k<\ell$ in $\co(v(X))$ to satisfy $\phi(k)<\phi(\ell)$.

	\smallskip

	\noindent
	\emph{Case~1: $m_1 \notin v(X)$.} Here $k' > m_1$ and $\ell' > m_1 = m_2$, so choosing $k<\ell$ in $v(X)$ sufficiently close to $m_1$ ensures that $k \leq k'$ and $\ell \leq \ell'$.

	\smallskip

	\noindent
	\emph{Case~2: $v(X) \ni m_1 = m_2$.} 
	Here $m_1 \in \cl( v(X) \setminus \{m_1\} )$, so choosing $k \coloneqq m_1$ and $\ell \in v(X) \setminus \{m_1\}$ sufficiently close to $m_1$ ensures that $k \leq k'$ and $\ell \leq \ell'$.

	\smallskip

	\noindent
	\emph{Case~3: $m_1 < m_2 \notin v(X)$.}
	Here $\varnothing \neq \co(v(X)) \setminus \cotwo(v(X)) = (m_1,m_2]$, whence $m_1 \in v(X)$ and $\ell'>m_2$, so choosing $k \coloneqq m_1$ and $\ell \in v(X) \setminus [m_1,m_2]$ sufficiently close to $m_2$ ensures that $k \leq k'$ and $\ell \leq \ell'$.

	\smallskip

	\noindent
	\emph{Case~4: $m_1 < m_2 \in v(X)$.}
	Here $m_1 \in v(X)$, so $\phi(m_1)<\phi(m_2)$, which since $\phi$ is affine on $[m_1,m_2]$ implies that $\phi$ is strictly increasing on $[m_1,m_2]$, so that $k \coloneqq m_1$ and $\ell \coloneqq \min\{\ell',m_2\}$ satisfy $\phi(k)<\phi(\ell)$.
\end{proof}

\section{Proof of \texorpdfstring{\hyperref[theorem:pratt]{Pratt's theorem (part~1}, \cpageref{theorem:pratt})}{Pratt's theorem (part 1, p. \pageref{theorem:pratt})}}
\label{app:pratt_pf}

We shall prove that \ref{item:pratt:trans} implies \ref{item:pratt:lra} implies \ref{item:pratt:curv} implies \ref{item:pratt:trans}.

To prove that \ref{item:pratt:trans} implies \ref{item:pratt:lra}, suppose there exists an increasing convex function $\psi : \co(v(X)) \to \R$ that is strictly increasing on $v(X)$ and satisfies $u = \psi \circ v$. Then by \Cref{lemma:greatest_phi} in \cref{app:pratt_phis}, there exists an increasing convex function $\phi : \co(v(X)) \to \R$ that is strictly increasing on $\cotwo(v(X))$ and satisfies $u = \phi \circ v$. Fix an alternative $x \in X$ and a simple lottery $p \in \Delta^0(X)$, and suppose that $\int v \dd p \geq \mathrel{(>)} v(x)$; we must show that $\int u \dd p \geq \mathrel{(>)} u(x)$. If $\int v \dd p \in \cotwo(v(X))$, then
\begin{equation*}
	\int u \dd p
	= \int (\phi \circ v) \dd p
	\geq \phi\left( \int v \dd p \right)
	\geq \mathrel{(>)} \phi(v(x))
	= u(x) ,
\end{equation*}
where the first inequality holds (by Jensen's inequality) since $\phi$ is convex, and the second holds since $\phi$ is strictly increasing on $\cotwo(v(X)) \supseteq v(X) \ni v(x)$. If instead $\int v \dd p \notin \cotwo(v(X))$, then
\begin{equation*}
	\int v \dd p
	\in \co(v(X)) \setminus \cotwo(v(X))
	= \bigl( \inf v(X), \inf\bigl( v(X) \setminus \{\inf v(X)\} \bigr) \bigr] ,
\end{equation*}
so writing $Y \coloneqq \{ y \in X : v(y) = \inf v(X) \}$, we see that $\int v \dd p \geq \mathrel{(>)} v(x)$ implies $x \in Y$ (and $p(X \setminus Y)>0$), whence
\begin{align*}
	\int u \dd p
	={}& p(Y) \phi(v(x)) 
	+ \int_{X \setminus Y} (\phi \circ v) \dd p
	\\
	\geq \mathrel{(>)}{}& p(Y) \phi(v(x))
	+ p(X \setminus Y) \phi(v(x))
	= u(x) 
\end{align*}
since $\phi$ is strictly increasing on $v(X)$.

To prove that \ref{item:pratt:lra} implies \ref{item:pratt:curv}, suppose that $u$ is less risk-averse than $v$. It follows immediately (by considering degenerate lotteries $p \in \Delta^0(X)$) that property~\ref{item:pratt:curv}\ref{item:pratt:curv:ordequiv} holds.
To show that property~\ref{item:pratt:curv}\ref{item:pratt:curv:compress} holds, suppose toward a contradiction that it does not: there are $x,y,z \in X$ such that $u(x) < u(y) < u(z)$ and
\begin{equation*}
	\frac{u(z)-u(y)}{u(y)-u(x)}
	< \frac{v(z)-v(y)}{v(y)-v(x)} .
\end{equation*}
By replacing $u$ with $\alpha u + \beta$ for some $\alpha > 0$ and $\beta \in \R$ if necessary, we may assume without loss of generality that $u(x) = v(x)$ and $u(y) = v(y)$, so that $u(z) < v(z)$.
Define a simple lottery $p \in \Delta^0(X)$ by $p(x) \coloneqq [ u(z) - u(y) ] / [ u(z) - u(x) ]$, $p(z) \coloneqq 1-p(x)$, and $p(w) \coloneqq 0$ for every $w \in X \setminus \{x,z\}$.
Then $u(y) = \int u \dd p$ and
\begin{equation*}
	v(y)
	= u(y)
	= p(x) u(x) + p(z) u(z)
	< p(x) v(x) + p(z) v(z)
	= \int v \dd p ,
\end{equation*}
a contradiction with the fact that $u$ is less risk-averse than $v$.

To prove that \ref{item:pratt:curv} implies \ref{item:pratt:trans}, suppose that $u$ satisfies properties~\ref{item:pratt:curv}\ref{item:pratt:curv:ordequiv} and \ref{item:pratt:curv}\ref{item:pratt:curv:compress}; we must identify an increasing convex function $\phi : \co(v(X)) \to \R$ that is strictly increasing on $v(X)$ and satisfies $u = \phi \circ v$. By property~\ref{item:pratt:curv}\ref{item:pratt:curv:ordequiv}, there exists a strictly increasing $\psi : v(X) \to \R$ such that $u = \psi \circ v$.
Define $\widebar{\psi} : \cl(v(X)) \cap \co(v(X)) \to \R$ by
\begin{equation*}
	\widebar{\psi}(k)
	\coloneqq
	\begin{cases}
		\psi(k) & \text{if $k \in v(X)$} \\
		\lim_{\ell \to k} \psi(\ell) & \text{if $k \in [ \cl(v(X)) \cap \co(v(X)) ] \setminus v(X)$,} 
	\end{cases}
\end{equation*}
where the limit exists (in $\R$) by the monotonicity of $\psi$ and property~\ref{item:pratt:curv}\ref{item:pratt:curv:compress}.%
	\footnote{Fix any $k \in [ \cl(v(X)) \cap \co(v(X)) ] \setminus v(X)$. Since $k \in \cl(v(X))$, there is a monotone sequence in $v(X)$ that converges to $k$, which since $\psi$ is increasing implies that either the left-hand limit $\psi(k-)$ or the right-hand limit $\psi(k+)$ must exist in $\R \cup \{-\infty,+\infty\}$. Since $k \in \co(v(X))$, there are $m_0,m_1 \in v(X)$ such that $m_0 \leq k \leq m_1$. Since $\psi$ is increasing, it is bounded on $[m_0,m_1] \intersect v(X)$. Hence $\psi(k-)$ is finite if it exists, and likewise for $\psi(k+)$.

	It remains only to show that if $\psi(k-)$ and $\psi(k+)$ both exist, then they are equal. We have $\psi(k-) \leq \psi(k+)$ since $\psi$ is increasing. To show that $\psi(k-) \geq \psi(k+)$, suppose toward a contradiction that $\psi(k-) < \psi(k+)$, and fix an $m \in v(X)$ such that $k<m$. Then we can choose $\ell,k' \in v(X)$ arbitrarily close to $k$ and satisfying $\ell<k<k'<m$, and by doing so we may make $[ \psi(k') - \psi(\ell) ] / [k'-\ell]$ arbitrarily large. Since $\psi$ is increasing, it is bounded on a neighbourhood of $k$, so $\ell,k' \in v(X)$ can be chosen so that $[ \psi(m) - \psi(k') ] / [m-k']$ is bounded. Hence $\ell,k' \in v(X)$ can be chosen so that $[ \psi(k') - \psi(\ell) ] / [k'-\ell] > [ \psi(m) - \psi(k') ] / [m-k']$, a contradiction with property~\ref{item:pratt:curv}\ref{item:pratt:curv:compress}.}
Let $\phi$ be the (unique) function $\co(v(X)) \to \R$ that matches $\widebar{\psi}$ on $\cl(v(X)) \cap \co(v(X))$ and is affine on the closure of each maximal interval of $\co(v(X)) \setminus v(X)$.
Evidently $\phi$ is increasing, and $\phi$ is convex since by property~\ref{item:pratt:curv}\ref{item:pratt:curv:compress},
\begin{equation*}
	\frac{\phi(\ell)-\phi(k)}{\ell-k}
	\leq \frac{\phi(m)-\phi(\ell)}{m-\ell}
	\quad \text{for all $k < \ell < m$ in $\co(v(X))$.}
\end{equation*}
Since $\phi=\psi$ on $v(X)$, $\phi$ is strictly increasing on $v(X)$, and $u = \phi \circ v$.
\qed

\section{Proof of \texorpdfstring{\hyperref[theorem:pratt_diff]{Pratt's theorem (part~2}, \cpageref{theorem:pratt_diff})}{Pratt's theorem (part 2, p. \pageref{theorem:pratt_diff})}}
\label{app:pratt_diff_pf}

Note that property~\ref{item:pratt:curv}\ref{item:pratt:curv:ordequiv} holds since $u$ and $v$ are strictly increasing (as $u' > 0 < v'$). Hence by \hyperref[theorem:pratt]{Pratt's theorem (part~1}, \cpageref{theorem:pratt}), what must be shown is that property~\ref{item:pratt:curv}\ref{item:pratt:curv:compress} holds if and only if $u''/u' \geq v''/v'$.

Suppose that property~\ref{item:pratt:curv}\ref{item:pratt:curv:compress} holds. Then for any $x<y<z<w$ in $X$,
\begin{align*}
	\frac{u(w)-u(z)}{u(y)-u(x)}
	&= \frac{u(w)-u(z)}{u(z)-u(y)}
	\times \frac{u(z)-u(y)}{u(y)-u(x)}
	\\
	&\geq \frac{v(w)-v(z)}{v(z)-v(y)}
	\times \frac{v(z)-v(y)}{v(y)-v(x)}
	= \frac{v(w)-v(z)}{v(y)-v(x)} .
\end{align*}
Hence for each $x \in X$,
\begin{align*}
	\frac{u''(x)}{u'(x)}
	&= \left. \frac{\dd}{\dd y} \ln( u'(y) ) \right|_{y=x}
	= \lim_{\eps \searrow 0} \frac{1}{\eps}
	\ln\left( \frac{u'(x+\eps)}{u'(x)} \right)
	\\
	&= \lim_{\eps \searrow 0}
	\frac{1}{\eps} \ln\left( \frac
	{ \lim_{\delta \searrow 0} \frac{1}{\delta}
	\left[ u(x+\eps+\delta) - u(x+\eps) \right] }
	{ \lim_{\delta \searrow 0} \frac{1}{\delta}
	\left[ u(x+\delta) - u(x) \right] }
	\right)
	\\
	&= \lim_{\eps \searrow 0}
	\lim_{\delta \searrow 0} 
	\frac{1}{\eps} \ln\left( \frac
	{ u(x+\eps+\delta) - u(x+\eps) }
	{ u(x+\delta) - u(x) }
	\right) 
	\\
	&\geq \lim_{\eps \searrow 0}
	\lim_{\delta \searrow 0} 
	\frac{1}{\eps} \ln\left( \frac
	{ v(x+\eps+\delta) - v(x+\eps) }
	{ v(x+\delta) - v(x) }
	\right)
	= \frac{v''(x)}{v'(x)} .
\end{align*}

Conversely, suppose that $u''/u' \geq v''/v'$. Since $u' > 0 < v'$, we have
\begin{equation*}
	u'(w)
	= u'\left( y \right)
	\exp\left( \int_y^w \frac{u''}{u'} \right) 
	\quad \text{and} \quad
	v'(w)
	= v'\left( y \right)
	\exp\left( \int_y^w \frac{v''}{v'} \right) 
\end{equation*}
for any $y,w \in X$. Hence by the fundamental theorem of calculus, it holds for any $x<y<z$ in $X$ that
\begin{equation*}
	\frac{ u(z) - u(y) }{ u(y) - u(x) }
	= \frac
	{ \int_y^z
	\exp\left( \int_y^w \frac{u''}{u'} \right)
	\dd w }
	{ \int_x^y
	\exp\left( - \int_w^y \frac{u''}{u'} \right)
	\dd w } 
	\geq \frac
	{ \int_y^z
	\exp\left( \int_y^w \frac{v''}{v'} \right)
	\dd w }
	{ \int_x^y
	\exp\left( - \int_w^y \frac{v''}{v'} \right)
	\dd w } 
	= \frac{ v(z) - v(y) }{ v(y) - v(x) } .
\end{equation*}
\qed

\section{A differentiation lemma}
\label{app:lemma_chi}

The following lemma is used in the proofs below of \Cref{theorem:ident,theorem:mcs} and \Cref{proposition:fixedF,proposition:F_ident} (\cref{app:fixedF_pf,app:ident_pf,app:F_ident_pf,app:mcs_pf}).

\begin{lemma}
	\label{lemma:chi}
	Let $F$ be an extended CDF, and define $\chi : \R \to (-\infty,+\infty]$ by $\chi(\ell) \coloneqq \int \max\{\ell,k\} F(\dd k)$ for every $\ell \in \R$.

	\begin{itemize}
	
		\item If $\chi<+\infty$, then $\chi$ is 1-Lipschitz, and differentiable at each $\ell \in \R$ at which $F$ has no atom, with derivative $\chi'(\ell) = F(\ell)$.

		\item $\chi<+\infty$ holds if and only if there exists an $\ell \in \R$ such that $\chi(\ell)<+\infty$, which holds if and only if $\int_{(0,+\infty)} k F(\dd k) < +\infty$.
	
	\end{itemize}
\end{lemma}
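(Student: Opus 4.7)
The plan is to reduce both claims to two elementary pointwise inequalities for $k \mapsto \max\{\ell,k\}$ on $[-\infty,+\infty)$, integrated against $F$.

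For the first bullet, the 1-Lipschitz property follows from the pointwise bound $\lvert\max\{\ell,k\}-\max\{\ell',k\}\rvert \leq \lvert\ell-\ell'\rvert$, which is trivial by cases on where $k$ sits relative to $\ell,\ell'$ (and holds also at $k=-\infty$, where both sides equal $\lvert\ell-\ell'\rvert$); integrating against $F$ under the assumption $\chi<+\infty$ gives $\lvert\chi(\ell)-\chi(\ell')\rvert \leq \lvert\ell-\ell'\rvert$. For differentiability at $\ell$, I would fix $\ell<\ell'$ and use the sharper pointwise sandwich
\begin{equation*}
	(\ell'-\ell)\1_{[-\infty,\ell]}(k) \leq \max\{\ell',k\}-\max\{\ell,k\} \leq (\ell'-\ell)\1_{[-\infty,\ell']}(k) ,
\end{equation*}
again checked case-by-case. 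Since $\int\1_{[-\infty,m]}\dd F = F(m)$, dividing the integrated sandwich by $\ell'-\ell$ gives $F(\ell) \leq [\chi(\ell')-\chi(\ell)]/[\ell'-\ell] \leq F(\ell')$. Sending $\ell'\searrow\ell$ and invoking right-continuity of $F$ yields right derivative equal to $F(\ell)$; the symmetric argument with $\ell'<\ell$ yields left derivative equal to the left limit $F(\ell-)$. These agree exactly when $F$ has no atom at $\ell$, and then $\chi'(\ell)=F(\ell)$.

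For the second bullet, the key observation is that $\chi(0)=\int\max\{0,k\}F(\dd k)=\int_{(0,+\infty)}k F(\dd k)$, because $\max\{0,-\infty\}=0$ kills the $-\infty$-atom and $\max\{0,k\}=k^+$ on $\R$. Hence $\chi(0)<+\infty$ is literally $\int_{(0,+\infty)}k F(\dd k)<+\infty$. To promote ``$\chi$ finite somewhere'' to ``$\chi$ finite everywhere'', I would use the one-sided version of the Lipschitz bound, $\max\{\ell,k\}\leq\max\{\ell_0,k\}+\lvert\ell-\ell_0\rvert$, to obtain $\chi(\ell)\leq\chi(\ell_0)+\lvert\ell-\ell_0\rvert$ for every $\ell\in\R$ directly (not relying on the first bullet, which assumed $\chi<+\infty$ globally). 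These two facts together close the three-way equivalence.

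There is no serious obstacle here; the argument is a careful unpacking of standard properties of Lebesgue--Stieltjes integrals. The only bookkeeping worth highlighting is the treatment of the possible atom of $F$ at $-\infty$, which never causes difficulty because $\max\{\ell,-\infty\}=\ell$ is finite for each $\ell\in\R$, so that contribution to each integral is always harmless.
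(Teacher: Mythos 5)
Your proof is correct and follows essentially the same route as the paper's: both reduce everything to pointwise properties of $k \mapsto \max\{\ell,k\}$, with the paper justifying the interchange of derivative and integral via the bounded convergence theorem while your indicator sandwich plus right-continuity of $F$ does the same job more explicitly (and your computation $\chi(0) = \int_{(0,+\infty)} k\,F(\dd k)$ fills in what the paper dismisses as ``by inspection''). One small imprecision: as stated, your squeeze bounds the left difference quotient only between $F(\ell'')$ and $F(\ell)$, so it pins the left derivative down to $F(\ell-)$ only when $F$ has no atom at $\ell$ --- but that is exactly the case the lemma needs, so nothing is lost.
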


\begin{proof}
	The second claim (characterising $\chi<+\infty$) holds by inspection. For the first claim, note that for each $k \in [-\infty,+\infty)$, the map $\R \to \R$ given by $\ell \mapsto \max\{\ell,k\}$ is 1-Lipschitz, with derivative equal to $0$ on $(-\infty,k)$ and to $1$ on $(k,+\infty)$.
	Hence if $\chi<+\infty$, then $\chi$ is 1-Lipschitz, and differentiable at each $\ell \in \R$ at which $F$ has no atom, with the derivative being
	\begin{equation*}
		\chi'(\ell)
		= \int \left. \frac{\dd}{\dd m} \max\{m,k\} \right|_{m=\ell} F(\dd k)
		\\
		= \int \1_{[-\infty,\ell]} \dd F
		= F(\ell)  ,
	\end{equation*}
	where the first equality holds by the bounded convergence theorem (applicable since for every $k \in [-\infty,+\infty)$, $m \mapsto \max\{m,k\}$ is 1-Lipschitz).
\end{proof}

\section{Proof of \texorpdfstring{\Cref{proposition:fixedF} (\cpageref{proposition:fixedF})}{Proposition~\ref{proposition:fixedF} (p.~\pageref{proposition:fixedF})}}
\label{app:fixedF_pf}

The result is trivial if $v$ is constant. Assume for the remainder that $v$ is non-constant. (This implies that $\abs*{X} \geq 2$, of course.) Define $\chi : \co(v(X)) \to (-\infty,+\infty]$ by $\chi(\ell) \coloneqq \int \max\{\ell,k\} F(\dd k)$ for every $\ell \in \co(v(X))$. We shall make use of \Cref{lemma:chi} from \cref{app:lemma_chi}.

Evidently \ref{item:fixedF:rep} holds if and only if there exist $\alpha>0$ and $\beta \in \R$ such that $\psi \coloneqq (\chi-\beta)/\alpha$ satisfies $u = \psi \circ v$.
Thus if \ref{item:fixedF:rep} holds, then $\psi(v(x)) = u(x) < +\infty$ for some (indeed, any) $x \in X$, so $\chi<+\infty$ by \Cref{lemma:chi}, which by \Cref{lemma:chi} implies \ref{item:fixedF:trans} with $\lambda \coloneqq 1/\alpha$ and $\phi \coloneqq \psi$. Conversely, suppose that \ref{item:fixedF:trans} holds, and let $\alpha \coloneqq 1/\lambda$. Then by \Cref{lemma:chi}, $\chi<+\infty$, and the (real-valued) functions $\alpha \phi$ and $\chi$ are absolutely continuous (the former by hypothesis, the latter since it is Lipschitz) with a.e. equal derivatives. Hence by the (Lebesgue) fundamental theorem of calculus, there exists a $\beta \in \R$ such that $\alpha \phi + \beta = \chi$, which is to say that \ref{item:fixedF:rep} holds (with $\psi = \phi$).
\qed

\section{Proof of \texorpdfstring{\Cref{theorem:ident} (\cpageref{theorem:ident})}{Theorem~\ref{theorem:ident} (p.~\pageref{theorem:ident})}}
\label{app:ident_pf}

The result is trivial if $v$ is constant (note that in this case, \ref{item:ident:lra} implies that $u$ is constant). Assume for the remainder that $v$ is non-constant. (This implies that $\abs*{X} \geq 2$, of course.) We shall make use of \Cref{lemma:greatest_phi,lemma:chi} from \cref{app:pratt_phis,app:lemma_chi}.

Suppose that \ref{item:ident:oo} holds. Then $u = \phi \circ v$ where $\phi \coloneqq (\chi-\beta)/\alpha$ and $\chi(\ell) \coloneqq \int \max\{\ell,k\} F(\dd k)$ for every $\ell \in \co(v(X))$. By \hyperref[theorem:pratt]{Pratt's theorem} (\cpageref{theorem:pratt}), it suffices to show that $\phi$ is increasing and convex, and strictly increasing on $v(X)$. That $\phi$ is increasing and convex follows from the fact that $\chi$ is, since for each $k \in [-\infty,+\infty)$, $\ell \mapsto \max\{\ell,k\}$ is increasing and convex.

To show that $\phi$ is strictly increasing on $v(X)$, note that since $u<+\infty$, we have $\phi<+\infty$, and thus $\chi<+\infty$ by \Cref{lemma:chi}. Hence by \Cref{lemma:chi}, $\phi$ is Lipschitz continuous with $\phi' = \alpha^{-1} F$ a.e., so by the (Lebesgue) fundamental theorem of calculus, for any $k<\ell$ in $\cotwo(v(X)) \supseteq v(X)$, $\phi(\ell)-\phi(k) = \alpha^{-1} \int_k^\ell F > 0$ since $F>0$ on $(\inftwo v(X), +\infty )$.

For the converse, suppose that \ref{item:ident:lra} holds. Then by \hyperref[theorem:pratt]{Pratt's theorem} and \Cref{lemma:greatest_phi}, there exists an increasing convex $\phi : \co(v(X)) \to \R$ that is strictly increasing on $\cotwo(v(X))$ and satisfies $u = \phi \circ v$. Since $u$ is Lipschitz with respect to $v$, $\phi$ is Lipschitz continuous. Let $\phi^+ : \co(v(X)) \setminus \{\sup v(X)\} \to \R$ denote the right-hand derivative of $\phi$; it is well-defined since $\phi$ is convex, and finite (indeed, bounded) since $\phi$ is Lipschitz continuous. Let $\lambda \coloneqq \lim_{k \nearrow \sup v(X)} \phi^+(k)$. Note that $\lambda>0$,%
	\footnote{\label{footnote:oo_pf_0}Since $\phi$ is increasing, $\lambda<0$ is impossible, and if $\lambda=0$ then $\phi^+=0$ on $\co(v(X)) \setminus \{\sup v(X)\}$, in which case ($\phi$ being Lipschitz continuous) we would have for any $k<\ell$ in $\co(v(X))$ that $\phi(\ell)-\phi(k) = \int_k^\ell \phi^+ = 0$ by the (Lebesgue) fundamental theorem of calculus, a contradiction with the fact that $\phi$ is strictly increasing on $\cotwo(v(X))$.}
and that $\lambda<+\infty$ since $\phi$ is Lipschitz continuous.

Define $F : \R \to \R$ by $F \coloneqq \lim_{k \searrow \inf v(X)} \phi^+(k)$ on $(-\infty,\inf v(X)]$, $F \coloneqq \phi^+ / \lambda$ on $\co(v(X)) \setminus \{ \sup v(X) \}$, and $F \coloneqq 1$ on $[\sup v(X),+\infty)$. Since $\phi$ is increasing and convex, $F$ is an extended CDF: it is non-negative, increasing, right-continuous \parencite[by Theorem~24.1 in][]{Rockafellar1970}, and has $\lim_{k \nearrow +\infty} F(k) = 1$. Since $\sup v(X)<+\infty$ (as $v$ is bounded above) and $F = 1$ on $[\sup v(X),+\infty)$, $\int_{(0,+\infty)} k F(\dd k) \leq \sup v(X) < +\infty$. Hence by \Cref{proposition:fixedF}, there exist $\alpha>0$ and $\beta \in \R$ such that
\begin{equation*}
	\alpha u(x) + \beta
	= \int \max\{v(x),k\} F(\dd k)
	\quad \text{for every $x \in X$.}
\end{equation*}
Finally, since $\phi$ is Lipschitz continuous and is strictly increasing on $\cotwo(v(X))$, we have $F>0$ on $(\inftwo v(X),+\infty)$.%
	\footnote{The argument is the same as that in \cref{footnote:oo_pf_0}.}
This establishes \ref{item:ident:oo}.
\qed

\section{Proof of \texorpdfstring{{\hyperref[remark:oo_X_result]{\Cref*{remark:oo_X}} on \cpageref{remark:oo_X_result}}}{Remark~\ref{remark:oo_X} on p. \pageref{remark:oo_X_result}}}
\label{app:remark_oo_pf}

Suppose that property~\ref{item:ident:lra} in \Cref{theorem:ident} (\cpageref{item:ident:lra}) holds; we shall deduce property~\ref{item:ident:oo_supp}. By \hyperref[theorem:pratt]{Pratt's theorem}, the set $\Phi$ defined in \Cref{lemma:greatest_phi} (\cref{app:pratt_phis}) is non-empty, so by \Cref{lemma:greatest_phi}, $\Phi$ has a pointwise greatest element $\phi$. Now follow the proof in \cref{app:ident_pf} of the `\ref{item:ident:lra} implies \ref{item:ident:oo}' part of \Cref{theorem:ident}: by \Cref{proposition:fixedF} (\cpageref{proposition:fixedF}), property~\ref{item:ident:oo} in \Cref{theorem:ident} (\cpageref{item:ident:oo}) holds for the extended CDF $F$ that is constant on $(-\infty,\inf v(X)]$, proportional to the right-hand derivative $\phi^+$ on $\co(v(X)) \setminus \{ \sup v(X) \}$, and equal to $1$ on $[\sup v(X),+\infty)$. For each maximal interval $I$ of $\co(v(X)) \setminus v(X)$, $\phi^+=0$ on $\interior(I)$ by \Cref{lemma:greatest_phi}, so $\int_{\co(v(X)) \setminus \cl(v(X))} \dd F = 0$. Hence
\begin{multline*}
	\int_{\cl(v(X)) \cup \{-\infty\}} \dd F
	\\
	= 1 - \int_{\co(v(X)) \setminus \cl(v(X))} \dd F
	- \int_{(-\infty,\inf v(X)]} \dd F
	- \int_{[\sup v(X),+\infty)} \dd F
	= 1 . 
\end{multline*}
\qed

\section{Proof of \texorpdfstring{\Cref{proposition:F_ident} (\cpageref{proposition:F_ident})}{Theorem~\ref{proposition:F_ident} (p. \pageref{proposition:F_ident})}}
\label{app:F_ident_pf}

We will use the following simple result.

\begin{observation}
	\label{observation:F_integrals}
	$\int_{(-\infty,0]} k F(\dd k) > -\infty$ if and only if $\int_{-\infty}^0 F < +\infty$.
\end{observation}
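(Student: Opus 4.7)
The plan is to relate the two quantities via a Tonelli swap. For $k \in (-\infty, 0]$, write $-k = \int_0^\infty \1_{\{t < -k\}} \dd t$, substitute into the integral, and exchange the order of integration:
\[
	-\int_{(-\infty, 0]} k F(\dd k)
	= \int_{(-\infty, 0]} \int_0^\infty \1_{\{t < -k\}} \dd t \, F(\dd k)
	= \int_0^\infty F((-\infty, -t)) \dd t ,
\]
where $F((-\infty, -t))$ denotes the $F$-measure of the open ray, which equals $\lim_{r \nearrow -t} F(r) - \alpha$ with $\alpha \coloneqq \lim_{s \searrow -\infty} F(s)$.

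Next, since $F$ is monotone (hence continuous off a countable set), its left-hand limit coincides with $F$ itself Lebesgue-almost everywhere. A change of variable $s = -t$ then delivers the key identity
\[
	-\int_{(-\infty, 0]} k F(\dd k) = \int_{-\infty}^0 [F(s) - \alpha] \dd s .
\]
From this identity the biconditional follows at once. If $\int_{-\infty}^0 F < +\infty$, then $\alpha = 0$ (otherwise $F \geq \alpha > 0$ throughout $(-\infty, 0]$ would force the integral to diverge), and the identity collapses to $-\int_{(-\infty, 0]} k F(\dd k) = \int_{-\infty}^0 F$, a finite quantity. Conversely, the paper's convention for integrals against extended CDFs assigns $-\infty$ to $\int_{(-\infty, 0]} k F(\dd k)$ whenever $\alpha > 0$, so $\int_{(-\infty, 0]} k F(\dd k) > -\infty$ likewise forces $\alpha = 0$, and the identity again yields the desired finiteness of $\int_{-\infty}^0 F$.

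The main—indeed only—subtle point is the bookkeeping around the possible atom of $F$ at $-\infty$; once the case $\alpha > 0$ is dispatched (both sides of the biconditional are then simultaneously false), the argument reduces to a single application of Tonelli together with a.e.\ continuity of the monotone function $F$.
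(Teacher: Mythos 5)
Your argument is correct, and it takes a genuinely different route from the paper's. The paper integrates by parts over truncated intervals, obtaining $\int_{(-\ell,0]} k F(\dd k) = \ell F(-\ell) - \int_{-\ell}^0 F$ and letting $\ell \to +\infty$: one direction then follows by discarding the boundary term, but the converse needs a separate subsequence-and-contradiction argument showing that $\int_{(-\infty,0]} k F(\dd k) > -\infty$ forces $\ell F(-\ell) \to 0$. Your Tonelli computation instead delivers the exact identity $-\int_{(-\infty,0]} k F(\dd k) = \int_{-\infty}^0 (F - \alpha)$ as an equality in $[0,+\infty]$ (with $\alpha \coloneqq \lim_{k \searrow -\infty} F(k)$), with no truncation and hence no boundary term to control, so both implications are immediate once $\alpha = 0$ is in hand; this is the cleaner and more self-contained of the two arguments, at the cost of not reusing the integration-by-parts tool that the paper deploys again immediately afterwards to compute $\chi(\ell) = m - \int_\ell^0 F$. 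One caveat: your converse leans on ``the paper's convention'' that $\int_{(-\infty,0]} k F(\dd k) = -\infty$ whenever $\alpha > 0$. That is the right reading---the observation, and its use in \Cref{proposition:F_ident}, would fail under the alternative reading in which the integral simply ignores the atom at $-\infty$---but the paper defines $\int g \dd F$ only for bounded-below $g$ and never states this convention explicitly, so you are (correctly) supplying an interpretation rather than citing one; the paper's own proof relies on the same reading just as silently, indeed without mentioning $\alpha$ at all.
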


\begin{proof}[Proof of \Cref{observation:F_integrals}]
	\renewcommand{\qedsymbol}{$\square$}
	Write $I \coloneqq \int_{(-\infty,0]} k F(\dd k)$. We have
	\begin{equation*}
		\int_{(-\ell,0]} k F(\dd k)
		= \ell F(-\ell) - \int_{-\ell}^0 F
		\geq - \int_{-\ell}^0 F
		\quad \text{for each $\ell \in [0,+\infty)$}
	\end{equation*}
	by integration by parts,%
		\footnote{See e.g. Theorem~18.4 in \textcite{Billingsley1995}.}
	so $\int_{-\infty}^0 F < +\infty$ implies $I > -\infty$, while if $I > -\infty$ and $\lim_{\ell \to +\infty} \ell F(-\ell) = 0$ then $\int_{-\infty}^0 F < +\infty$. It remains only to show that $I > -\infty$ implies $\lim_{\ell \to +\infty} \ell F(-\ell) = 0$, or equivalently that $\limsup_{\ell \to +\infty} \ell F(-\ell) > 0$ implies $I = -\infty$. To that end, suppose that there exists a sequence $(\ell_n)_{n \in \N}$ in $\R$ along which $\ell_n \to +\infty$ and $\ell_n F(-\ell_n) \to \delta \neq 0$ as $n \to +\infty$. For all $n \in \N$ large enough that $-\ell_n < 0$, we have
	\begin{equation*}
		I
		= \int_{(-\infty,-\ell_n]} k F(\dd k)
		+ \int_{(-\ell_n,0]} k F(\dd k)
		\leq -\ell_n F(-\ell_n)
		+ \int_{(-\ell_n,0]} k F(\dd k) ,
	\end{equation*}
	so letting $n \to +\infty$ yields $I \leq \delta + I$, which since $\delta \neq 0$ and $I \leq 0$ implies $I=-\infty$, as desired.
\end{proof}
\renewcommand{\qedsymbol}{$\blacksquare$}

Let $m \coloneqq \int_{(0,+\infty)} k F(\dd k)$, and define $\chi : \R \to (-\infty,+\infty]$ by $\chi(\ell) \coloneqq \int \max\{\ell,k\} F(\dd k)$ for every $\ell \in \R$. By \Cref{lemma:chi} in \cref{app:lemma_chi}, $\chi<+\infty$ holds if and only if there exists an $\ell \in \R$ such that $\chi(\ell)<+\infty$, which holds if and only if $m<+\infty$; furthermore, if $\chi<+\infty$, then $\chi$ is absolutely continuous with $\chi' = F$ a.e. If $\chi<+\infty$, then for each $\ell \in \R$,
\begin{align*}
	\chi(\ell)
	= F(\ell) \ell
	+ \int_{(\ell,+\infty)} k F(\dd k)
	&=
	\begin{cases}
		m + F(\ell) \ell
		+ \int_{(\ell,0]} k F(\dd k)
		& \text{if $\ell \leq 0$}
		\\
		m + F(\ell) \ell
		- \int_{(0,\ell]} k F(\dd k)
		& \text{if $\ell > 0$}
	\end{cases}
	\\
	&= m - \int_\ell^0 F ,
\end{align*}
where the final equality holds by integration by parts.%
	\footnote{See e.g. Theorem~18.4 in \textcite{Billingsley1995}.}
Hence if $\chi<+\infty$, then $\chi$ is strictly increasing on $J \coloneqq \{ k \in \R : F(k) > 0 \}$, $\chi$ is unbounded above,%
	\footnote{Since $\lim_{k \nearrow +\infty} F(k) = 1$, there exists a $k \in \R$ such that $F(k) \geq 1/2$, so for each $\ell>\min\{0,k\}$, $\chi(\ell) = m + \int_{(0,\ell]} F \geq m + (\ell - k)/2 \to +\infty$ as $\ell \to +\infty$.}
and $\chi$ is bounded below if and only if $\int_{-\infty}^0 F < +\infty$, which by \Cref{observation:F_integrals} holds if and only if $\int_{(-\infty,0]} k F(\dd k) > -\infty$.

Suppose \ref{item:F_ident:oo} holds. Then $\chi \circ v = \alpha u + \beta < +\infty$ since $u$ is real-valued, so $m < +\infty$. If $\int_{(-\infty,0]} k F(\dd k) > -\infty$, then $u = ( \chi \circ v - \beta )$ is bounded below since $\chi$ is.

Suppose that \ref{item:F_ident:finite} holds. If $\int_{(-\infty,0]} k F(\dd k) > -\infty$, then both $\chi$ and $u$ are bounded below, and we let $\beta \coloneqq \inf u(X) - \inf \chi(J) - 1$; otherwise let $\beta \coloneqq 0$. Define $\phi : J \to \R$ by $\phi \coloneqq \chi + \beta$. Then $\phi$ is strictly increasing and unbounded above, and $u(X) \subseteq \phi(J)$ by definition of $\beta$, so we may define $v : X \to \R$ by $v \coloneqq \phi^{-1} \circ u$. Then $u = \phi \circ v$, and $\phi$ is absolutely continuous with $\phi' = F$ a.e., so by \Cref{proposition:fixedF}, there exist $\alpha>0$ and $\beta \in \R$ such that
\begin{equation*}
	\alpha u(x) + \beta
	= \int \max\{v(x),k\} F(\dd k)
	\quad \text{for every $x \in X$.}
	\pushQED{\qed}\qedhere
\end{equation*}

\section{Proof of \texorpdfstring{\Cref{theorem:mcs} (\cpageref{theorem:mcs})}{Theorem~\ref{theorem:mcs} (p. \pageref{theorem:mcs})}}
\label{app:mcs_pf}

Define $\chi : \co(v(X)) \to (-\infty,+\infty]$ by $\chi(\ell) \coloneqq \int \max\{\ell,k\} F(\dd k)$ for every $\ell \in \co(v(X))$. Since $u$ is real-valued, $\chi<+\infty$, so by \Cref{lemma:chi} in \cref{app:lemma_chi} and the (Lebesgue) fundamental theorem of calculus, $\chi(m)-\chi(\ell) = \int_\ell^m F$ for all $\ell<m$ in $\co(v(X))$. Thus $\chi$ is strictly increasing on $\cotwo(v(X))$ since $F>0$ on $(\inftwo v(X),+\infty)$, and if $F(\inf v(X))>0$ then $\chi$ is strictly increasing on $\co(v(X))$. Hence if $\cotwo(v(X))=\co(v(X))$ or $F(\inf v(X))>0$, then $\chi$ has a well-defined inverse $\chi^{-1} : \chi(\co(v(X))) \to \co(v(X))$; otherwise, $\chi$ has a well-defined inverse $\chi^{-1} : \chi(\cotwo(v(X))) \to \co(v(X))$, which is in fact a map $\chi(\co(v(X))) \to \co(v(X))$ since $\chi(\cotwo(v(X))) = \chi(\co(v(X)))$, as in this case $\chi$ is (constant and) equal to $\chi(\inf v(X))$ on $\co(v(X)) \setminus \cotwo(v(X))$ and $\inf v(X) \in \cotwo(v(X))$. Note that $\chi^{-1}$ is strictly increasing and onto $\cotwo(v(X))$, and that the image $\chi(\co(v(X)))$ is convex (since $\chi$ is continuous). Similarly, define $\widehat{\chi} : \co(\widehat{v}(X)) \to \R$ by $\widehat{\chi}(\ell) \coloneqq \int \max\{\ell,k\} \widehat{F}(\dd k)$ for every $\ell \in \co(\widehat{v}(X))$; it has the same properties.

\paragraph{`If' half of part~\ref{item:mcs:F}:}
Suppose that $\widehat{v}=v$ and that $\widehat{F}$ is better than $F$ in the reverse hazard rate order on $v(X) \setminus \{\sup v(X)\}$. Since $\widehat{F}$ and $F$ are both concentrated on $v(X) \cup \{-\infty\}$, $\widehat{F}$ is better than $F$ in the reverse hazard rate order on $\co(v(X)) \setminus \{\sup v(X)\}$. Observe that for all $k,\ell,m \in \co(v(X))$ such that $\chi(k) < \chi(\ell) < \chi(m)$, we have $k < \ell < m$ since $\chi$ is increasing, so that $\widehat{F}(\ell) > 0$ and 
\begin{equation*}
	\int_k^\ell F
	\geq \frac{F(\ell)}{\widehat{F}(\ell)} \int_k^\ell \widehat{F}
	\quad \text{and} \quad
	\int_\ell^m F
	\leq \frac{F(\ell)}{\widehat{F}(\ell)} \int_\ell^m \widehat{F} 
\end{equation*}
since $\widehat{F}$ is better than $F$ in the reverse hazard rate order on $\co(v(X)) \setminus \{\sup v(X)\}$, and thus
\begin{equation*}
	\frac{\widehat{\chi}(m) - \widehat{\chi}(\ell)}
	{\widehat{\chi}(\ell) - \widehat{\chi}(k)}
	= \frac{\int_\ell^m \widehat{F}}
	{\int_k^\ell \widehat{F}}
	\geq \frac{\int_\ell^m F}{\int_k^\ell F}
	= \frac{\chi(m)-\chi(\ell)}{\chi(\ell)-\chi(k)}.
\end{equation*}
Then by viewing $\co(v(X))$ as a set of alternatives and $\chi,\widehat{\chi} : \co(v(X)) \to \R$ as risk attitudes (and recalling that $\chi(\co(v(X))$ is convex), we see from \hyperref[theorem:pratt]{Pratt's theorem} (\cpageref{theorem:pratt}) that there exists a convex and strictly increasing map $\psi : \chi(\co(v(X))) \to \R$ such that $\widehat{\chi} = \psi \circ \chi$.

By hypothesis, there exist strictly increasing affine maps $\lambda,\widehat{\lambda} : \R \to \R$ such that $\lambda \circ u = \chi \circ v$ and $\widehat{\lambda} \circ \widehat{u} = \widehat{\chi} \circ \widehat{v}$. Hence $\widehat{\lambda} \circ \widehat{u} = \psi \circ \chi \circ v$ and $v = \chi^{-1} \circ \lambda \circ u$, so $\widehat{u} = \phi \circ u$ where $\phi \coloneqq \widehat{\lambda}^{-1} \circ \psi \circ \lambda$. The map $\phi : \co(u(X)) \to \R$ is strictly increasing since $\widehat{\lambda}$, $\psi$, and $\lambda$ are, and convex since $\psi$ is convex and $\widehat{\lambda}$ and $\lambda$ are affine. Hence $\widehat{u}$ is less risk-averse than $u$ by \hyperref[theorem:pratt]{Pratt's theorem}.

\paragraph{`Only if' half of part~\ref{item:mcs:F}:}
Suppose that $\widehat{v}=v$ and that $\widehat{u}$ is less risk-averse than $u$. Begin with a claim.

\begin{claim}
	\label{claim:mcs_F_onlyif}
	For any $k < \ell < m$ in $v(X)$, $\int_k^\ell F > 0 < \int_\ell^m F$ and
	\begin{equation*}
		\frac{\int_\ell^m \widehat{F}}{\int_\ell^m F}
		\geq \frac{\int_k^\ell \widehat{F}}{\int_k^\ell F} .
	\end{equation*}
\end{claim}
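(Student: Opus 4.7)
The plan is to reduce the claim to \hyperref[theorem:pratt]{Pratt's theorem}, applied to $u$ and $\widehat{u}$, after translating integrals of $F$ and $\widehat{F}$ into differences of $\chi$ and $\widehat{\chi}$---which are in turn affinely related to $u$ and $\widehat{u}$ by the hypotheses (recall that in part~\ref{item:mcs:F} we assume $\widehat{v}=v$, so $\widehat{v}(X)=v(X)$ and, by the same argument given for $\chi$, $\widehat{\chi}$ is strictly increasing on $\cotwo(v(X))$).

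First, I would invoke the identity $\chi(m)-\chi(\ell) = \int_\ell^m F$ (already established in the first paragraph of this proof for $\ell<m$ in $\co(v(X))$) together with its counterpart for $\widehat{\chi}$ and $\widehat{F}$ to rewrite each of the four integrals as a difference. Since $v(X)\subseteq\cotwo(v(X))$ and both $\chi$ and $\widehat{\chi}$ are strictly increasing on $\cotwo(v(X))$, the hypothesis $k<\ell<m$ in $v(X)$ yields $\chi(k)<\chi(\ell)<\chi(m)$ and $\widehat{\chi}(k)<\widehat{\chi}(\ell)<\widehat{\chi}(m)$, which establishes $\int_k^\ell F > 0 < \int_\ell^m F$ (and likewise for $\widehat{F}$).

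For the main inequality, I would pick $x,y,z\in X$ with $v(x)=k$, $v(y)=\ell$, $v(z)=m$ (which exist since $k,\ell,m\in v(X)$). The affine relations $\alpha u+\beta = \chi\circ v$ and $\widehat{\alpha}\widehat{u}+\widehat{\beta} = \widehat{\chi}\circ v$ then translate the two strict orderings above into $u(x)<u(y)<u(z)$ and $\widehat{u}(x)<\widehat{u}(y)<\widehat{u}(z)$. Because $\widehat{u}$ is less risk-averse than $u$, property~\ref{item:pratt:curv}\ref{item:pratt:curv:compress} of \hyperref[theorem:pratt]{Pratt's theorem} gives
\begin{equation*}
\frac{\widehat{u}(z)-\widehat{u}(y)}{\widehat{u}(y)-\widehat{u}(x)} \geq \frac{u(z)-u(y)}{u(y)-u(x)} .
\end{equation*}
The positive affine factors cancel when this is rewritten in terms of $\chi$ and $\widehat{\chi}$, and the integral identities of the first step then deliver
\begin{equation*}
\frac{\int_\ell^m \widehat{F}}{\int_k^\ell \widehat{F}} \geq \frac{\int_\ell^m F}{\int_k^\ell F} ,
\end{equation*}
which, after clearing the (positive) denominators and rearranging, is the displayed inequality in the claim.

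I do not anticipate any real obstacle: the substantive work (the integral identity and the strict-monotonicity of $\chi$ and $\widehat{\chi}$ on $\cotwo(v(X))$) has already been done in the preceding paragraphs. The only steps requiring a bit of care are (i) checking that $k,\ell,m$ all lie in the domain of strict monotonicity---immediate since $v(X)\subseteq\cotwo(v(X))$---and (ii) keeping the direction of the Pratt inequality straight, given that here $\widehat{u}$ (not $u$) is the less risk-averse preference.
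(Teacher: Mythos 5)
Your proposal is correct and follows essentially the same route as the paper: both reduce the claim to the identity $\chi(m)-\chi(\ell)=\int_\ell^m F$ (and its hatted counterpart) together with the Pratt cross-ratio inequality $\frac{\widehat{\chi}(m)-\widehat{\chi}(\ell)}{\widehat{\chi}(\ell)-\widehat{\chi}(k)}\geq\frac{\chi(m)-\chi(\ell)}{\chi(\ell)-\chi(k)}$. Your derivation of that inequality is in fact a slight shortcut: you apply property~\ref{item:pratt:curv}\ref{item:pratt:curv:compress} directly to $u$ and $\widehat{u}$ and exploit the affine-invariance of the cross-ratio, whereas the paper first builds an increasing convex $\phi$ with $\widehat{\chi}=\phi\circ\chi$ on $v(X)$ and then re-invokes \hyperref[theorem:pratt]{Pratt's theorem} for $\chi,\widehat{\chi}$ viewed as risk attitudes---the two arguments are interchangeable here.
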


\begin{proof}[Proof of the {\hyperref[claim:mcs_F_onlyif]{claim}}]
	\renewcommand{\qedsymbol}{$\square$}
	We have $\chi(k) < \chi(\ell) < \chi(m)$ since $\chi$ is strictly increasing on $\cotwo(v(X)) \supseteq v(X)$, so
	\begin{gather*}
		\int_k^\ell F
		= \chi(\ell) - \chi(k)
		> 0
		< \chi(m) - \chi(\ell)
		= \int_\ell^m F ,
		\\
		\frac{\widehat{\chi}(m) - \widehat{\chi}(\ell)}{\widehat{\chi}(\ell) - \widehat{\chi}(k)}
		= \frac{\int_\ell^m \widehat{F}}{\int_k^\ell \widehat{F}}
		\quad \text{and} \quad
		\frac{\chi(m)-\chi(\ell)}{\chi(\ell)-\chi(k)}
		= \frac{\int_\ell^m F}{\int_k^\ell F} .
	\end{gather*}
	It remains only to show that
	\begin{equation*}
		\frac{\widehat{\chi}(m) - \widehat{\chi}(\ell)}{\widehat{\chi}(\ell) - \widehat{\chi}(k)}
		\geq \frac{\chi(m)-\chi(\ell)}{\chi(\ell)-\chi(k)} .
	\end{equation*}
	By viewing $v(X)$ as a set of alternatives and (the restrictions to $v(X)$ of) $\chi$ and $\widehat{\chi}$ as risk attitudes, we see from \hyperref[theorem:pratt]{Pratt's theorem} that it suffices to show that there exists an increasing convex function $\phi : \co(\chi(v(X))) \to \R$ that is strictly increasing on $\chi(v(X))$ and satisfies $\widehat{\chi} = \phi \circ \chi$ on $v(X)$.

	To that end, observe that by \hyperref[theorem:pratt]{Pratt's theorem}, since $\widehat{u}$ is less risk-averse than $u$, there exists an increasing convex function $\psi : \co(u(X)) \to \R$ that is strictly increasing on $u(X)$ and satisfies $\widehat{u} = \psi \circ u$. By hypothesis, there exist strictly increasing affine maps $\lambda,\widehat{\lambda} : \R \to \R$ such that $\lambda \circ u = \chi \circ v$ and $\widehat{\lambda} \circ \widehat{u} = \widehat{\chi} \circ v$. Hence $u = \lambda^{-1} \circ \chi \circ v$ and $\widehat{\lambda} \circ \psi \circ u = \widehat{\chi} \circ v$, so $\widehat{\chi} = \phi \circ \chi$ on $v(X)$ where $\phi \coloneqq \widehat{\lambda} \circ \psi \circ \lambda^{-1}$. The map $\phi : \co(\chi(v(X))) \to \R$ is increasing and convex since $\widehat{\lambda}$, $\psi$ and $\lambda$ are, and is strictly increasing on $\chi(v(X))$ since $\psi$ is strictly increasing on $u(X)$ and $\lambda$ and $\widehat{\lambda}$ are strictly increasing.
\end{proof}
\renewcommand{\qedsymbol}{$\blacksquare$}

Now, we must show that $\widehat{F}(k')F(\ell') \leq F(k')\widehat{F}(\ell')$ for all $k' < \ell'$ in $v(X) \setminus \{\sup v(X)\}$. If $\abs*{v(X)} < 3$, then this holds trivially. Assume for the remainder that $\abs*{v(X)} \geq 3$, and fix an arbitrary pair $k' < \ell'$ in $v(X) \setminus \{\sup v(X)\}$. Note that $v(X) \cap (\ell',+\infty)$ is non-empty.

Suppose first that $v(X) \cap (k',\ell') = \varnothing$. Then applying the \hyperref[claim:mcs_F_onlyif]{claim} with $k=k'$, $\ell=\ell'$ and $m \in v(X) \cap (\ell',+\infty)$ and letting $m \to \inf (v(X) \cap (\ell',+\infty))$ yields
\begin{equation*}
	\frac{\widehat{F}(\ell')}{F(\ell')}
	\geq \frac{\int_{k'}^{\ell'} \widehat{F}}{\int_{k'}^{\ell'} F}
	= \frac{(\ell'-k') \widehat{F}(k')}{(\ell'-k') F(k')}
	= \frac{\widehat{F}(k')}{F(k')} ,
\end{equation*}
where the inequality holds by the \hyperref[claim:mcs_F_onlyif]{claim} and the fact that $F$ and $\widehat{F}$ are concentrated on $v(X) \cup \{-\infty\}$, and the first equality holds since $F$ and $\widehat{F}$ are right-continuous and concentrated on $v(X) \cup \{-\infty\}$.

Suppose instead that $v(X) \cap (k',\ell') \neq \varnothing$. For any $\ell \in v(X) \cap (k',\ell')$ and $m \in v(X) \cap (\ell',+\infty)$, applying the \hyperref[claim:mcs_F_onlyif]{claim} twice yields
\begin{equation*}
	\frac{\int_{\ell'}^m \widehat{F}}{\int_{\ell'}^m F} \geq \frac{\int_\ell^{\ell'} \widehat{F}}{\int_\ell^{\ell'} F} \geq \frac{\int_{k'}^\ell \widehat{F}}{\int_{k'}^\ell F} .
\end{equation*}
Letting $m \to \inf (v(X) \cap (\ell',+\infty))$ and $\ell \to \inf (v(X) \cap (k',+\infty)) \eqqcolon k^\star$ yields  
\begin{equation*}
	\widehat{F}(\ell') \geq F(\ell')\liminf_{\ell \searrow k^\star} \frac{\int_\ell^{\ell'} \widehat{F}}{\int_\ell^{\ell'} F} \geq F(\ell')\frac{\widehat{F}(k')}{F(k')} 
\end{equation*}
since $F$ and $\widehat{F}$ are concentrated on $v(X) \cup \{-\infty\}$.

\bigskip

It remains to prove part~\ref{item:mcs:v}. For both halves (`if' and `only if'), we will use the following simple observation.

\begin{observation}
	\label{observation:mcs_v_psi}
	If $F \circ v = \widehat{F} \circ \widehat{v}$, and there exists an increasing function $\psi : v(X) \to \R$ such that $\widehat{v} = \psi \circ v$, then 
	\begin{equation*}
		\left(\widehat{\chi} \circ \psi\right)(m)
		= \int \max\{\psi(m),\psi(k)\} F(\dd k)
		= \int \psi(\max\{m,k\}) F(\dd k) .
	\end{equation*}
\end{observation}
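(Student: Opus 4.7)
The second equality, $\int \max\{\psi(m), \psi(k)\}\, F(\dd k) = \int \psi(\max\{m,k\})\, F(\dd k)$, is immediate upon extending $\psi$ by setting $\psi(-\infty) \coloneqq -\infty$: since $\psi$ is increasing and $m \in v(X)$, for every $k \in v(X) \cup \{-\infty\}$ we have $\max\{m,k\} \in v(X) \cup \{-\infty\}$ and $\psi(\max\{m,k\}) = \max\{\psi(m), \psi(k)\}$, while $F$ is concentrated on $v(X) \cup \{-\infty\}$, so the two integrands agree $F$-a.e.

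For the first equality, unfolding the definition of $\widehat{\chi}$ reduces matters to the change-of-variables identity
\begin{equation*}
\int \max\{\psi(m), \ell\}\, \widehat{F}(\dd \ell) = \int \max\{\psi(m), \psi(k)\}\, F(\dd k) .
\end{equation*}
The plan is to split both sides at the threshold $\psi(m)$ (respectively $m$). On the left, combining the $-\infty$ atom (contributing $\widehat{F}_0\,\psi(m)$, where $\widehat{F}_0 \coloneqq \lim_{\ell \searrow -\infty}\widehat{F}(\ell)$) with the integral over $(-\infty, \psi(m)]$ (contributing $\psi(m)(\widehat{F}(\psi(m)) - \widehat{F}_0)$) yields $\psi(m)\, \widehat{F}(\psi(m))$; analogously, on the right the region $k \leq m$ (together with the $-\infty$ atom, using monotonicity of $\psi$) contributes $\psi(m)\, F(m)$. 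These two terms agree because the hypothesis $F \circ v = \widehat{F} \circ \widehat{v}$ together with $\widehat{v} = \psi \circ v$ gives $F(m) = \widehat{F}(\psi(m))$, upon picking any $x \in X$ with $v(x) = m$.

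The remaining contributions, $\int_{(\psi(m), +\infty)} \ell\, \widehat{F}(\dd \ell)$ and $\int_{(m, +\infty)} \psi(k)\, F(\dd k)$, must match by the change of variables $\ell = \psi(k)$. To justify this, one verifies that the Borel measure on $(\psi(m), +\infty)$ induced by $\widehat{F}$ is the pushforward under $\psi$ of the Borel measure on $(m, +\infty)$ induced by $F$: both measures are concentrated on the relevant images of $\psi$, and their cumulative values are linked by the identity $F(\ell) = \widehat{F}(\psi(\ell))$ valid on $v(X)$. The main obstacle is the possible non-injectivity of $\psi$ on $v(X)$, but whenever $\psi(k_1) = \psi(k_2)$ for distinct $k_1, k_2 \in v(X)$, the hypothesis forces $F(k_1) = F(k_2)$, so $F$ assigns no mass between $k_1$ and $k_2$ and the pushforward is unambiguous. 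Once this is in place, standard change-of-variables delivers the desired equality.
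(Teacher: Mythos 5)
Your proposal is correct and follows essentially the same route as the paper: the second equality from monotonicity of $\psi$ together with $F$ being concentrated on $v(X)\cup\{-\infty\}$, and the first equality as a change of variable justified by $F=\widehat{F}\circ\psi$ on $v(X)$ and the concentration of $\widehat{F}$ on $\widehat{v}(X)\cup\{-\infty\}$. The only difference is that the paper simply cites a standard change-of-variable theorem (Billingsley, Theorem~16.13) where you verify the pushforward identity by hand, including the (correctly handled) non-injectivity point.
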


\begin{proof}[Proof of \Cref{observation:mcs_v_psi}]
	\renewcommand{\qedsymbol}{$\square$}
	Since $F \circ v = \widehat{F} \circ \widehat{v}$ and $\widehat{v} = \psi \circ v$, we have $F = \widehat{F} \circ \psi$ on $v(X)$. Since $F$ is concentrated on $v(X) \cup \{-\infty\}$ and $\widehat{F}$ is concentrated on $\widehat{v}(X) \cup \{-\infty\}$, it follows that for each $m \in \co(v(X))$,
	\begin{multline*}
		\left(\widehat{\chi} \circ \psi\right)(m)
		= \int \max\left\{\psi(m),\widehat{k}\right\} \widehat{F}\left(\dd \widehat{k}\right)
		= \int \max\{\psi(m),\psi(k)\} F(\dd k)
		\\
		= \int \psi(\max\{m,k\}) F(\dd k) ,
	\end{multline*}
	where the second equality holds by change of variable,%
		\footnote{Licensed by Theorem~16.13 in \textcite{Billingsley1995}, which is applicable since the integrand $k \mapsto \max\{\psi(m),k\}$ is bounded below (by $\psi(m)$).}
	and the final equality holds since $\psi$ is increasing.
\end{proof}
\renewcommand{\qedsymbol}{$\blacksquare$}

\paragraph{`If' half of part~\ref{item:mcs:v}:}
Suppose that $F \circ v = \widehat{F} \circ \widehat{v}$ and that $\widehat{v}$ is less risk-averse than $v$. Then by \hyperref[theorem:pratt]{Pratt's theorem} and \Cref{lemma:greatest_phi} in \cref{app:pratt_phis}, there exists an increasing convex map $\psi : \co(v(X)) \to \R$ that is strictly increasing on $\cotwo(v(X))$ and satisfies $\widehat{v} = \psi \circ v$. By hypothesis, there exist strictly increasing affine maps $\lambda,\widehat{\lambda} : \R \to \R$ such that $\lambda \circ u = \chi \circ v$ and $\widehat{\lambda} \circ \widehat{u} = \widehat{\chi} \circ \widehat{v}$. Hence $\widehat{\lambda} \circ \widehat{u} = \widehat{\chi} \circ \psi \circ v$ and $v = \chi^{-1} \circ \lambda \circ u$, so $\widehat{u} = \phi \circ u$ where $\phi \coloneqq \widehat{\lambda}^{-1} \circ \widehat{\chi} \circ \psi \circ \chi^{-1} \circ \lambda$. By \hyperref[theorem:pratt]{Pratt's theorem}, it suffices to show that the map $\phi : \co(u(X)) \to \R$ is increasing and convex, and strictly increasing on $\cotwo(u(X))$. That $\phi$ is increasing follows from the fact that $\widehat{\lambda}$, $\widehat{\chi}$, $\psi$, $\chi$ and $\lambda$ are. $\phi$ is strictly increasing on $\cotwo(u(X))$ since $\lambda$ and $\widehat{\lambda}$ are strictly increasing, $\chi^{-1}$ is strictly increasing and onto $\cotwo(v(X))$, $\psi$ is strictly increasing on $\cotwo(v(X))$ and has $\psi(\cotwo(v(X))) \subseteq \cotwo(\psi(v(X))) = \cotwo(\widehat{v}(X))$, and $\widehat{\chi}$ is strictly increasing on $\cotwo(\widehat{v}(X))$. To show that $\phi$ is convex, it is enough (since $\widehat{\lambda}$ and $\lambda$ are affine) to show that $\widehat{\chi} \circ \psi \circ \chi^{-1}$ is convex. To that end, we shall exhibit a convex function $f : \chi(\co(v(X))) \to \R$ such that $\widehat{\chi} \circ \psi = f \circ \chi$; this suffices since $\chi \circ \chi^{-1}$ is the identity, i.e. $\chi(\chi^{-1}(k))=k$ for every $k \in \chi(\co(v(X)))$.

We claim that $\psi$ may without loss of generality be assumed to be continuous. To see why, write $m \coloneqq \sup v(X)$, and observe that (being convex and increasing,) $\psi$ is automatically continuous on $\co(v(X)) \setminus \{m\}$. Hence $\psi = \psi_2 \circ \psi_1$, where $\psi_1 : \co(v(X)) \to \R$ satisfies $\psi_1 \coloneqq \psi$ on $\co(v(X)) \setminus \{m\}$, $\psi_2 : \psi_1(\co(v(X))) \to \R$ is equal to the identity on $\psi_1(\co(v(X))) \setminus \{\sup \psi_1(v(X))\}$, and if $m \in v(X)$, then $\psi_1(m) \coloneqq \sup \psi\bigl( \co(v(X)) \setminus \{m\} \bigr)$ and $\psi_2(\psi_1(m)) \coloneqq \psi(m)$. Evidently $\psi_1$ and $\psi_2$ are convex and strictly increasing, and $\psi_1$ is continuous; furthermore, $\psi = \psi_1$ if and only if $\psi$ is continuous. Let $F_1$ be the extended CDF concentrated on $\psi_1(v(X)) \cup \{-\infty\}$ such that $F_1>0$ on $(\inf \psi_1(v(X)),+\infty)$ and $F_1 \circ \psi_1 \circ v = F \circ v$, and let $\chi_1 : \psi_1(\co(v(X))) \to \R$ be given by $\chi_1(\ell) \coloneqq \int \max\{\ell,k\} F_1(\dd k)$ for every $\ell \in \psi_1(\co(v(X)))$. If $\psi$ is not continuous, then $m \in v(X)$ and $\widehat{\chi}\circ \psi = f_2 \circ \chi_1 \circ \psi_1$, where $f_2$ is the convex function $\chi_1(\psi_1(\co(v(X)))) \to \R$ given by
\begin{equation*}
	f_2(\ell)
	\coloneqq \ell + \bigl[ \psi\left(m\right)-\psi_1\left(m\right) \bigr] \int_{\left\{m\right\}} \dd \widehat{F}
\end{equation*}
for each $\ell \in \chi_1(\psi_1(\co(v(X)))) \setminus \{\psi_1\left(m\right)\}$ and $f_2\left(\psi_1\left(m\right)\right) \coloneqq \psi\left(m\right)$. Hence if $\psi$ is not continuous, then to show that there exists a convex function $f : \chi(\co(v(X))) \to \R$ such that $\widehat{\chi} \circ \psi = f \circ \chi$, it suffices to find a convex $f_1 : \chi(\co(v(X))) \to \R$ such that $\chi_1 \circ \psi_1 = f_1 \circ \chi$, because then we may set $f \coloneqq f_2 \circ f_1$. Since if $\psi$ is continuous then $\psi_1=\psi$ and $\chi_1=\widehat{\chi}$, it follows that it is without loss of generality to assume that $\psi$ is continuous.

We assume for the remainder that $\psi$ is continuous. It remains to show that there exists a convex function $f : \chi(\co(v(X))) \to \R$ such that $\widehat{\chi} \circ \psi = f \circ \chi$. By viewing $Y \coloneqq \cotwo(v(X))$ as a set of alternatives and $(\widehat{\chi} \circ \psi)|_Y$ and $\chi|_Y$ (the restrictions to $Y$ of $\widehat{\chi} \circ \psi$ and $\chi$) as risk attitudes, we see from \hyperref[theorem:pratt]{Pratt's theorem} (noting that $\chi(\co(v(X))) = \co(\chi(Y))$) that it suffices to show that $(\widehat{\chi} \circ \psi)|_Y$ is less risk-averse than $\chi|_Y$. So fix any $\ell \in \cotwo(v(X))$ and any CDF $H$ whose support is finite and contained in $\cotwo(v(X))$; we must show that if $\int \chi \dd H \geq \mathrel{(>)} \chi(\ell)$, then $\int \bigl( \widehat{\chi} \circ \psi \bigr) \dd H \geq \mathrel{(>)} \widehat{\chi}(\psi(\ell))$. If $\ell = \inf v(X)$, then this holds trivially.%
	\footnote{If $H(\inf v(X))=1$ then $\int \chi \dd H = \chi(\inf v(X))$ and $\int \bigl( \widehat{\chi} \circ \psi \bigr) \dd H = \widehat{\chi}(\psi(\inf v(X)))$, while if $H(\inf v(X))<1$ then $\int \bigl( \widehat{\chi} \circ \psi \bigr) \dd H > \widehat{\chi}(\psi(\inf v(X)))$ since $\widehat{\chi}$ is strictly increasing on $\cotwo(\widehat{v}(X)) = \cotwo(\psi(v(X))) \supseteq \psi(\cotwo(v(X)))$ and $\psi$ is strictly increasing on $\cotwo(v(X))$.}
Suppose for the remainder that $\ell > \inf v(X)$.

Let $g : \co(v(X)) \times \co(v(X)) \to \co(v(X))$ and $J_\star, J^\star : \R^2 \to [0,1]$ be given by $g(m,k) \coloneqq \max\{m,k\}$, $J_\star(m,k) \coloneqq \1_{[\ell,+\infty)}(m) \times F(k)$ and $J^\star(m,k) \coloneqq H(m) \times F(k)$ for all $(m,k) \in \R^2$.
Define $H_\star,H^\star : \R \to [0,1]$ by $H_\star(k) \coloneqq J_\star(k,k)$ and $H^\star(k) \coloneqq J^\star(k,k)$ for each $k \in \R$, and let $B : \R \to [-1,1]$ be given by $B \coloneqq H_\star-H^\star$. Note that $B$ vanishes outside $\co(v(X))$.
For any increasing and continuous function $f : \co(v(X)) \to \R$ such that $f \circ g$ is both $J_\star$- and $J^\star$-integrable,
\begin{multline}
	\int \left[ \int f(\max\{m,k\}) F(\dd k) \right] H(\dd m)
	- \int f(\max\{\ell,k\}) F(\dd k)
	\\
	= \int ( f \circ g ) \dd J^\star
	- \int ( f \circ g ) \dd J_\star
	= - \int f \dd B
	= \int f' B ,
	\label{eq:mcs_v_f}
\end{multline}
where the first equality holds by Fubini's theorem,%
	\footnote{See e.g. Theorem~18.3 in \textcite{Billingsley1995}, which is applicable since $f \circ g$ is both $J_\star$- and $J^\star$-integrable.}
the second by change of variable,%
	\footnote{Licensed by Theorem~16.13 in \textcite{Billingsley1995}, which is applicable since $f \circ g$ is both $J_\star$- and $J^\star$-integrable.}
and the third by integration by parts.%
	\footnote{For any $k<m$, applying Theorem~18.4 in \textcite{Billingsley1995} separately to $H_\star$ and $H^\star$ yields $\int_{(k,m]} f' B = \int_{(k,m]} H_\star \dd f - \int_{(k,m]} H^\star \dd f = f(m) B(m) - f(k) B(k) - \int_{(k,m]} f \dd B$. Since $H$ has finite support, we have $B(k)=B(m)=0$ for all sufficiently small $k \in \R$ and sufficiently large $m \in \R$, so letting $k \searrow -\infty$ and $m \nearrow +\infty$ yields $\int f' B = - \int f \dd B$.}
In particular, since $g$ and $\psi \circ g$ are $J_\star$- and $J^\star$-integrable ($g$ since $\chi<+\infty$ and $H$ has finite support, and $\psi \circ g$ by \Cref{observation:mcs_v_psi} since $\widehat{\chi}<+\infty$ and $H$ has finite support), \eqref{eq:mcs_v_f} holds when $f$ is the identity $k \mapsto k$ and when $f = \psi$.

Since $\psi$ is convex, its right-hand derivative $\psi^+ : \co(v(X)) \setminus \{ \sup v(X) \} \to \R$ exists and is increasing. In case $\sup v(X) \in v(X)$, extend $\psi^+$ to $\co(v(X))$ by letting $\psi^+( \sup v(X) ) \coloneqq \lim_{k \nearrow \sup v(X)} \psi^+(k)$. Since $\psi$ is \emph{strictly} increasing on $\cotwo(v(X))$ and $\ell \in \cotwo(v(X)) \setminus \{\inf v(X)\}$, we have $\psi^+(\ell)>0$.%
	\footnote{If $\psi(\ell)=0$ then $\psi^+=0$ on $(\inf v(X), \ell )$, in which case ($\psi$ being absolutely continuous on $\interior(\co(v(X)))$, since it is convex) we would have for any $k<m$ in $(\inftwo v(X), \ell )$ that $\psi(m)-\psi(k) = \int_k^m \psi^+ = 0$ by the (Lebesgue) fundamental theorem of calculus, a contradiction with the fact that $\psi$ is strictly increasing on $\cotwo(v(X))$.}
Hence if $\int \chi \dd H \geq \mathrel{(>)} \chi(\ell)$, then
\begin{multline*}
	\int \left( \widehat{\chi} \circ \psi \right) \dd H
	- \widehat{\chi}(\psi(\ell))
	\\
	\begin{aligned}			
		&= \int \left[ \int \psi(\max\{m,k\}) F(\dd k) \right] H(\dd m)
		- \int \psi(\max\{\ell,k\}) F(\dd k)
		\\
		&= \int \psi^+ B 
		= \int_{-\infty}^\ell \psi^+ B
		+ \int_\ell^{+\infty} \psi^+ B
		\\
		&\geq \psi^+(\ell) \int_{-\infty}^\ell B
		+ \psi^+(\ell) \int_\ell^{+\infty} B
		= \psi^+(\ell)
		\int B
		\\
		&= \psi^+(\ell)
		\left(
		\int \chi \dd H - \chi(\ell)
		\right)
		\\
		&\geq \mathrel{(>)} 0 ,
	\end{aligned}
\end{multline*}
where the first equality holds by \Cref{observation:mcs_v_psi}, the second equality holds by applying \eqref{eq:mcs_v_f} with $f=\psi$, the first inequality holds since $\psi^+$ is increasing and $B(k) = -F(k)H(k) \leq 0 \leq F(m)[1-H(m)] = B(m)$ for any $k,m \in \R$ such that $k < \ell \leq m$, the final equality holds by \eqref{eq:mcs_v_f} with $f$ the identity $k \mapsto k$, and the final inequality holds since $\psi^+(\ell)>0$ and $\int \chi \dd H \geq \mathrel{(>)} \chi(\ell)$.

\paragraph{`Only if' half of part~\ref{item:mcs:v}:}
suppose that $F \circ v = \widehat{F} \circ \widehat{v}$ and that $\widehat{u}$ is less risk-averse than $u$. Then since $F>0$ on $(\inftwo v(X),+\infty)$ and $\widehat{F}>0$ on $\left(\inftwo \widehat{v}(X),+\infty\right)$, there exists a strictly increasing function $\psi : v(X) \to \R$ such that $\widehat{v} = \psi \circ v$.

\begin{claim}
	\label{claim:slopes}
	Let $(k_n,\ell_n,k'_n,\ell'_n)_{n \in \N}$ be a sequence in $v(X)^4$ that converges to $(k_\star,\ell_\star,k'_\star,\ell'_\star) \in \co(v(X))^4$ as $n \to +\infty$, and suppose that
	
	\begin{enumerate}[label=(\roman*)]
		
		\item \label{item:ranked} $k_n < \ell_n \leq k'_n < \ell'_n$ for each $n \in \N$, 
		
		\item \label{item:tight} $v(X) \cap \left(k_\star,\ell_\star\right) = \varnothing = v(X) \cap \left(k'_\star,\ell'_\star\right)$, and 
		
		\item \label{item:conv} The sequences
		\begin{equation*}
			(\delta_n)_{n \in \N} \coloneqq \left( \tfrac{\psi(\ell_n)-\psi(k_n)}{\ell_n-k_n}\right)_{n \in \N}
			\quad \text{and} \quad
			(\delta'_n)_{n \in \N} \coloneqq \left(\tfrac{\psi(\ell'_n)-\psi(k'_n)}{\ell'_n-k'_n} \right)_{n \in \N}
		\end{equation*}
		are monotone.

	\end{enumerate}
	Then $\lim_{n \to +\infty} \delta_n \leq \lim_{n \to +\infty} \delta_n'$.
\end{claim}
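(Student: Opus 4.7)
The plan is to exploit the convexity of $\Phi := \widehat\lambda \circ \phi \circ \lambda^{-1}$, where $\lambda, \widehat\lambda$ are the strictly increasing affine maps provided by the hypotheses of the theorem (so that $\lambda \circ u = \chi \circ v$ and $\widehat\lambda \circ \widehat u = \widehat\chi \circ \widehat v$), and $\phi : \co(u(X)) \to \R$ is the increasing convex map provided by \hyperref[theorem:pratt]{Pratt's theorem} applied to the standing hypothesis that $\widehat u$ is less risk-averse than $u$. Chasing these identities using $\widehat v = \psi \circ v$ gives $\widehat\chi \circ \psi = \Phi \circ \chi$ on $v(X)$, with $\Phi$ increasing and convex on $\co(\chi(v(X)))$.

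First I would apply the monotonicity of secant slopes of the convex function $\Phi$ at the four points $\chi(k_n) < \chi(\ell_n) \leq \chi(k'_n) < \chi(\ell'_n)$ (the strict inequalities coming from strict monotonicity of $\chi$ on $\cotwo(v(X)) \supseteq v(X)$), and then rewrite the resulting inequality via the identities $\chi(b) - \chi(a) = \int_a^b F$ and $\widehat\chi(\psi(b)) - \widehat\chi(\psi(a)) = \int_{\psi(a)}^{\psi(b)} \widehat F$ (\Cref{lemma:chi}) as
\[
	\frac{\int_{\psi(k_n)}^{\psi(\ell_n)} \widehat F}{\int_{k_n}^{\ell_n} F} \leq \frac{\int_{\psi(k'_n)}^{\psi(\ell'_n)} \widehat F}{\int_{k'_n}^{\ell'_n} F}
	\quad \text{for every $n \in \N$.}
\]
The plan is then to pass to the limit using the algebraic identity
\[
	\frac{\int_{\psi(k_n)}^{\psi(\ell_n)} \widehat F}{\int_{k_n}^{\ell_n} F} = \delta_n \cdot \frac{\overline{\widehat F}_n}{\overline F_n} ,
\]
where $\overline F_n$ and $\overline{\widehat F}_n$ denote the averages of $F$ over $[k_n, \ell_n]$ and of $\widehat F$ over $[\psi(k_n), \psi(\ell_n)]$, thereby reducing the conclusion to showing that $\overline{\widehat F}_n / \overline F_n \to 1$.

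This last step is where hypothesis (ii) enters decisively. Since $F$ is concentrated on $v(X) \cup \{-\infty\}$ and $(k_\star, \ell_\star) \cap v(X) = \emptyset$, $F$ is constant on the gap $(k_\star, \ell_\star)$; analogously, $\widehat F$ is constant on the corresponding gap in $\psi(v(X))$. Because $k_n, \ell_n$ must approach $k_\star, \ell_\star$ from outside the gap, the $F$-mass of $[k_n, \ell_n] \setminus (k_\star, \ell_\star)$ is $o(1)$, so $\overline F_n$ converges to the common $F$-value on the gap; an analogous analysis yields the limit of $\overline{\widehat F}_n$. These two limiting constants coincide because $F = \widehat F \circ \psi$ on $v(X)$, which is just the hypothesis $F \circ v = \widehat F \circ \widehat v$ rewritten using $\widehat v = \psi \circ v$. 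The same argument applies to the primed sequence, and passing to the limit in the displayed inequality yields $\lim_n \delta_n \leq \lim_n \delta'_n$.

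The main obstacle is the degenerate case $k_\star = \ell_\star$ (or $k'_\star = \ell'_\star$), where the limit interval collapses to a point and both $\overline F_n$ and $\overline{\widehat F}_n$ become $0/0$ expressions, so the averaging argument breaks down. I would handle this case by a separate sandwich argument: approximate $\delta_n$ by secant slopes over non-degenerate intervals whose endpoints lie in $v(X)$ and straddle the common limit point, then appeal to the non-degenerate case together with the monotonicity of $(\delta_n)$. A minor subsidiary issue is that $F$ or $\widehat F$ may have atoms at the gap endpoints (when those endpoints lie in $v(X)$ or $\psi(v(X))$), but these contribute mass of order $\ell_n - \ell_\star$ or $k_\star - k_n$ and hence $o(1)$ to the integrals, so they do not disturb the convergence of the averages.
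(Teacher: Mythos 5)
Your first two steps coincide with the paper's own proof: the secant-slope inequality for $\widehat{\chi}\circ\psi$ against $\chi$ at the four points $\chi(k_n)<\chi(\ell_n)\leq\chi(k'_n)<\chi(\ell'_n)$ is exactly the paper's starting point (its inequality \eqref{eq:mcs_v_chirank}), and factoring each ratio as $\delta_n$ times a correction factor that must tend to $1$ is also how the paper proceeds. The gap is in your correction factor. You need $\overline{\widehat{F}}_n/\overline{F}_n\to 1$, and your justification assumes that the image interval $\bigl[\psi(k_n),\psi(\ell_n)\bigr]$ shrinks onto a single constancy interval of $\widehat{F}$. But $\psi$ is only known to be \emph{strictly increasing}, not continuous. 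Suppose $k_\star$ is a left limit point of $v(X)$ with $k_n<k_\star$ for all $n$, and $\psi$ has a left jump there, i.e. $a\coloneqq\lim_{m\nearrow k_\star,\,m\in v(X)}\psi(m)<\psi(k_\star)$. Then $\bigl[\psi(k_n),\psi(\ell_n)\bigr]$ contains the sub-interval $\bigl(a,\psi(k_\star)\bigr)$, whose length does \emph{not} vanish and on which $\widehat{F}$ is constant and equal to $\lim_{m\nearrow k_\star}F(m)$ rather than $F(k_\star)$ (using $F=\widehat{F}\circ\psi$ on $v(X)$). If $F$ has an atom at $k_\star$, these values differ, so $\lim_n\overline{\widehat{F}}_n$ is a non-degenerate convex combination of $\lim_{m\nearrow k_\star}F(m)$, $F(k_\star)$ and (via the symmetric phenomenon at $\ell_\star$) $F(\ell_\star)$, whereas $\lim_n\overline{F}_n=F(k_\star)$. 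Your closing remark that atoms contribute only $\mathrm{o}(1)$ is correct for $\overline{F}_n$, where the offending sub-intervals $[k_n,k_\star]$ and $[\ell_\star,\ell_n]$ have vanishing length, but not for $\overline{\widehat{F}}_n$, where the corresponding sub-intervals have length bounded below by the jumps of $\psi$. You cannot rule such jumps out at this stage: the convexity-type regularity of $\psi$ is precisely what the enclosing argument (the `only if' half of part~\ref{item:mcs:v}) is trying to establish.

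The paper's proof avoids integrating $\widehat{F}$ over the $\psi$-image interval altogether. Using $F\circ v=\widehat{F}\circ\widehat{v}$ and the concentration hypotheses, \Cref{observation:mcs_v_psi} rewrites $\widehat{\chi}(\psi(\ell))-\widehat{\chi}(\psi(k))$ as an integral against the \emph{single} measure $F$ over $(k,\ell]$, so that the correction factor becomes the ratio $\Delta(k_n,\ell_n)$ of two $F$-integrals whose integrands take values in $[0,1]$ and whose limit is handled by bounded convergence with respect to $F$ alone; the lengths of the gaps of $\psi(v(X))$ never enter. If you want to salvage your route, you would essentially have to reproduce this change of variable. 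Separately, your treatment of the degenerate case $k_\star=\ell_\star$ is only a sketch; note that this case genuinely arises in the application (the bisection construction used to derive \eqref{eq:mcs_v_ineq} can collapse the intervals to points), so it cannot be left as a remark.
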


\begin{proof}[Proof of the {\hyperref[claim:slopes]{claim}}]
	\renewcommand{\qedsymbol}{$\square$}
	Since $\widehat{u}$ is less risk-averse than $u$, there exists an increasing convex function $\psi^\star : \co(u(X)) \to \R$ that is strictly increasing on $u(X)$ and satisfies $\widehat{u} = \psi^\star \circ u$. By hypothesis, there exist strictly increasing affine maps $\lambda,\widehat{\lambda} : \R \to \R$ such that $\lambda \circ u = \chi \circ v$ and $\widehat{\lambda} \circ \widehat{u} = \widehat{\chi} \circ \widehat{v}$. Hence $u = \lambda^{-1} \circ \chi \circ v$ and $\widehat{\chi} \circ \psi \circ v = \widehat{\lambda} \circ \psi^\star \circ u$, so $\widehat{\chi} \circ \psi = \phi \circ \chi$ on $v(X)$ where $\phi \coloneqq \widehat{\lambda} \circ \psi^\star \circ \lambda^{-1}$. The map $\phi : \co(\chi(v(X))) \to \R$ is convex since $\psi^\star$ is convex and $\widehat{\lambda}$ and $\lambda$ are affine, and is strictly increasing on $\chi(v(X))$ since $\chi$ is increasing, $\psi$ is strictly increasing, and $\widehat{\chi}$ is strictly increasing on $\cotwo(\widehat{v}(X)) \supseteq \widehat{v}(X) = \psi(v(X))$. By viewing $v(X)$ as a set of alternatives and (the restrictions to $v(X)$ of) $\chi$ and $\widehat{\chi}$ as risk attitudes, and noting that 
	\begin{equation*}
		\widehat{\chi}(\psi(k_n))
		< \widehat{\chi}(\psi(\ell_n))
		\leq \widehat{\chi}(\psi(k_n'))
		< \widehat{\chi}(\psi(\ell_n'))
		\quad \text{for each $n \in \N$}
	\end{equation*}
	since \ref{item:ranked} holds, $\psi$ is strictly increasing, and $\widehat{\chi}$ is strictly increasing on $\cotwo(\widehat{v}(X)) \supseteq \widehat{v}(X) = \psi(v(X))$, we see from \hyperref[theorem:pratt]{Pratt's theorem} that
	\begin{equation}
		\frac{\widehat{\chi}(\psi(\ell_n))-\widehat{\chi}(\psi(k_n))}{\chi(\ell_n)-\chi(k_n)}
		\leq \frac{\widehat{\chi}(\psi(\ell_n'))-\widehat{\chi}(\psi(k_n'))}{\chi(\ell_n')-\chi(k_n')}
		\quad \text{for every $n \in \N$.\footnotemark}
		\label{eq:mcs_v_chirank}
	\end{equation}%
		\footnotetext{If $\ell_n<k_n'$, then we apply \hyperref[theorem:pratt]{Pratt's theorem} twice:
		\begin{equation*}
			\frac{\widehat{\chi}(\psi(\ell_n))-\widehat{\chi}(\psi(k_n))}{\chi(\ell_n)-\chi(k_n)}
			\leq \frac{\widehat{\chi}(\psi(k_n'))-\widehat{\chi}(\psi(\ell_n))}{\chi(k_n')-\chi(\ell_n)}
			\\
			\leq \frac{\widehat{\chi}(\psi(\ell_n'))-\widehat{\chi}(\psi(k_n'))}{\chi(\ell_n')-\chi(k_n')} .
		\end{equation*}}%

	For any $k<\ell$ in $v(X)$,
	\begin{equation*}
		\Delta(k,\ell) \coloneqq \frac{F(k) + \int_{(k,\ell]} \frac{\psi(\ell)-\psi(m)}{\psi(\ell)-\psi(k)}F(\dd m)}{F(k) + \int_{(k,\ell]} \frac{\ell-m}{\ell-k}F(\dd m)} 
	\end{equation*}
	is well-defined since $F > 0$ on $(\inftwo v(X),+\infty)$. By inspection,
	\begin{align}
		\frac{\widehat{\chi}(\psi(\ell))-\widehat{\chi}(\psi(k))}{\chi(\ell)-\chi(k)}
		&= \frac{\int \bigl( \max\{\psi(\ell),\psi(m)\} - \max\{\psi(k),\psi(m)\} \bigr) F(\dd m)}{\int \bigl( \max\{\ell,m\} - \max\{k,m\} \bigr) F(\dd m)} 
		\nonumber
		\\
		&= \frac{\psi(\ell)-\psi(k)}{\ell-k} \Delta(k,\ell) 
		\quad \text{for any $k<\ell$ in $v(X)$,}
		\label{eq:mcs_v_Delta}
	\end{align}
	where the first equality holds by \Cref{observation:mcs_v_psi}. Since $\psi$ is increasing, \ref{item:tight} holds, and $F$ is concentrated on $v(X) \cup \{-\infty\}$, we have
	\begin{equation}
		\lim_{n \to +\infty} \Delta(k_n,\ell_n)
		= 1
		= \lim_{n \to +\infty} \Delta(k'_n,\ell'_n)
		\label{eq:mcs_v_limit} 
	\end{equation}
	by the bounded convergence theorem.

	The limits $\lim_{n \to +\infty} \delta_n$ and $\lim_{n \to +\infty} \delta_n'$ exist in $\R \cup \{-\infty,+\infty\}$ by \ref{item:conv}, and they satisfy
	\begin{multline*}
		\lim_{n \to +\infty} \delta_n
		= \left.
		\lim_{n \to +\infty} \frac{\widehat{\chi}(\psi(\ell_n))-\widehat{\chi}(\psi(k_n))}{\chi(\ell_n)-\chi(k_n)} 
		\middle/
		\lim_{n \to +\infty} \Delta(k_n,\ell_n)
		\right.
		\\
		= \lim_{n \to +\infty} \frac{\widehat{\chi}(\psi(\ell_n))-\widehat{\chi}(\psi(k_n))}{\chi(\ell_n)-\chi(k_n)} 
		\leq \lim_{n \to +\infty} \frac{\widehat{\chi}(\psi(\ell'_n))-\widehat{\chi}(\psi(k'_n))}{\chi(\ell'_n)-\chi(k'_n)}
		\\
		= \left.
		\lim_{n \to +\infty} \frac{\widehat{\chi}(\psi(\ell'_n))-\widehat{\chi}(\psi(k'_n))}{\chi(\ell'_n)-\chi(k'_n)}
		\middle/
		\lim_{n \to +\infty} \Delta(k'_n,\ell'_n)
		\right.
		= \lim_{n \to +\infty}\delta_n' ,
	\end{multline*}
	where the first and last equalities hold by \eqref{eq:mcs_v_Delta}, the middle equalities hold by \eqref{eq:mcs_v_limit}, and the inequality holds by \eqref{eq:mcs_v_chirank}.
\end{proof}
\renewcommand{\qedsymbol}{$\blacksquare$}

The \hyperref[claim:slopes]{claim} implies that
\begin{equation}
	\frac{\psi(\ell')-\psi(k')}{\ell'-k'}
	\geq \frac{\psi(\ell)-\psi(k)}{\ell-k}
	\quad \text{for all $k < \ell \leq k' < \ell'$ in $v(X)$.}
	\label{eq:mcs_v_ineq}
\end{equation}
To see why, suppose toward a contradiction that
\begin{equation*}
	\delta
	\coloneqq \frac{\psi(\ell)-\psi(k)}{\ell-k}
	> \frac{\psi(\ell')-\psi(k')}{\ell'-k'}
	\eqqcolon \delta' 
	\quad \text{for some $k < \ell \leq k' < \ell'$ in $v(X)$.}
\end{equation*}
Let $(k_1,\ell_1) \coloneqq (k,\ell)$, and for each $n \in \N$, if $v(X) \cap (k_n,\ell_n) = \varnothing$ then let $(k_{n+1},\ell_{n+1}) \coloneqq (k_n,\ell_n)$, and otherwise fix an $m_n \in v(X) \cap (k_n,\ell_n)$ and choose $(k_{n+1},\ell_{n+1}) \in \{(k_n,m_n),(m_n,\ell_n)\}$ to maximise $\delta_{n+1}$. Similarly, let $(k_1',\ell_1') \coloneqq (k',\ell')$, and for each $n \in \N$, if $v(X) \cap (k_n',\ell_n') = \varnothing$ then let $(k_{n+1}',\ell_{n+1}') \coloneqq (k_n',\ell_n')$, and otherwise fix an $m_n' \in v(X) \cap (k_n',\ell_n')$ and choose $(k_{n+1}',\ell_{n+1}') \in \{(k_n',m_n'),(m_n',\ell_n')\}$ to \emph{minimise} $\delta_{n+1}'$. The sequence $(k_n,k'_n)_{n \in \N}$ increasing and bounded above by $(\ell,\ell')$, so converges to some $(k_\star,k_\star') \in \co(v(X))^2$ as $n \to +\infty$. Similarly, $(\ell_n,\ell_n')_{n \in \N}$ is decreasing and bounded below by $(k,k')$, so converges to some $(\ell_\star,\ell_\star') \in \co(v(X))^2$. Evidently \ref{item:ranked} is satisfied, and by appropriately choosing $(m_n,m_n')_{n \in \N}$, we may ensure that \ref{item:tight} holds. Since $(\delta_n)_{n \in \N}$ is increasing and $(\delta'_n)_{n \in \N}$ decreasing, \ref{item:conv} holds, and $\delta_n \geq \delta_1 = \delta > \delta' = \delta_1' \geq \delta_n'$ for every $n \in \N$, whence $\lim_{n \to +\infty} \delta_n > (\delta+\delta')/2 > \lim_{n \to +\infty} \delta_n'$, a contradiction with the \hyperref[claim:slopes]{claim}.

Since $\psi$ is strictly increasing, it holds for any $x,y \in X$ that $\widehat{v}(x) \geq \mathrel{(>)} \widehat{v}(y)$ implies $v(x) \geq \mathrel{(>)} v(y)$. Hence \eqref{eq:mcs_v_ineq} implies that for any alternatives $x,y,z \in X$, if $\widehat{v}(x)<\widehat{v}(y)<\widehat{v}(z)$, then
\begin{equation*}
	\frac{\widehat{v}(z)-\widehat{v}(y)}{\widehat{v}(y)-\widehat{v}(x)} 
	\geq \frac{v(z)-v(y)}{v(y)-v(x)} .
\end{equation*}
Thus $\widehat{v}$ is less risk-averse than $v$ by \hyperref[theorem:pratt]{Pratt's theorem}.
\qed

\section{Proof of \texorpdfstring{\Cref{theorem:special} (\cpageref{theorem:special})}{Theorem~\ref{theorem:special} (p.~\pageref{theorem:special})}}
\label{app:special_pf}

We shall prove a three-way equivalence between properties~\ref{item:special:lra}, \ref{item:special:oo}, and

\begin{enumerate}[label=(\alph*)]

	\setcounter{enumi}{2}

	\item \label{item:special:distn}
	There exist $\lambda \in (0,1]$, an extended CDF $G$ concentrated on $X \cup \{-\infty\}$ with $G>0$ on $(\inf X,+\infty)$, and a CDF $H$ concentrated on $X$ such that $G_x = \lambda G \1_{[x,+\infty)} + (1-\lambda)H$ for every $x \in X$.
		
\end{enumerate}
We first show that \ref{item:special:distn} implies \ref{item:special:oo} and that \ref{item:special:oo} implies \ref{item:special:lra}, and then prove that \ref{item:special:lra} implies \ref{item:special:distn}. The first two are easy, while the last is non-trivial.

\begin{proof}[Proof that \ref{item:special:distn} implies \ref{item:special:oo}]
	Suppose that \ref{item:special:distn} holds, and fix a bounded and strictly increasing function $v : X \to \R$. Define $\alpha>0$ and $\beta \in \R$ by $\alpha \coloneqq 1/\lambda$ and $\beta \coloneqq (1-1/\lambda) \int v \dd H$. Then for each $x \in X$, we have
	\begin{multline*}
		\alpha \int v \dd G_x + \beta
		= \int v \dd [\1_{[x,+\infty)}G]
		= G(x) v(x) + \int_{(x,+\infty)} v \dd G
		\\
		= \int v(\max\{x,y\}) G(\dd y)
		= \int \max\{v(x),v(y)\} G(\dd y) ,
	\end{multline*}
	where the last equality holds since $v$ is increasing.
\end{proof}

\begin{proof}[Proof that \ref{item:special:oo} implies \ref{item:special:lra}]
	Suppose that \ref{item:special:oo} holds, and fix a bounded and strictly increasing function $v : X \to \R$. By hypothesis, there exist $\alpha>0$, $\beta \in \R$, and an extended CDF $G$ concentrated on $X \cup \{-\infty\}$ with $G>0$ on $(\inf X,+\infty)$ such that
	\begin{equation*}
		\alpha \int v \dd G_x + \beta
		= \int \max\{ v(x), v(y) \} G(\dd y)
		\quad \text{for every $x \in X$.}
	\end{equation*}
	Let $F$ be the (unique) extended CDF satisfying $F \coloneqq G \circ v^{-1}$. Then
	\begin{equation*}
		\alpha \int v \dd G_x + \beta
		= \int \max\{ v(x), k \} F(\dd k)
		\quad \text{for every $x \in X$}
	\end{equation*}
	by change of variable,%
		\footnote{Licensed by Theorem~16.13 in \textcite{Billingsley1995}, which is applicable since for each $x \in X$, the integrand $k \mapsto \max\{v(x),k\}$ is bounded below (by $v(x)$).}
	and $F>0$ on $(\inf v(X),+\infty) \supseteq (\inftwo v(X),+\infty)$. Hence $x \mapsto \int v \dd G_x$ is less risk-averse than $v$ by \Cref{theorem:ident}.
\end{proof}

\begin{proof}[Proof that \ref{item:special:lra} implies \ref{item:special:distn}]
	Suppose that \ref{item:special:lra} holds. For any $x < y$ in $X$ and any bounded and strictly increasing function $v : X \to \R$, we have $v(x) < v(y)$, so $\int v \dd G_x < \int v \dd G_y$ by \hyperref[theorem:pratt]{Pratt's theorem} (\cpageref{theorem:pratt}). It follows that $x \mapsto G_x$ is strictly increasing with respect to first-order stochastic dominance, i.e. $G_x \geq G_y \neq G_x$ for any $x<y$ in $X$.%
		\footnote{Suppose toward a contradiction that $G_x(z) < G_y(z)$ for some $x<y$ in $X$ and some $z \in \R$. Since $G_x$ and $G_y$ are concentrated on $X$, we may choose $z \in X$. Let $w : X \to \R$ be bounded and strictly increasing, and for each $\eps>0$, define $v_\eps : X \to \R$ by $v_\eps \coloneqq \eps w + \1_{(z,+\infty)}$. Then $\int v_\eps G_x > \int v_\eps G_y$ for all sufficiently small $\eps>0$ by the bounded convergence theorem, even though $v_\eps$ is bounded and strictly increasing---a contradiction.}

	We shall make repeated use of the fact that for any CDF $G_\star$ and any increasing, bounded and absolutely continuous function $w : \R \to \R$,
	\begin{equation}
		\int w \dd G_\star
		= \inf w(\R) + \int w' (1-G_\star)
		\label{eq:mean_ibp}
	\end{equation}
	by integration by parts.%
		\footnote{See e.g. Theorem~18.4 in \textcite{Billingsley1995}.}

	\begin{namedthm}[Claim 1.]
		\label{claim:LR}
		For all $x < y$ in $X$, we have $G_x = G_y$ on $\R \setminus [x,y)$.
	\end{namedthm}

	\begin{proof}[Proof of {\hyperref[claim:LR]{claim~1}}]
		\renewcommand{\qedsymbol}{$\square$}
		Suppose toward a contradiction that there exist $x<y$ in $X$ such that $G_x \neq G_y$ on $\R \setminus [x,y)$. Then we may find a $\delta_1 \in \R$ large enough that $\int_{(-\delta_1,x) \cup [y,\delta_1)} (G_x-G_y) > 0$. Since $y<\sup X$ (as $y \in X$ and $\sup X \notin X$), we may choose a $\delta_2 \in \R$ large enough that $X \cap (y,\delta_2)$ is non-empty. Let $\delta \coloneqq \max\{\delta_1,\delta_2\}$, fix a $z \in X \cap (y,\delta)$, and write $A \coloneqq (-\delta,x) \cup [y,\delta)$.

		Fix a Lebesgue-integrable function $\pi : \R \to (0,+\infty)$.%
			\footnote{For example, $\pi(x') \coloneqq \exp(-\abs*{x'})$ for every $x' \in \R$.}
		For each $\eps>0$, define $w_\eps : \R \to \R$ by $w_\eps(x') \coloneqq \int_0^{x'} (\1_A + \eps \pi \1_{\R \setminus A})$ for each $x' \in \R$, and note that $w_\eps$ is strictly increasing, bounded, and absolutely continuous, with (a.e.-defined) derivative $w_\eps'=1$ on $A$ and $w_\eps'=\eps \pi$ on $\R \setminus A$.%
			\footnote{$w_\eps$ is absolutely continuous by the (Lebesgue) fundamental theorem of calculus, and (hence) strictly increasing since $w_\eps' > 0$.}
		For each $\eps>0$, let $v_\eps \coloneqq w_\eps|_X$ be the restriction of $w_\eps$ to $X$. For any sufficiently small $\eps>0$,
		\begin{multline*}
			\frac{\int v_\eps \dd(G_z-G_y)}{\int v_\eps \dd(G_y-G_x)} 
			= \frac{\int_{A} (G_y-G_z) + \eps \int_{\R \setminus A} \pi (G_y-G_z)}{\int_{A} (G_x-G_y) + \eps \int_{\R \setminus A} \pi (G_x-G_y)} 
			\\
			< \frac{z-y}{\eps \int_x^y \pi}
			= \frac{v_\eps(z)-v_\eps(y)}{v_\eps(y)-v_\eps(x)} , 
		\end{multline*}
		where the first equality holds by \eqref{eq:mean_ibp}, the final equality holds by the (Lebesgue) fundamental theorem of calculus, and the inequality holds for any sufficiently small $\eps>0$ because $(z-y) / \eps \int_x^y \pi \to +\infty$ as $\eps \to 0$, whereas the left-hand side is bounded above since $\int_A (G_x-G_y)>0$. Hence by \hyperref[theorem:pratt]{Pratt's theorem}, $x' \mapsto \int v_\eps \dd G_{x'}$ is not less risk-averse than $v_\eps$---a contradiction.
	\end{proof}
	\renewcommand{\qedsymbol}{$\blacksquare$}

	Define $L : (-\infty,\sup X) \to [0,1]$ and $R : \bigcup_{x \in X} [x,+\infty) \to [0,1]$ by
	\begin{align*}
		L(y) &\coloneqq \inf_{z \in X} G_z(y)
		\quad \text{for each $y \in (-\infty,\sup X)$}
		\\
		\text{and} \quad
		R(y) &\coloneqq \sup_{z \in X} G_z(y) 
		\quad \text{for each $y \in \bigcup_{x \in X} [x,+\infty)$.}
	\end{align*}
	Evidently $L$ and $R$ are increasing. We claim that
	\begin{equation}
		G_x = L + \1_{[x,+\infty)}(R-L) 
		\quad \text{for every $x \in X$.}
		\label{eq:LR}
	\end{equation}
	To see why, fix any $x \in X$ and $y \in \R$. If $y<x$, then since $z \mapsto G_z$ is pointwise decreasing (i.e. increasing with respect to first-order stochastic dominance), there exists a sequence $(x^n)_{n \in \N}$ in $X \cap [x,+\infty)$ along which $G_{x^n}(y) \to L(y)$ as $n \to +\infty$, and $G_{x^n}(y) = G_x(y)$ for every $n \in \N$ by \hyperref[claim:LR]{claim~1}, so $G_x(y) = L(y)$. If $y \geq x$, then similarly, there exists a sequence $(x^n)_{n \in \N}$ in $X \cap (-\infty,x]$ satisfying $G_x(y) = G_{x^n}(y) \to R(y)$, so $G_x(y) = R(y)$.

	In case $\sup X < +\infty$, extend $L$ to $\R$ by letting $L \coloneqq \lim_{x \nearrow \sup X} L(x)$ on $[\sup X,+\infty)$. Note that $L$ is bounded. Hence letting $x \to \sup X$ in \eqref{eq:LR} yields that if $L \neq 0$ then $L/\sup L(\R)$ is a CDF concentrated on $X$.

	In case $\inf X > -\infty$, extend $R$ to $\R$ by letting $R \coloneqq 0$ on $(-\infty,\inf X)$ and, if $\inf X \notin X$, $R(\inf X) \coloneqq \lim_{x \searrow \inf X} R(x)$. Then setting $x = \inf X$ in \eqref{eq:LR} if $\inf X \in X$, and letting $x \to \inf X$ otherwise, we obtain that $R$ is an extended CDF concentrated on $X \cup \{\inf X\}$.

	\begin{namedthm}[Claim 2.]
		\label{claim:incr}
		$R-L$ is increasing.
	\end{namedthm} 

	\begin{proof}[Proof of {\hyperref[claim:incr]{claim~2}}]
		\renewcommand{\qedsymbol}{$\square$}
		Suppose toward a contradiction that $x'<y'$ and $R(x')-L(x') > R(y')-L(y')$ for some $x',y' \in \R$. 

		Since for each $z \in X$, $G_z$ is concentrated on $X$, we have $R \geq L$ with equality on $(-\infty,\inf X)$, and thus $x' \geq \inf X$. Since $L$ and $R$ are right-continuous, it follows that we may choose an $x'' \in (x',y')$ such that $R(x'')-L(x'') > R(y')-L(y')$ and $(-\infty,x''] \cap X \neq \varnothing$. Since $L/\sup L(\R)$ is a CDF concentrated on $X$ if $L \neq 0$, $R$ is an extended CDF concentrated on $X \cup \{\inf X\}$, and $y < \sup X$ (because $y \in X$ and $\sup X \notin X$), it follows that we may choose $x<y$ in $X$ such that $R(x)-L(x) > R(y)-L(y)$.%
			\footnote{Firstly, if $x'' \in X$, then let $x \coloneqq x''$, and otherwise, choose $x \in X \cap (-\infty,x'')$ such that $R(x)-L(x) > R(y')-L(y')$. Secondly, if $y' \in X$, then let $y \coloneqq y'$, and otherwise, choose $y \in (x,y')$ such that $R(x)-L(x) > R(y)-L(y)$.}

		The set $X \cap (y,+\infty)$ is non-empty since $y<\sup X$ (as $y \in X$ and $\sup X \notin X$). Fix an element $z \in X \cap (y,+\infty)$. Since $R$ and $L$ are right-continuous, we may choose an $\alpha \in (0,\min\{y-x,z-y\})$ such that
		\begin{equation}
			\inf_{x' \in [x,x+\alpha)} [R(x')-L(x')] > \sup_{x' \in [y,y+\alpha)} [R(x')-L(x')] .
			\label{eq:uniform_xy}
		\end{equation}
		Let $A \coloneqq (x,x+\alpha) \cup (y,y+\alpha)$.

		Fix a Lebesgue-integrable function $\pi : \R \to (0,+\infty)$. For each $\eps>0$, define $w_\eps : \R \to \R$ by $w_\eps(x') \coloneqq \int_0^{x'} (\1_A + \eps \pi \1_{\R \setminus A})$ for each $x' \in \R$, and note that $w_\eps$ is strictly increasing, bounded, and absolutely continuous, with (a.e.-defined) derivative $w_\eps'=1$ on $A$ and $w_\eps'=\eps \pi$ on $\R \setminus A$. For each $\eps>0$, let $v_\eps \coloneqq w_\eps|_X$ be the restriction of $w_\eps$ to $X$. For any sufficiently small $\eps>0$,
		\begin{multline*}
			\frac{\int v_\eps \dd (G_z-G_y)}{\int  v_\eps \dd (G_y-G_x)} 
			= \frac{\int_A (G_y-G_z) + \eps \int_{\R \setminus A} \pi (G_y-G_z)}{\int_A (G_x-G_y) + \eps \int_{\R \setminus A} \pi (G_x-G_y)}
			\\
			= \frac{\int_y^{y+\alpha} (R-L) + \eps \int_{y+\alpha}^z \pi (R-L)}{\int_x^{x+\alpha} (R-L) + \eps \int_{x+\alpha}^y \pi (R-L)} 
			< \frac{\alpha + \eps \int_{y+\alpha}^z \pi}{\alpha + \eps\int_{x+\alpha}^y \pi} = \frac{ v_\eps(z)- v_\eps(y)}{ v_\eps(y)- v_\eps(x)} ,
		\end{multline*}
		where the first equality holds by \eqref{eq:mean_ibp}, the second equality holds by \eqref{eq:LR}, the inequality holds for any sufficiently small $\eps>0$ by \eqref{eq:uniform_xy}, and the final equality holds by the (Lebesgue) fundamental theorem of calculus. Hence by \hyperref[theorem:pratt]{Pratt's theorem}, $x' \mapsto \int v_\eps \dd G_{x'}$ is not less risk-averse than $v_\eps$---a contradiction.
	\end{proof}
	\renewcommand{\qedsymbol}{$\blacksquare$}

	Define
	\begin{equation*}
		\lambda
		\coloneqq \sup_{ x \in (-\infty,\sup X) } [R(x)-L(x)] .
	\end{equation*}
	For any $x<y$ in $(\inf X,+\infty)$, by \eqref{eq:LR}, we have $R-L = G_x - G_y$ on $(x,y]$ and $G_x - G_y = 0$ on $\R \setminus (x,y]$, which since $G_x \geq G_y \neq G_x$ (as $z \mapsto G_z$ is strictly increasing with respect to first-order stochastic dominance) implies that there exists a $z \in (x,y]$ such that $R(z)>L(z)$. Since $R-L$ is increasing by \hyperref[claim:incr]{claim~2}, it follows that $R>L$ on $(\inf X,+\infty)$. Since $(-\infty,\sup X) \cap (\inf X,+\infty) \neq \varnothing$ (because $\sup X \notin X$), it follows that $\lambda > 0$.
	
	Define $G : \R \to \R_+$ by
	\begin{equation*}
		G(x) \coloneqq
		\begin{cases}
			[R(\inf X)-L(\inf X)] / \lambda
			& \text{if $x < \inf X$ or $x = \inf X \notin X$} \\
			[R(x)-L(x)] / \lambda
			& \text{if $x \in \co(X)$} \\
			1
			& \text{if $x \geq \sup X$.}
		\end{cases}
	\end{equation*}
	In case $\lambda=1$, let $H$ be an arbitrary CDF concentrated on $X$, and in case $\lambda<1$, define $H : \R \to \R_+$ by $H \coloneqq L/(1-\lambda)$ on $(-\infty,\sup X)$ and $H \coloneqq 1$ on $[\sup X,+\infty)$. Note that if $\lambda=1$, then $L=0$ by \hyperref[claim:incr]{claim~2}, since $L$ is increasing and $R \leq 1$. Hence whatever the value of $\lambda$, we have $G_x = \lambda G\1_{[x,+\infty)} + (1-\lambda)H$ for every $x \in X$ by \eqref{eq:LR}.

	$G$ is an extended CDF: it is increasing by \hyperref[claim:incr]{claim~2} and the definition of $\lambda$, right-continuous since $L$ and $R$ are, and satisfies $\lim_{x \nearrow +\infty} G(x) = 1$ by definition of $\lambda$. Furthermore, $G$ is concentrated on $X \cup \{-\infty\}$ since $R$ is and since if $L \neq 0$ then $L/\sup L(\R)$ is a CDF concentrated on $X$. Finally, $G = R-L > 0$ on $(\inf X,+\infty)$.

	It remains to show that if $\lambda<1$, then $H$ is a CDF concentrated on $X$. So assume that $\lambda<1$. Since both $L$ and $R-L$ are increasing (the latter by \hyperref[claim:incr]{claim~2}),
	\begin{equation*}
		\sup_{x \in (-\infty, \sup X)} H(x)
		= \sup_{x \in (-\infty, \sup X)} \frac{L(x)}{L(x) + 1 - R(x)}
		\leq 1 .
	\end{equation*}
	As $L$ is increasing, it follows that $H$ is increasing, and since $\lim_{x \searrow -\infty} L(x) = 0$ and $\lim_{x \nearrow +\infty} R(x) = 1$ (as $\lim_{x \searrow -\infty} G_y(x) = 0$ and $\lim_{x \nearrow +\infty} G_y(x) = 1$ for every $y \in X$) it follows that $\lim_{x \searrow -\infty} H(x) = 0$ and $\lim_{x \nearrow +\infty} H(x) = 1$. $H$ is right-continuous since $L$ is. This establishes that $H$ is a CDF. Finally, $H$ is concentrated on $X$ since if $L \neq 0$ then $L/\sup L(\R)$ is a CDF concentrated on $X$.
\end{proof}

\section{\texorpdfstring{\Cref{theorem:special}}{Theorem \ref{theorem:special}} with a greatest alternative}
\label{app:special_greatest}

When the hypothesis $\sup X \notin X$ is dropped from \Cref{theorem:special}, the result remains true as stated, except with \ref{item:special:oo} replaced by

\begin{enumerate}[label=(\alph*$^\star$)]

	\setcounter{enumi}{1}

	\item \label{item:special:oo_greatest}
	There exists an extended CDF $G$ concentrated on $X \cup \{-\infty\}$ with $G>0$ on $(\inf X,+\infty)$ such that for every bounded and strictly increasing function $v : X \to \R$, there exist $\alpha>0$ and $\beta \geq \beta^\star$ in $\R$ such that
	\begin{equation*}
		\alpha \int v \dd G_x + \beta
		= \int \max\{v(x),v(y)\} G(\dd y)
		\quad \text{for every $x \in X \setminus \{\sup X\}$}
	\end{equation*}
	and, if $\sup X \in X$, $\alpha \int v \dd G_{\sup X} + \beta^\star = \int \max\{v(\sup X),v(y)\} G(\dd y)$.
		
\end{enumerate}

To prove this, we establish a three-way equivalence between properties~\ref{item:special:lra}, \ref{item:special:oo_greatest}, and

\begin{enumerate}[label=(\alph*$^\star$)]

	\setcounter{enumi}{2}

	\item \label{item:special:distn_greatest}
	There exist $\lambda \in (0,1]$, an extended CDF $G$ concentrated on $X \cup \{-\infty\}$ with $G>0$ on $(\inf X,+\infty)$, and CDFs $H,H^\star$ concentrated on $X$ such that $H$ is first-order stochastically dominated by $H^\star$ and
	\begin{equation*}
		G_x =
		\begin{cases}
			\lambda G \1_{[x,+\infty)} + (1-\lambda)H
			& \text{if $x \in X \setminus \{\sup X\}$}
			\\
			\lambda G \1_{[x,+\infty)} + (1-\lambda)H^\star
			& \text{if $x = \sup X \in X$.}
		\end{cases}
	\end{equation*}
		
\end{enumerate}
The facts that \ref{item:special:distn_greatest} implies \ref{item:special:oo_greatest} and that \ref{item:special:oo_greatest} implies \ref{item:special:lra} follow from essentially the same (very short) arguments as in \cref{app:special_pf}.

\begin{proof}[Sketch proof that \ref{item:special:lra} implies \ref{item:special:distn_greatest}]
	Assume that $\sup X \in X$, and write $x \coloneqq \sup X$. The case $\abs*{X} \leq 2$ is trivial; assume for the remainder that $\abs*{X} \geq 3$. Proceed as in \cref{app:special_pf}, establishing \hyperref[claim:LR]{claim~1} on $X \setminus \{x\}$, defining $L$ and $R$ on $X \setminus \{x\}$ (rather than on $X$), and extending $L$ to $\R$ by insisting that $R-L$ (rather than $L$) be continuous at $x$. This argument delivers $\lambda \in (0,1]$, an extended CDF $G$ concentrated on $X \cup \{-\infty\}$, and a CDF $H$ concentrated on $X$ such that $G_z = \lambda G \1_{[z,+\infty)} + (1-\lambda)H$ for every $z \in X \setminus \{x\}$.

	It remains to show that there exists a CDF $H^\star$ concentrated on $X$ which first-order stochastically dominates $H$ and has $G_x = \lambda G \1_{[x,+\infty)} + (1-\lambda)H^\star$.

	\begin{namedthm}[Claim 3.]
		\label{claim:max}
		$G_x \leq L$ on $(-\infty,x)$.
	\end{namedthm}

	\begin{proof}[Proof of {\hyperref[claim:max]{claim~3}}]
		\renewcommand{\qedsymbol}{$\square$}
		Let $y = \sup (X \setminus \{x\})$, and note that $G_x \leq L$ on $(-\infty,y)$ by definition of $L$, since $x \mapsto G_x$ is increasing with respect to first-order stochastic dominance. If $y=x$, then there is nothing left to prove. Assume for the remainder that $y<x$. Since $G_x$ is concentrated on $X$, it is constant on $[y,x)$. It remains only to show that $G_x(y)\leq L(y)$.

		Fix an arbitrary $z \in X \setminus \{x,y\}$; this is possible since $\abs*{X} \geq 3$. Further fix a Lebesgue-integrable function $\pi : \R \to (0,+\infty)$. For each $\eps>0$, define $w_\eps : \R \to \R$ by $w_\eps(x') \coloneqq \int_0^{x'} ( \1_{[y,x]} + \eps \1_{[z,y]} + \eps \1_{\R \setminus [z,x]}\pi )$ for each $x' \in \R$, and note that $w_\eps$ is strictly increasing, bounded, and absolutely continuous, with (a.e.-defined) derivative $w_\eps' = \1_{[y,x]} + \eps \1_{[z,y]} + \eps \1_{\R \setminus [z,x]} \pi$. For each $\eps>0$, let $v_\eps \coloneqq w_\eps|_X$ be the restriction of $w_\eps$ to $X$. For any sufficiently small $\eps>0$, recalling that $R>L$ on $(\inf X,+\infty) \supseteq (z,y)$, we have
		\begin{multline*}
			\frac{ \int_y^x (G_y-G_x) + \eps\left(\int_z^y (G_y-G_x) + \int_{\R \setminus [z,x]}\pi(G_y-G_x) \right) }
			{ \eps \int_z^y (R-L) }
			\\
			\geq \frac{ \int w'_\eps (G_y-G_x) }
			{ \int w'_\eps (G_z-G_y) }
			= \frac{ \int v_\eps \dd (G_x-G_y) }
			{ \int v_\eps \dd(G_y-G_z) }
			\geq \frac{ v_\eps(x) - v_\eps(y) }
			{ v_\eps(y) - v_\eps(z) }
			= \frac{x-y}{\eps(y-z)} ,
		\end{multline*}
		where the first inequality follows from \eqref{eq:LR} since $G_y \leq G_z$ and $w'_\eps \geq 0$, the equality holds by \eqref{eq:mean_ibp}, and the second inequality holds by \hyperref[theorem:pratt]{Pratt's theorem} since $x' \mapsto \int v_\eps G_{x'}$ is less risk-averse than $v_\eps$ by the hypothesis \ref{item:special:lra}. Multiplying both sides by $\eps$ and letting $\eps \to 0$ yields 
		\begin{equation*}
			\frac{\int_y^x (G_y-G_x)}{\int_z^y (R-L)}
			\geq \frac{x-y}{y-z} .
		\end{equation*}
		Since $G_x$ and $G_y$ are constant on $[y,x)$, this is equivalent to
		\begin{equation*}
			G_y(y)-G_x(y)
			\geq \frac{1}{y-z}\int_z^y (R-L) .
		\end{equation*}
		
		Write $X^\dag \coloneqq X \setminus \{x,y\}$. By choosing $z \coloneqq \sup X^\dag$ if $\sup X^\dag \in X^\dag$, and letting $z \to \sup X^\dag$ in $X^\dag$ otherwise, we obtain $G_y(y)-G_x(y) \geq R(y)-L(y)$, since $R-L$ is continuous at $y$. Since $G_y(y) = R(y)$, by \eqref{eq:LR}, it follows that $G_x(y) \leq L(y)$, as desired.
	\end{proof}
	\renewcommand{\qedsymbol}{$\blacksquare$}

	Now, if $\lambda = 1$, then $L = 0$ (as noted in \cref{app:special_pf}), so $G_x = \1_{[x,+\infty)}$ by \hyperref[claim:max]{claim~3}. If instead $\lambda < 1$, then $G_x = \lambda\1_{[x,+\infty)} + (1-\lambda)H^\star$ where $H^\star \coloneqq (G_x - \lambda\1_{[x,+\infty)})/(1-\lambda)$. We have $H^\star \leq H \leq 1$ by \hyperref[claim:max]{claim~3} since $H = 1$ on $[x,+\infty)$. It follows (since $G_x$ is a CDF concentrated on $X$) that $H^\star$ is a CDF concentrated on $X$ which first-order stochastically dominates $H$.
\end{proof}

\end{appendices}



\printbibliography[heading=bibintoc]

\end{document}